\documentclass[letterpaper,11pt]{article}

\usepackage{amsfonts,amsmath,amssymb,amsthm,graphicx}
\usepackage{mathtools}
\usepackage{mathrsfs}
\usepackage{ifthen}
\usepackage{changepage}
\usepackage{enumerate}
\usepackage{scalerel}
\usepackage{accents}
\usepackage[margin=1in]{geometry}
\usepackage{enumitem}
\usepackage{natbib}
\usepackage{subcaption} \usepackage{float}      

\usepackage[utf8]{inputenc} \usepackage[T1]{fontenc} 
\usepackage{csquotes}
\usepackage{comment}
\usepackage{bbm}
\usepackage{lmodern}
\usepackage{array}
\usepackage{makecell}
\usepackage{csquotes}
\usepackage{booktabs}
\usepackage{xfrac}

\usepackage{tcolorbox}

\usepackage[colorlinks=true, linkcolor= blue!70!black, citecolor=blue!70!black,urlcolor=blue,breaklinks=true]{hyperref}

\usepackage{thm-restate}

\usepackage{cleveref}
\crefname{claim}{claim}{claims}

\usepackage{tikz}
\usepackage{pgfplots}
\pgfplotsset{compat=1.18}
\usetikzlibrary{patterns}
\usepgfplotslibrary{fillbetween}
\usetikzlibrary{intersections}
\usepackage{fix-cm}
\usepackage[ruled,vlined,linesnumbered]{algorithm2e}
\Crefname{algocf}{Algorithm}{Algorithms}
\allowdisplaybreaks
\usetikzlibrary{arrows.meta}

\allowdisplaybreaks
\theoremstyle{plain}
\newtheorem{theorem}{Theorem}[section]
\newtheorem{lemma}[theorem]{Lemma}

\newtheorem{proposition}[theorem]{Proposition}
\newtheorem{corollary}[theorem]{Corollary}

\theoremstyle{plain}
\newtheorem{definition}{Definition}[section] \newtheorem{example}[definition]{Example}

\theoremstyle{plain}
\newtheorem{assumption}{Assumption}

\newcounter{relctr} \everydisplay\expandafter{\the\everydisplay\setcounter{relctr}{0}}  
\AtBeginDocument{}

\newcommand{\squishlist}{
\begin{list}{{{\small{$\bullet$}}}}
{\setlength{\itemsep}{3pt}      \setlength{\parsep}{1pt}
\setlength{\topsep}{1pt}       \setlength{\partopsep}{0pt}
\setlength{\leftmargin}{1em} \setlength{\labelwidth}{1em}
\setlength{\labelsep}{0.5em} } }
\newcommand{\squishend}{  \end{list}}

\newcommand{\quality}{\omega}

\newcommand{\prior}{F}
\newcommand{\priorPDF}{f}
\newcommand{\distOfMean}{G}

\newcommand{\type}{\theta}

\newcommand{\typeCDF}{D}

\newcommand{\price}{p}
\newcommand{\buyerUtility}{v}

\newcommand{\posterior}{\mu}

\newcommand{\signalSpace}{\Sigma}
\newcommand{\signalscheme}{\phi}
\newcommand{\pricingscheme}{t}
\newcommand{\action}{a}

\newcommand{\signal}{\sigma}

\newcommand{\Rev}{\textsc{Rev}}

\newcommand{\inverseVal}{\kappa}

\newcommand{\NI}{\textsc{NI}}
\newcommand{\FI}{\textsc{FI}}
\newcommand{\noinfor}{\signalscheme^{\textsc{\NI}}}
\newcommand{\noinforPrice}{p^{\NI}}
\newcommand{\fullinfor}{\signalscheme^{\textsc{\FI}}}

\newcommand{\OPT}{\textsc{Opt}}
\newcommand{\voa}{\Gamma}

\newcommand{\typefn}{u}
\newcommand{\qualityfnone}{g_1}
\newcommand{\qualityfntwo}{g_2}

\newcommand{\Myer}{\textsc{Myer}}
\newcommand{\MyerPrice}{p^{\Myer}}

\newcommand{\priormean}{\bar{\quality}}

\newcommand{\signalnum}{K}
\newcommand{\quantileidx}{r}
\newcommand{\disclosurefn}{\Phi}

\newcommand{\disclosurefnClass}{\mathcal{X}}
\newcommand{\buyerUtilityClass}{\mathcal{V}}
\newcommand{\pooledmean}{x}
\newcommand{\kink}{c}
\newcommand{\cvxmeasure}{\nu}

\newcommand{\binrange}{I}

\newcommand{\auxratio}{s}

\newcommand{\cvxspace}{\mathcal{H}}
\newcommand{\cvxspacesandwich}{\mathcal{H}_{\textsc{sand}}}

\newcommand{\cvx}{h}
\newcommand{\newcvx}{\cvx\primed}

\newcommand{\RDD}{\textsc{Rdd}}

\newcommand{\basefn}{\lambda}
\newcommand{\recfn}{\Lambda}
\newcommand{\ext}{\textsc{ext}}

\newcommand{\KQuantilePartitionDisclosure}{{\sf $K$-Quantile Partition}}
\newcommand{\KQualityPartitionDisclosure}{{\sf $K$-Quality Partition}}
\newcommand{\naturals}{\mathbb{N}}
\newcommand{\quant}{Q}
\newcommand{\quants}{\mathbf{\quant}}
\newcommand{\qualities}{\boldsymbol{\quality}}
\newcommand{\partitionprobs}{\boldsymbol{\xi}}

\newcommand{\quantDisClass}{\disclosurefnClass_\signalnum}

\newcommand{\RQDQuant}{\text{\hyperref[eq:RQD]{$\textsc{Rqd}[\quantDisClass]$}}}

\newcommand{\RDDQuant}{\text{\hyperref[eq:rdd]{$\textsc{Rdd}[\quantDisClass,\cvxspace]$}}}

\newcommand{\RQDQuantTitle}{\text{\hyperref[eq:RQD]{$\textsc{RQD}[\quantDisClass]$}}}

\newcommand{\RDDQuantTitle}{\text{\hyperref[eq:rdd]{$\textsc{RDD}[\quantDisClass,\cvxspace]$}}}

\newcommand{\reals}{\mathbb{R}}
\newcommand{\RQD}{\textsc{Rqd}}

\newcommand{\RDDQuantsand}{\text{\hyperref[eq:rdd]{$\textsc{Rdd}[\disclosurefnClass_\signalnum,\cvxspacesandwich]$}}}

\newcommand{\RQDQuality}{\text{\hyperref[eq:RQD]{$\textsc{Rqd}[\disclosurefnClass_\signalnum\primed]$}}}

\newcommand{\cmrs}{\textsc{cmrs}}
\newcommand{\cmrsfn}{\tau}
\renewcommand{\qualityfnone}{b}
\renewcommand{\qualityfntwo}{a}
\newcommand{\newtypefn}{\nu} 

\newcommand{\xhdr}[1]{\vspace{6pt}\noindent{\bf {#1.}}}

\newcommand{\omt}[1]{}

\newcommand{\cc}[1]{\ensuremath{\mathsf{#1}}}

\newcommand{\prob}[2][]{\mathbb{P}\ifthenelse{\not\equal{}{#1}}{_{#1}}{}\!\left[{\def\givenn{\middle|}#2}\right]}
\newcommand{\expect}[2][]{\mathbb{E}\ifthenelse{\not\equal{}{#1}}{_{#1}}{}\!\left[{\def\givenn{\middle|}#2}\right]}
\newcommand{\indicator}[2][]{\mathbf 1\ifthenelse{\not\equal{}{#1}}{_{#1}}{}\!\left[{\def\givenn{\middle|}#2}\right]}

\newcommand{\R}{\mathbb{R}}
\newcommand{\E}{\mathbb{E}}

\newcommand{\supp}{\cc{supp}}

\newcommand{\primed}{^{\dagger}}
\newcommand{\doubleprimed}{^{\ddagger}}

\newcommand{\dd}{{\mathrm d}}

 \DeclareMathOperator*{\argmax}{arg\,max}

\usepackage{natbib}
\bibpunct{(}{)}{;}{a}{,}{,}

\begin{document}

\title{Simple and Robust Quality Disclosure: \\
The Power of Quantile Partition}
\author{
Shipra Agrawal\thanks{Columbia University. 
Email: {\tt sa3305@columbia.edu}}
\and
Yiding Feng\thanks{Hong Kong University of Science and Technology. Email: {\tt ydfeng@ust.hk}}
\and
Wei Tang\thanks{Chinese University of Hong Kong. Email: {\tt weitang@cuhk.edu.hk}}}
\date{}

\maketitle

\begin{abstract}
    
Quality information on online platforms is often conveyed through simple, percentile-based badges and tiers that remain stable across different market environments. 
Motivated by this empirical evidence, 
we study robust quality disclosure in a market where a platform commits to a public disclosure policy mapping the seller's product quality into a signal, and the seller subsequently sets a downstream monopoly price.
Buyers have heterogeneous private types and valuations that are linear in quality. We evaluate a disclosure policy via a minimax competitive ratio: its worst-case revenue relative to the Bayesian-optimal disclosure-and-pricing benchmark, uniformly over all prior quality distributions, type distributions, and admissible valuations.

Our main results provide a sharp theoretical justification for quantile-partition disclosure. 
For $\signalnum$-quantile partition policies, we fully characterize the robust optimum: the optimal worst-case ratio is pinned down by a one-dimensional fixed-point equation and the optimal thresholds follow a backward recursion. 
We also give an explicit formula for the robust ratio of any quantile partition as a simple ``max-over-bins'' expression, which explains why the robust-optimal partition allocates finer resolution to upper quantiles and yields tight guarantees such as $1 + \frac{1}{\signalnum}$ for uniform percentile buckets. 
In contrast, we show a robustness limit for finite-signal monotone (quality-threshold) partitions, which cannot beat a factor-$2$ approximation. 
Technically, our analysis reduces the robust quality disclosure to a robust disclosure design program by establishing a tight functional characterization of all feasible indirect revenue functions.
 \end{abstract}

\newpage
\section{Introduction}

Information about product quality is central to consumer purchase decisions. 
BrightLocal's 2025 Local Consumer Review Survey finds that only 4\% of consumers say they never read online business reviews.\footnote{See the BrightLocal's 2025 Local Consumer Review Survey \href{https://www.brightlocal.com/research/local-consumer-review-survey/}{here}.}
In the modern online markets, consumers are inundated with quality-related signals, ranging from platform-generated badges such as ``verified purchase,'' ``top rated,'' or ``recommended,'' to star ratings, written reviews, expert and influencer endorsements, and third-party consumer reports. 
These disclosures have become so pervasive that many online platforms now actively mediate quality information as an integral part of the consumers' shopping experience.

Yet, despite their importance, the signals used in practice are often deliberately simple and remarkably robust across diverse market environments.
Upwork, for instance, assigns percentile-based talent badges to freelancers, such as ``Top Rated'' (top 10\%), ``Top Rated Plus'' (top 3\%), and ``Expert-Vetted'' (top 1\%), and deploys the same badge system across a highly heterogeneous marketplace spanning many skill categories (e.g., software development, graphic design, writing) \citep{Upwork}.
Airbnb similarly highlights ``Guest Favourites'' by marking eligible homes as being among the top 1\%, 5\%,  10\% and bottom 10\% of available listings within each local market where the same coarse ranking signal is used across different cities, neighborhoods, and property types \citep{AirbnbTopHomesHighlight}.
Such simple and robust disclosure policies have clear several practical advantages: they are easy to implement at scale, readily interpretable to consumers, and require limited coordination between platforms and sellers.

Because quality disclosure directly shapes consumer beliefs, it also influences demand, matching efficiency, and ultimately trade outcomes.
Although the study of quality disclosure dates back to the seminal contributions by \cite{A-78,G-81,M-81}, and has since been extensively explored in conjunction with other market and mechanism-design primitives (see, e.g., \citealp{ES-07,BS-12,BKP-12,AHLS-22,GL-24}), 
much of this literature evaluates disclosure rules under fairly detailed knowledge about the market environment. 
In practice, however, platforms often operate across many categories, regions, and submarkets, and cannot reasonably rely on fine-grained knowledge of either the quality distribution or the demand side (e.g., buyers' preferences). 
Motivated by this perspective, we focus on the following two fundamental practical questions: 
\begin{displayquote}
    \emph{When simplicity and robustness are imposed as constraints, what is the best disclosure rule?
    How close can such simple and robust disclosure rules come to the revenue performance of fully optimal disclosure that is tailored to the precise market environment?}
\end{displayquote}

To formalize this robustness agenda, we introduce a robust quality disclosure framework in which the platform commits to a simple disclosure policy and we evaluate its performance by its worst-case approximation to the optimal revenue across admissible environments.
In our setting, an online platform (or ``market designer'') mediates information about a seller’s product quality. 
The product has an intrinsic quality level, and the platform commits to a quality-disclosure policy that maps the realized quality into a public signal observed by all buyers. After a signal is realized, the seller posts a signal-contingent price, potentially a different posted price for different signals, which is chosen by maximizing the expected revenue given the buyers' type distribution and the posterior induced by the signal. 
Buyers are heterogeneous: each buyer has a private type capturing their willingness to pay, and their valuation depends on both their type and the (unknown) product quality. 
Throughout, we focus on \emph{linear} valuation functions, where a buyer's value is linear in product quality.\footnote{In \Cref{apx:extensions general val}, we discuss how our results extend beyond linear valuations to a broader class of valuation functions.}
Upon observing the signal, a buyer forms a posterior belief about quality and purchases the product if and only if the posted price does not exceed their (posterior) expected valuation. 
The seller's expected revenue is then the posted price times the induced demand under that signal, aggregated over signals and quality realizations.
Our goal is to understand how effectively simple disclosure policies can approximate the performance of optimal ones, uniformly across admissible market environments.

\subsection{Our Contributions and Techniques}

\xhdr{A robust quality disclosure framework}
As our first contribution, 
we introduce and study the robust quality disclosure problem:
the platform seeks a simple disclosure policy from a prescribed class $\disclosurefnClass$
that performs well uniformly across all admissible market primitives, namely, all prior quality distributions, all consumer type distributions, and all admissible valuation functions. The class $\disclosurefnClass$ serves as our modeling device for ``simplicity'' along two dimensions:
\begin{itemize}
    \item Detail-freeness -- 
We let $\disclosurefnClass$ to encode requirements that the platform's disclosure rule not rely on fine details of the demand side.\footnote{Our detail-freeness requirement on the policy class $\disclosurefnClass$ echoes the ``Wilson doctrine'' in mechanism design \citep{W-85}, which advocates mechanism-design prescriptions that are as detail-free as possible.}
\item 
    Structural simplicity -- We restrict $\disclosurefnClass$ to simple policies that are widely used in practice, such as quantile or monotone partitions.
\end{itemize}
Performance is measured by the {\em minimax competitive ratio}, defined as the worst-case ratio between the optimal Bayesian revenue (achievable when the market designer can tailor disclosure policy using full knowledge of the quality distribution, the consumers' valuation function, and their type distribution) and the revenue generated by the chosen disclosure policy coupled with optimal downstream pricing (optimized for that same valuation function and type distribution) (see \Cref{def:robust disclosure framework}). 
This formulation captures the practical reality that platforms must design disclosure mechanisms under limited knowledge about both the supply side (quality) and the demand side (preferences), yet still hope to ensure strong performance guarantees in the worst case.

\xhdr{Optimal robust quantile-partition policy}
Our first set of results identifies a natural and practically relevant family of ``simple'' disclosure rules, i.e., \emph{$\signalnum$-quantile
partition policies} where $\signalnum$ is the number of partitional intervals.
Here, a disclosure policy is a rule that takes a prior quality distribution as input and outputs a signaling scheme, a mapping from product quality realizations to probabilistic signals observed by buyers.
A $\signalnum$-quantile partition disclosure policy
fixes quantile thresholds $0 = \quant_0 \le \quant_1 \le \cdots \le \quant_\signalnum= 1$ and, for every prior quality distribution, pools qualities whose \emph{prior quantiles} fall in the same interval $[\quant_{r-1},\quant_r]$ into one public signal.
Within this class, we fully characterize the \emph{robustly optimal} policy: the optimal worst-case
competitive ratio $\voa_\signalnum^*$ is uniquely determined as the solution to a scalar fixed-point equation involving a $\signalnum$-fold composition of a specific function, and the corresponding optimal thresholds are derived by a simple backward recursion (see \Cref{thm:opt quantile partition:optimal policy}).

Our characterization yields a sharp
quantitative message: this ratio $\voa_\signalnum^*$
decreases with $\signalnum$, starting at 2 for $\signalnum = 1$ (i.e., no information) and converging to $1$ at a rate of $\voa_\signalnum^* = 1 + \Theta(1/\signalnum)$ (see \Cref{prop:optimal robust CR monotonicity}). 
Notably, our results show that even with just $\signalnum = 5$ signals, the optimal policy guarantees at least $91.71\%$ of the Bayesian optimal revenue uniformly across all prior quality distributions, consumer type distributions, and linear valuation functions.
Taken together, these results offer a formal theoretical justification for quantile-partition disclosure rules (e.g., ``top 10\%, 5\%, 1\%'' labels) that are widely adopted on real-world platforms: despite their simplicity, they provably approximate the Bayesian-optimal revenue uniformly over market environments.

\xhdr{Robustness of general $\signalnum$-quantile partition}
Our second set of results provides an explicit, easy-to-use performance formula for any $\signalnum$-quantile partition policy. 
In particular, the robust competitive ratio depends only on the threshold profile $\quants = (\quant_0, \quant_1, \ldots, \quant_\signalnum)$ via a remarkably simple ``max-over-bins'' expression of the form (see \Cref{thm:quant partition:general policy}):
\begin{align*}
    \max_{\quantileidx \in [\signalnum]}~
    \left(
    1 + \frac{\quant_\quantileidx - \quant_{\quantileidx - 1}}{\left(1+\sqrt{1-\quant_\quantileidx}\right)^2}
    \right)~,
\end{align*}
which identifies exactly where the worst case comes from: an adversarial instance can concentrate all the revenue loss on a single quantile partition bin, and the overall guarantee is therefore governed by the interval that maximizes the derived expression.
The formula also clarifies why the optimal policy allocates quantile partitions unevenly: to minimize the worst-case ratio, it must equalize the above expression across all bins, which necessarily makes the quantile intervals progressively narrower (i.e., finer discrimination) in the upper quantiles and coarser intervals in the lower quantiles (see \Cref{prop:optimal robust policy:decreasing margin}).
This characterization provides a clean performance guarantee for the kind of percentile-tier disclosures used in practice, it also shows that even the most basic uniform percentile buckets (e.g., $10\%, 20\%, \ldots$) achieve a ratio $1 + \frac{1}{\signalnum}$ exactly, which is tight asymptotically w.r.t.\ $\signalnum$ (see \Cref{cor:uniform quantile disclosure}).

\xhdr{Robustness limit of quality-threshold partitions} 
We also examine alternative ``simple'' policies such as those based on finite-signal monotone partitions on quality space.
In contrast to quantile partitions, we show that no such finite-signal monotone-partition policy can guarantee better than a factor-$2$ approximation, regardless of how the quality thresholds are chosen (see \Cref{thm:opt quality partition}).
This underscores that the quantile-based disclosure policy, i.e., pooling by prior ranks rather than raw quality levels, is what enables robust improvement with finitely many signals, by adapting the partition to the prior quality distribution while still remaining simple and interpretable.

\xhdr{Our techniques}
Our analysis proceeds by building a two-step reduction that decouples the economic modeling in our problem from the worst-case optimization.
First, a key observation with the affine structure of consumers' valuations is that a signal affects the revenue only through the \emph{posterior mean} of quality it induces. 
Conditional on a posterior mean $\pooledmean$, the revenue under the optimal downstream pricing against a population of heterogeneous buyers can be summarized by an \emph{indirect revenue} function $\cvx(\pooledmean)$, defined as the seller's optimal posted-price revenue when all
buyers share the belief that expected quality equals $\pooledmean$. We show that every such $\cvx$ generated by
any linear valuation function and type distribution is non-negative, non-decreasing, and convex.
This reduces the original economic worst-case optimization to bounding, uniformly over priors $\prior$ and feasible indirect utilities $\cvx$, the worst-case ratio $\expect{\cvx(\pooledmean)}$ computed under the prior quality distribution to $\expect{\cvx(\pooledmean)}$ computed under the distribution of posterior means induced by the disclosure policy.

Building on this observation, we introduce a general minimax framework, i.e., \emph{robust disclosure design}  (see \Cref{def:robust disclosure design}), that abstracts away underlying economic primitives: the designer selects a disclosure policy from a prescribed class while optimizing against the worst case over both (i) the prior and (ii) a convex functional class of all feasible indirect utilities. 
In this way, the economics of our quality-disclosure model is encoded entirely in the functional constraints imposed on the indirect utility functions $\cvx$; once these constraints are in place, the remaining task becomes a purely geometric optimization problem over a convex set of functions.
A central technical step is then to establish a tight connection between this abstract program and the original robust quality disclosure program: for $\signalnum$-quantile partitions, 
solving the robust disclosure design program yields bounds that are not only upper bounds but also exactly achievable in the original economic setting.

Formally, robust disclosure design is a relaxation of the robust quality disclosure problem, because it replaces the set of indirect revenue functions induced by admissible valuation and type distributions with a larger convex functional space. 
Perhaps surprisingly, we show that this relaxation is nevertheless tight once we impose a natural linear lower bound satisfied by all induced indirect revenue functions
(i.e., $\cvx(\pooledmean)\ge \pooledmean\cdot \cvx(1)$).
Intuitively, this inequality captures a minimal ``monotone scaling'' property of revenue with respect to expected quality, and it turns out to be the only additional structure needed to pin down the worst case.
This reduction, translating our economic quality-disclosure model into functional constraints on the induced indirect revenue function, is the central tool behind our tractable, exact characterization.

The remaining task then reduces to solve the robust disclosure design problem (see \Cref{thm:rdd:optimal robust policy}), which we use the following two tools. 
First, after normalizing by scale (e.g.,
setting $\cvx(1)=1$), we prove an \emph{extreme-point reduction}: every feasible indirect function $\cvx$ admits an integral representation as a mixture of hinge functions $\cvx_\kink(\pooledmean)=\max\{\kink,\cvx\}$, and the worst-case ratio is attained at such an extreme point, which significantly reduces the inner optimization over functions to an optimization over a single parameter $\kink\in[0,1]$ that parameterizes the hinge functions (see \Cref{lem:stoploss-representation} anad \Cref{lem:reduce-to-extremal}).
Second, exploiting the piecewise-linear structure of extremal function $\cvx_\kink$, we show that for a fixed quantile partition, only the unique bin (i.e., the quantile partition interval) containing the cutoff $\kink$ can generate the relevant loss compared to the benchmark, leading directly to the closed-form max-over-bins expression above. 
We then show that optimizing the thresholds becomes a balancing argument: minimax optimality requires equalizing the bin-wise worst-case terms, which yields the backward recursion and the scalar equation characterizing $\voa_\signalnum^*$. 
Finally, we establish the tightness for the original robust quality disclosure problem by explicitly constructing model primitives (including the prior quality distribution, a valuation function and a type distribution) that implement the hinge-shaped indirect utilities and match the upper bound
(see \Cref{lem:quantile partition:matching lower bound}).

\subsection{Related Work}

\newcommand{\posscite}[1]{\citeauthor{#1}'s \citeyearpar{#1}}

\xhdr{Quality disclosure}
The study of quality disclosure in markets with uncertain product quality has a long history in information economics,
dating back to the seminal work of the 
\posscite{A-78} ``lemons'' model,
which shows how quality uncertainty and asymmetric information can generate adverse selection.
Subsequent work by \cite{G-81,M-81} studies the implications of mandatory or voluntary disclosure under unraveling, and shows that when the sender cannot commit to a disclosure rule, strategic incentives can lead to full disclosure as the unique equilibrium outcome.
Later work then studies the settings where the seller/sender can commit to a disclosure policy before observing private information \citep{RS-10,KG-11} 
and shows that partial disclosure can arise in equilibrium.
Our work joins the later strand of where the seller commits to a quality disclosure policy (and then sets signal-contingent monopoly prices).

More recently, a growing literature has examined disclosure policies that interact with pricing and broader market or mechanism design primitives.
Notable examples include the public disclosure in second-price auctions with probabilistic goods \citep{BS-12,EFGPT-14},
the sale of hard information \citep{AHLS-22}, 
and platform-mediated pricing for disclosure services \citep{GL-24}. 
From a mechanism design perspective, \cite{RS-10} analyze optimal disclosure in a monopolistic setting with endogenous signaling and demonstrate that partial disclosure can be revenue-maximizing. 
\cite{LS-17} show that when a seller can disclose type-dependent additional information about the buyer's valuation distribution, discriminatory disclosure may strictly outperform full disclosure in terms of revenue. 
\cite{WG-24,GHS-25} further study joint design of transfers/prices and information, highlighting the role of information discrimination and ``price-experiment'' menus. 
\cite{LPY-25} study vertical differentiation where a (literal or metaphorical) designer chooses a public signal about product quality before firms set prices.
Their model can be viewed as an extension of our Bayesian optimal signaling-and-pricing benchmark to a competitive environment with multiple sellers.

Our work focuses signal-contingent pricing, wherein the seller posts a (possibly different) price for each realized signal, which also aligns with prior works \citep{RS-10,LPY-25}.
This framework abstracts away from general transfers, sequential screening, or type-contingent ``experiments,'' and instead focuses on a practically prevalent design space, namely, the combination of public disclosure and posted pricing. 
Within this setting, we aim to understand how well simple disclosure policies can perform in a robust manner, uniformly across all prior quality distributions and consumer preferences (including valuation functions and type distributions).

\xhdr{Robust information design}
There has been a growing interest in designing information policies that are robust to uncertainty about model primitives. 
A prominent strand of this literature studies robustness to unknown receiver utilities, often framed in terms of regret minimization or online learning in persuasion settings 
(e.g., \citealp{CCMG-20,CMCG-21,ZIX-21,FTX-22,AFT-23}).
Other works consider robustness to uncertainty about receiver types (see, e.g., \citealp{BTXZ-22, FHT-24}), or to ambiguity in the receiver’s prior belief, typically by optimizing against the worst-case prior within a given set (see, e.g., \citealp{DP-22,K-22,LL-25}).

Our paper contributes to this robustness agenda with focusing on a market-oriented setting of quality disclosure coupled with pricing. 
Crucially, our notion of robustness is simultaneously uniform over all prior quality distributions (i.e., the prior distributions), all admissible consumer valuation functions (i.e., receiver utilities), and all type distributions. As a result, our notion subsumes and unifies several dimensions of uncertainty that are typically treated in isolation in much of the existing literature. 
This makes our robustness criterion notably stronger and more comprehensive than those in prior work, which usually fix all but one source of model uncertainty.

\xhdr{Simple v.s.\ optimal information/mechanism design}
Our focus on approximation guarantees for simple disclosure policies parallels the broader ``simple v.s.\ optimal'' paradigm in mechanism design, which asks how much performance is lost when restricting to simple and interpretable mechanisms (see, e.g., \citealp{CHMS-10,HL-10,AHNPY-19,FJ-24}, and the surveys of \citealp{HR-09,CS-14,Rou-15} for an overview). 
A related line of work has recently developed in information design, studying when simple or structured signaling schemes can approximate optimal persuasion outcomes (see, e.g., \citealp{GHHS-21,M-21,BCVZ-22,KLZ-25,CLTT-25,CLW-25}).

A major focus of this paper is on quantile-based simple disclosure rules (e.g., percentile tiers / quantile partitions), which have also appeared in previous works under different contexts. 
\cite{SY-24}, for example, study privacy-preserving signaling and propose a canonical disclosure that reveals only the rank of the underlying information; they show that any privacy-preserving disclosure can be obtained by garbling an appropriately reordered quantile signal, which in general is not a finite partitional policy.
\cite{BHMSW-22} show that in a classic second-price auction, the revenue-maximizing disclosure rule takes an explicit quantile-threshold form, fully separating low values while pooling sufficiently high values.
Finally, our interest in partitional disclosure more broadly connects to work on monotone categorisations of quality \citep{OR-23} and to the broader information-design literature studying when monotone/partitional structures are optimal (see, e.g., \citealp{DM-19, KLZ-25}).
Unlike these works, our work emphasize robust approximation guarantees for simple quantile partitions rather than environment-specific optimality.

\section{Preliminaries}
\label{sec:prelim}

In this paper, we study the \emph{robust quality disclosure} problem. In the following, we describe various components of our model.

\subsection{Basic Model}

\xhdr{Environment}
A monopolist seller offers a product with varying quality to a continuum population of heterogeneous consumers. 
We consider a Bayesian model in which the product is associated with a quality $\quality \in [0,1]$,\footnote{For ease of presentation, we assume that both product quality $\quality$ and consumer type $\type$ lie in $[0,1]$. 
Our results and analysis extend straightforwardly to the general setting in which quality and type take values in $\reals_+$.} drawn from a discrete or continuous prior distribution with cumulative distribution function (CDF)\footnote{Following the convention in the literature, we define CDF $\prior$ of a distribution $\prior$ as $\prior(\quality)\triangleq \prob[x\sim\prior]{x<\quality}$.} $\prior\in\Delta([0,1])$. 
The quality distribution $\prior$  is common knowledge to both the seller and the consumers. We denote by $\priorPDF(\quality)$ the prior probability mass (in the discrete case) or probability density (in the continuous case) associated with quality realization $\quality \sim \prior$.

consumers have heterogeneous preferences (i.e., different willingness to pay) for the product. In particular, Each consumer's preference is captured by a \emph{private} type $\type$, drawn independently and identically (i.i.d.) from a type distribution with CDF $\typeCDF \in \Delta([0, 1])$. The seller does not observe individual realizations of $\type$ but knows the type distribution $\typeCDF$. Quality $\quality$ and consumer type $\type$ are assumed independent.

A consumer with type $\type \in [0, 1]$ makes a once-and-for-all purchase decision: he either buys the product or abstains from purchasing. If he buys, his payoff is given by the quasi-linear utility function $\buyerUtility(\type, \quality) - \price$, where $\buyerUtility(\type, \quality)$ denotes the consumer's non-negative valuation for a product of quality $\quality$ when his type is $\type$, and $\price$ is the price charged by the seller (which may be set randomly). If he does not buy, his payoff is zero.

Throughout this paper, we impose the following assumption on the consumer's valuation function. 
We also let $\buyerUtilityClass$ denote the class of all valuation functions satisfying \Cref{asp:linear valuation function}.\footnote{
Typical valuation functions include $\buyerUtility(\type, \quality) =\type + \quality$ \citep[cf.][]{IMSZ-19,KMZL-17,CIMS-17,SVZ-22} and the multiplicative function $\buyerUtility(\type, \quality) =\type\quality + \type$ \citep[cf.][]{CS-21,LSX-21,BHM-26}.
We discuss how our results generalize in the when the valuation function is beyond linear w.r.t.\ quality in \Cref{apx:extensions general val}. 
}
\begin{assumption}[Linear valuation function]
\label{asp:linear valuation function}
The valuation function $\buyerUtility(\type, \quality)$ is affine and increasing in quality $\quality \in [0,1]$ for every fixed type $\type \in [0,1]$, and non-decreasing in type $\type$ for every fixed quality $\quality \in [0,1]$.
\end{assumption}

\xhdr{Bayesian optimal signaling \& pricing design}
A platform (market designer) controls how the seller's quality information is disclosed by 
flexibly designing a \emph{signaling scheme} that strategically reveals information about the product quality $\quality$ to consumers. Specifically, a signaling scheme $\signalscheme = \{\signalscheme(\cdot|\quality)\}_{\quality\in[0, 1]}$ is defined by a signal space $\signalSpace$, where for each $\quality \in [0, 1]$, the mapping $\signalscheme(\cdot|\quality) \in \Delta(\signalSpace)$ specifies the conditional distribution over signals in $\signalSpace$ sent to consumers when the realized quality is $\quality$. 
After the signal $\signal\in\signalSpace$ is realized, the seller posts a signal-contingent price \citep{RS-10} according to a pricing scheme $\pricingscheme: \signalSpace \rightarrow \reals_+$, where $\pricingscheme(\signal)$ is the nonnegative posted price associated with signal $\signal$.
Throughout, we evaluate disclosure policies by the seller's resulting expected revenue. In other words, in this Bayesian environment, the platform's objective is to maximize this expected revenue.
\footnote{Note that, it is without loss of generality to consider pricing schemes that depend only on the realized signal (and not the quality), since the seller and the platform share the same objective in this Bayesian environment.}

Consumers do not observe the realized quality $\quality$, but they do observe the signal $\signal \sim \signalscheme(\cdot|\quality)$ generated by the seller. Using this signal and the quality distribution (prior) $\prior$, they form a Bayesian posterior distribution over quality: 
$\posterior(\quality|\signal) \propto \signalscheme(\signal|\quality) \cdot \priorPDF(\quality)$.
Conditional on the observed signal $\signal$, the price $\pricingscheme(\signal)$ conveys no additional information about the product. A consumer with type $\type$ purchases the product if and only if his expected valuation under the posterior,
$\expect[\quality \sim \posterior(\cdot|\signal)]{\buyerUtility(\type, \quality)}$, is at least the posted price $\pricingscheme(\signal)$.

Given a signaling scheme $\signalscheme$ and its corresponding pricing scheme $\pricingscheme$, let $\Rev(\signalscheme, \pricingscheme)$ denote the seller's expected revenue, defined as
\begin{align*}
    \Rev(\signalscheme, \pricingscheme) \triangleq
    \expect[\quality\sim \prior, \signal\sim\signalscheme(\cdot\mid \quality), \type\sim\typeCDF]{\pricingscheme(\signal) \cdot \action^*(\type,\signal)}~,
\end{align*}
where $\action^*(\type,\signal) \in \{0, 1\}$ indicates the purchase decision of a consumer with type $\type$ upon observing signal $\signal$.

We remark that, for a fixed signaling scheme $\signalscheme$, the revenue-maximizing pricing scheme is well understood: upon observing a realized signal $\signal$, the seller should set the price equal to the optimal posted price given consumers' posterior belief $\posterior(\cdot\mid\signal)$. Slightly abusing notation, we write $\Rev(\signalscheme)$ to denote the seller's expected revenue under the signaling scheme $\signalscheme$ after optimizing over prices:
\begin{align*}
    \Rev(\signalscheme) \triangleq
    \sup\nolimits_{\{\pricingscheme(\signal)\}_{\signal\in\signalSpace}}
    \Rev(\signalscheme, \pricingscheme)~.
\end{align*}
Finally, let $\OPT$ denote the optimal revenue achievable by any signaling scheme:
\begin{align*}
    \OPT \triangleq \sup\nolimits_{\signalscheme}
    \Rev(\signalscheme)~.
\end{align*}
As we show in \Cref{cor:full infor optimal no infor 2 approx}, the supremum is attained, and we refer to the signaling scheme that achieves the maximum revenue as the Bayesian optimal signaling scheme.

\subsection{Robust Quality Disclosure Framework}
In practice, the platform may wish to disclose quality in a manner that is both simple and robust. To formalize this desideratum, we introduce the notions of a \emph{disclosure policy} and the \emph{robust quality disclosure} problem.

\begin{definition}[Disclosure policy]
\label{def:disclosure policy}
A \emph{disclosure policy} $\disclosurefn$ takes as input a quality distribution $\prior \in \Delta([0, 1])$ and outputs a signaling scheme $\signalscheme = \disclosurefn(\prior)$.
\end{definition}

\begin{definition}[Robust quality disclosure]
    \label{def:robust disclosure framework}
    Given a class $\disclosurefnClass$ of disclosure policies, the \emph{robust quality disclosure} problem is defined as
    \begin{align}
        \tag{$\RQD[\disclosurefnClass]$}
        \label{eq:RQD}
        \inf_{\disclosurefn \in \disclosurefnClass} \sup_{\substack{
        \buyerUtility \in \buyerUtilityClass, \\
        \prior,\typeCDF \in \Delta([0, 1])}}
        \;\;
        \frac{\OPT_{\buyerUtility,\prior,\typeCDF}}{\Rev_{\buyerUtility,\prior,\typeCDF}(\disclosurefn(\prior))}~,
    \end{align}
    where $\OPT_{\buyerUtility,\prior,\typeCDF}$ denotes the seller’s expected revenue under the Bayesian optimal signaling scheme, and $\Rev_{\buyerUtility,\prior,\typeCDF}(\disclosurefn(\prior))$ denotes the expected revenue under the signaling scheme $\disclosurefn(\prior)$ induced by the disclosure policy $\disclosurefn$, in the setting where the buyer’s valuation function is $\buyerUtility$, the type distribution is $\typeCDF$, and the product's quality distribution is $\prior$.

    A disclosure policy $\disclosurefn^*$ that attains the infimum in~\ref{eq:RQD} (if one exists) is referred to as an \emph{optimal robust disclosure policy} of \ref{eq:RQD}, and the corresponding objective value is referred to as the \emph{optimal robust competitive ratio} of \ref{eq:RQD}.
\end{definition}
Despite its clean mathematical formulation, we make the following remarks regarding the robust quality disclosure problem~\ref{eq:RQD}, which constitutes a minimax optimization program.

In this definition, the problem is parameterized by a class of disclosure policies $\disclosurefnClass$. By varying $\disclosurefnClass$, program~\ref{eq:RQD} becomes a unified tool for evaluating different families of ``simple'' signaling schemes studied in the literature.
For instance, one may restrict $\disclosurefnClass$ to contain a single policy that maps every prior to its corresponding full-information signaling scheme (respectively, no-information signaling scheme).\footnote{\label{footnote:full info and no info}The {full-information signaling scheme} $\fullinfor$ uses signal space $\signalSpace = [0, 1]$ and satisfies $\signalscheme(\quality \mid \quality) = 1$ for all $\quality \in [0, 1]$. The {no-information signaling scheme} $\noinfor$ uses a singleton signal space $\signalSpace = \{\signal\}$ and satisfies $\signalscheme(\signal \mid \quality) = 1$ for all $\quality \in [0, 1]$.} Then program~\ref{eq:RQD} simply returns the worst-case approximation factor of full information (respectively, no information) relative to the Bayesian optimum. 
Alternatively, one may let $\disclosurefnClass$ include all policies that output only monotone-partition signaling schemes \citep{KLZ-25}, censorship signaling schemes \citep{KMZ-22}, or signaling schemes using at most $\signalnum$ signals \citep{GHHS-21}. Under such restrictions, program~\ref{eq:RQD} characterizes the optimal worst-case competitive ratio achievable by monotone-partition, censorship, or $\signalnum$-signal signaling schemes against the Bayesian optimum---without knowledge of the consumer's valuation function $\buyerUtility$ or type distribution $\typeCDF$.

Second, it is important to note that while the platform's disclosure policy (and the signaling scheme it induces) is required to be independent of the buyers' valuation function $\buyerUtility$ and type distribution $\typeCDF$, the expected revenue is computed under the assumption that the corresponding pricing scheme $\pricingscheme$ is optimally chosen given full knowledge of $\buyerUtility$ and $\typeCDF$. 
This consumer-dependent pricing optimization is not only theoretically necessary, without it, the competitive ratio can be unbounded, but also well-motivated in practice.
In real-world platforms, disclosure mechanisms such as badges, tiers, or rating displays are typically designed once and applied uniformly across users, whereas seller frequently adjusts prices in response to their specific consumer segments (e.g., by region, category, or audience). 
This two-stage structure, where the platform commits to a disclosure rule without precise demand-side knowledge, and seller subsequently tailors prices using their own consumer insights, is captured by our robust quality disclosure framework.\footnote{In our model, we conceptually separate the platform (who designs the disclosure policy) from the seller (who sets signal-contingent posted prices). All of our results carry over to an integrated setting in which a single entity jointly chooses a simple disclosure rule and the corresponding downstream pricing scheme.}

\xhdr{Simple disclosure policies}
A particular class of simple disclosure policies is central to this work: {\KQuantilePartitionDisclosure}. These policies restrict attention to monotone-partition signaling schemes that use at most $\signalnum$ signals and impose a consistent structure on how the quality space is partitioned across different prior distributions. The formal definition is as follows.

A {\KQuantilePartitionDisclosure} is parameterized by $(\signalnum+1)$ weakly increasing quantile thresholds $0 = \quant_0 \leq \quant_1 \leq \dots \leq \quant_\signalnum = 1$. Given a quality distribution $\prior \in \Delta([0,1])$, it outputs a monotone-partition signaling scheme that pools all qualities whose prior quantiles fall within the interval $[\quant_{\quantileidx-1}, \quant_{\quantileidx}]$ into a single signal indexed by $\quantileidx$, for each $\quantileidx \in [\signalnum]$. Consequently, the induced signaling scheme $\signalscheme$ has signal space $\signalSpace = [\signalnum]$. The mapping $\{\signalscheme(\cdot \mid \quality)\}$ satisfies, for each $\quantileidx \in [\signalnum]$,\footnote{For a continuous quality distribution $\prior$, the first condition simplifies to $\signalscheme(\quantileidx \mid \quality) > 0$ only if $\prior(\quality) \in [\quant_{\quantileidx-1}, \quant_{\quantileidx}]$.}
\begin{align*}
    &\signalscheme(\quantileidx \mid \quality) > 0 
    \;\;
    \text{only if}
    \;\;
    [\prior(\quality),\, \lim\nolimits_{x \to \quality^+} \prior(x)]
    \cap
    [\quant_{\quantileidx-1},\, \quant_{\quantileidx}] \neq \emptyset, \\
    &\expect[\quality \sim \prior]{\signalscheme(\quantileidx \mid \quality)} = \quant_{\quantileidx} - \quant_{\quantileidx-1}.
\end{align*}
We remark that these conditions uniquely determine the signaling scheme for any given product quality distribution $\prior$. 
See \Cref{fig:quantile} for an illustration. 
(In \Cref{apx:quality partition}, we consider another class of simple disclosure policies based on quality thresholds rather than quantiles.)

\begin{figure}[H]
\centering

\tikzset{
  dashedCut/.style={dash pattern=on 2.2pt off 2.2pt, line width=0.6pt, black, opacity=0.8},
  axisLine/.style={line width=0.7pt, black},
  tick/.style={line width=0.6pt, black},
  arr/.style={
    -{Stealth[length=1.5mm,width=1.5mm]},
    line width=0.8pt,
    black,
    opacity=0.9
  }
}

\begin{minipage}[t]{0.49\textwidth}
\centering
\begin{tikzpicture}[font=\small]
\def\H{5.0}
\def\barW{0.65}
\def\xAxisStart{0.55}
\def\xPrior{2}
\def\xSig{4}
\def\xAxisEnd{5.10}

\def\qA{0.25}\def\qB{0.50}\def\qC{0.75}
\newcommand{\yq}[1]{#1*\H}

\colorlet{BlueLo}{blue!6}
\colorlet{BlueMid}{blue!45}
\colorlet{BlueHi}{blue!92}
\colorlet{SigD1}{BlueLo}
\colorlet{SigD2}{blue!30}
\colorlet{SigD3}{blue!65}
\colorlet{SigD4}{BlueHi}

\draw[axisLine] (\xAxisStart,0) -- (\xAxisEnd,0);
\draw[axisLine] (\xAxisStart,0) -- (\xAxisStart,\H);

\foreach \qq/\lab in {0/0,\qA/0.25,\qB/0.5,\qC/0.75,1/1}{
  \draw[tick] (\xAxisStart,\yq{\qq}) -- ++(-0.07,0);
  \node[anchor=east] at (\xAxisStart-0.09,\yq{\qq}) {\lab};
}

\draw[tick] (\xPrior,0) -- ++(0,-0.07);
\draw[tick] (\xSig,0) -- ++(0,-0.07);
\node[anchor=north] at (\xPrior,-0.12) {prior};
\node[anchor=north] at (\xSig,-0.12) {signals};

\fill[BlueLo]  (\xPrior-\barW/2, 0)      rectangle (\xPrior+\barW/2, \H/3);
\fill[BlueMid] (\xPrior-\barW/2, \H/3)   rectangle (\xPrior+\barW/2, 2*\H/3);
\fill[BlueHi]  (\xPrior-\barW/2, 2*\H/3) rectangle (\xPrior+\barW/2, \H);

\fill[SigD1] (\xSig-\barW/2, 0)        rectangle (\xSig+\barW/2, \yq{\qA});
\fill[SigD2] (\xSig-\barW/2, \yq{\qA}) rectangle (\xSig+\barW/2, \yq{\qB});
\fill[SigD3] (\xSig-\barW/2, \yq{\qB}) rectangle (\xSig+\barW/2, \yq{\qC});
\fill[SigD4] (\xSig-\barW/2, \yq{\qC}) rectangle (\xSig+\barW/2, \H);

\def\gap{0.03}
\def\xL{\xPrior+\barW/2+\gap}
\def\xR{\xSig-\barW/2-\gap}
\def\mOne{0.125}\def\mTwo{0.375}\def\mThr{0.625}\def\mFou{0.875}

\draw[arr] (\xL, \yq{0.125})  -- (\xR, \yq{\mOne});
\draw[arr] (\xL, \yq{0.2917}) -- (\xR, \yq{\mTwo});
\draw[arr] (\xL, \yq{0.4167}) -- (\xR, \yq{\mTwo});
\draw[arr] (\xL, \yq{0.5833}) -- (\xR, \yq{\mThr});
\draw[arr] (\xL, \yq{0.7083}) -- (\xR, \yq{\mThr});
\draw[arr] (\xL, \yq{0.875})  -- (\xR, \yq{\mFou});

\draw[dashedCut] (\xAxisStart,\yq{\qA}) -- (\xAxisEnd,\yq{\qA});
\draw[dashedCut] (\xAxisStart,\yq{\qB}) -- (\xAxisEnd,\yq{\qB});
\draw[dashedCut] (\xAxisStart,\yq{\qC}) -- (\xAxisEnd,\yq{\qC});
\end{tikzpicture}
\end{minipage}\hfill
\begin{minipage}[t]{0.49\textwidth}
\centering
\begin{tikzpicture}[font=\small]
\def\H{5.0}
\def\barW{0.65}
\def\xAxisStart{0.55}
\def\xPrior{2}
\def\xSig{4}
\def\xAxisEnd{5.10}
\def\qA{0.25}\def\qB{0.50}\def\qC{0.75}
\newcommand{\yq}[1]{#1*\H}

\colorlet{SigC1}{blue!15}
\colorlet{SigC2}{blue!35}
\colorlet{SigC3}{blue!60}
\colorlet{SigC4}{blue!85}

\draw[axisLine] (\xAxisStart,0) -- (\xAxisEnd,0);
\draw[axisLine] (\xAxisStart,0) -- (\xAxisStart,\H);

\foreach \qq/\lab in {0/0,\qA/0.25,\qB/0.5,\qC/0.75,1/1}{
  \draw[tick] (\xAxisStart,\yq{\qq}) -- ++(-0.07,0);
  \node[anchor=east] at (\xAxisStart-0.09,\yq{\qq}) {\lab};
}

\draw[tick] (\xPrior,0) -- ++(0,-0.07);
\draw[tick] (\xSig,0) -- ++(0,-0.07);
\node[anchor=north] at (\xPrior,-0.12) {prior};
\node[anchor=north] at (\xSig,-0.12) {signals};

\shade[bottom color=blue!6, top color=blue!92]
  (\xPrior-\barW/2,0) rectangle (\xPrior+\barW/2,\H);

\fill[SigC1] (\xSig-\barW/2, 0)        rectangle (\xSig+\barW/2, \yq{\qA});
\fill[SigC2] (\xSig-\barW/2, \yq{\qA}) rectangle (\xSig+\barW/2, \yq{\qB});
\fill[SigC3] (\xSig-\barW/2, \yq{\qB}) rectangle (\xSig+\barW/2, \yq{\qC});
\fill[SigC4] (\xSig-\barW/2, \yq{\qC}) rectangle (\xSig+\barW/2, \H);

\def\gap{0.03}
\def\xL{\xPrior+\barW/2+\gap}
\def\xR{\xSig-\barW/2-\gap}
\foreach \m in {0.125,0.375,0.625,0.875}{
  \draw[arr] (\xL, \yq{\m}) -- (\xR, \yq{\m});
}

\draw[dashedCut] (\xAxisStart,\yq{\qA}) -- (\xAxisEnd,\yq{\qA});
\draw[dashedCut] (\xAxisStart,\yq{\qB}) -- (\xAxisEnd,\yq{\qB});
\draw[dashedCut] (\xAxisStart,\yq{\qC}) -- (\xAxisEnd,\yq{\qC});
\end{tikzpicture}
\end{minipage}

\caption{An illustration of {\KQuantilePartitionDisclosure} disclosure. Here $\signalnum = 4$ and $\quant_0 = 0, \quant_1 = 0.25, \quant_2 = 0.5, \quant_3 = 0.75, \quant_4 = 1$. The $y$-axis is cumulative probability mass. 
Left: uniform discrete prior with $\supp(\prior)=\{0,0.5,1\}$. Right: uniform continuous prior with $\supp(\prior)=[0,1]$.
With the same {\KQuantilePartitionDisclosure}, the induced posterior-mean distributions can differ across priors.}
\label{fig:quantile}
\end{figure}

\section{Robust Quantile Partition Disclosure Policy}
\label{sec:opt quantile partition}

In this section, we study the class of {\KQuantilePartitionDisclosure} policies, denoted by $\quantDisClass$. Our main results are as follows: in \Cref{sec:opt quantile partition:optimal policy}, we characterize the optimal robust {\KQuantilePartitionDisclosure} policy and its corresponding optimal robust competitive ratio; in \Cref{sec:opt quantile partition:general policy}, we provide a succinct expression for the robust competitive ratio (i.e., the objective value of program~$\RQDQuant$) for an arbitrary {\KQuantilePartitionDisclosure} policy. The analysis behind our characterization are given in \Cref{sec:rdd}.

\subsection{Optimal Robust \texorpdfstring{{\KQuantilePartitionDisclosure}}{K-Quantile Partition}}
\label{sec:opt quantile partition:optimal policy}

In this section, we characterize the optimal robust {\KQuantilePartitionDisclosure}, i.e., the optimal solution of program~$\RQDQuant$.

\begin{theorem}[Robust optimal quantile partition]
\label{thm:opt quantile partition:optimal policy}
    Fix any integer $\signalnum \in \naturals$. Define the function
    \begin{align*}
        \basefn_\voa(z) \triangleq z + (\voa - 1) \cdot (1+\sqrt{z})^2
    \end{align*}
    for $z \in [0, 1]$, parameterized by $\voa > 1$. Let its $\signalnum$-fold composition starting from $0$ be denoted by
    \begin{align*}
        \recfn_\signalnum(\voa) \triangleq \underbrace{\basefn_\voa \circ \basefn_\voa \circ \dots \circ \basefn_\voa}_{\text{$\signalnum$ times}}(0).
    \end{align*}
    Then the robust quality disclosure problem~$\RQDQuant$ satisfies the following:
    \begin{enumerate}
        \item[(i)] The optimal robust competitive ratio is
        \begin{align*}
            \voa_\signalnum^*,
        \end{align*}
        where $\voa_\signalnum^* > 1$ is the unique solution to the equation $\recfn_\signalnum(\voa) = 1$.

        \item[(ii)] The optimal robust {\KQuantilePartitionDisclosure} $\disclosurefn^*$ is characterized by the quantile threshold profile $\quants^* = (\quant_0^*, \quant_1^*, \dots, \quant_\signalnum^*)$, which is defined recursively by
        \begin{align}
        \label{eq:qstar-def}
\quant_\signalnum^* \triangleq 1,
            \qquad
            \quant_{\quantileidx-1}^*
            \triangleq 1 - \basefn_{\voa_\signalnum^*}(1 - \quant_\quantileidx^*),
            \quad \quantileidx \in [\signalnum].
        \end{align}
    \end{enumerate}
\end{theorem}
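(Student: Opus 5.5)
The proof reduces everything to the explicit ``max-over-bins'' formula for the robust ratio of an arbitrary quantile profile, stated in the introduction and established as \Cref{thm:quant partition:general policy}:
\begin{align*}
\Gamma(\quants) \;=\; \max_{\quantileidx\in[\signalnum]}\left(1+\frac{\quant_\quantileidx-\quant_{\quantileidx-1}}{\bigl(1+\sqrt{1-\quant_\quantileidx}\bigr)^2}\right).
\end{align*}
Given this formula, \Cref{thm:opt quantile partition:optimal policy} becomes a one-dimensional balancing problem. So the plan has two parts. First prove the formula. Then minimize it over profiles $\quants$.

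\textbf{Step 1: the formula.} I would prove it through the reduction sketched in the introduction.

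1. With affine valuations, revenue depends on a signal only through its posterior mean. So $\Rev=\expect{\cvx(\pooledmean)}$, where the indirect revenue $\cvx$ is non-negative, non-decreasing, convex and satisfies $\cvx(\pooledmean)\ge \pooledmean\,\cvx(1)$.
2. Since $\cvx$ is convex, full information is optimal, so $\OPT=\expect[\quality\sim\prior]{\cvx(\quality)}$.
3. Normalize $\cvx(1)=1$. Write $\cvx$ as a mixture of the linear part and hinge functions $\max\{\kink,\pooledmean\}$-type extremals, via \Cref{lem:stoploss-representation}. The ratio of two linear functionals over such a cone is maximized at an extreme point (\Cref{lem:reduce-to-extremal}), so one optimizes over a single cutoff $\kink$.
4. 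For a fixed hinge, only the bin straddling the kink loses revenue relative to full information. The adversary's best prior places the kink inside bin $\quantileidx$ and optimizes the conditional distribution there and above.
5. A one-variable optimization, whose optimum balances mass above $\quant_\quantileidx$ against the loss, produces the $(1+\sqrt{1-\quant_\quantileidx})^2$ denominator.
6. For tightness, construct a valuation and type distribution whose indirect revenue is exactly that hinge (\Cref{lem:quantile partition:matching lower bound}). This shows the relaxation is exact.

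\textbf{Step 2: optimizing the thresholds.} Substitute $z_\quantileidx=1-\quant_\quantileidx$, so that $z_\signalnum=0$ and $z_0=1$. The $\quantileidx$-th bin term is at most $\voa$ if and only if
\[
z_{\quantileidx-1}\le z_\quantileidx+(\voa-1)(1+\sqrt{z_\quantileidx})^2=\basefn_\voa(z_\quantileidx).
\]
The map $\basefn_\voa(z)$ is strictly increasing in $z$ and, for $z\ge 0$, strictly increasing and continuous in $\voa$.

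\emph{Uniqueness of $\voa_\signalnum^*$.} By induction, $\recfn_\signalnum(\voa)$ is continuous and strictly increasing in $\voa$, with $\recfn_\signalnum(1)=0$ and $\recfn_\signalnum(\voa)\ge \voa-1\to\infty$. Hence $\recfn_\signalnum(\voa)=1$ has a unique root $\voa^*_\signalnum>1$.

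\emph{Lower bound on the ratio.} Suppose some profile had $\Gamma(\quants)=\voa'<\voa^*_\signalnum$. Induct backward from $z_\signalnum=0$ using monotonicity of $\basefn_{\voa'}$. This gives $z_{\signalnum-j}\le \recfn_j(\voa')$, hence $1=z_0\le \recfn_\signalnum(\voa')<1$, a contradiction.

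\emph{Achievability.} Define $\quants^*$ by the recursion \eqref{eq:qstar-def}, i.e.\ $z_{\quantileidx-1}=\basefn_{\voa^*_\signalnum}(z_\quantileidx)$. Then $z_{\signalnum-j}=\recfn_j(\voa^*_\signalnum)$, which is increasing in $j$ and lies in $[0,1]$, with $z_0=\recfn_\signalnum(\voa^*_\signalnum)=1$. So $\quants^*$ is a valid monotone profile with $\quant^*_0=0$. Every bin term equals exactly $\voa^*_\signalnum$, so $\Gamma(\quants^*)=\voa^*_\signalnum$. This proves both (i) and (ii).

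\textbf{Main obstacle.} The hard part is Step 1, specifically the tight worst case inside a single bin. I would first set the kink at the bin's posterior mean. I would then let the adversary choose a two-point conditional structure: mass inside the bin at its endpoints, and mass above the bin concentrated at quality $1$. Finally I would optimize the kink location, which should produce the square-root expression. Matching this with a valuation and type pair that realizes the hinge, so that the bound is attained in the original model, is the delicate second half.
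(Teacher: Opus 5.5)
Your proposal is correct and follows essentially the same route as the paper. You relax to the robust disclosure design program, normalize and reduce to the hinge extremals $\max\{\kink,\pooledmean\}$ to get the max-over-bins formula (made tight by the hinge-realizing primitives of \Cref{lem:stoploss-realize} and \Cref{lem:quantile partition:matching lower bound}), and then solve the balancing problem by the backward recursion; your contradiction step is exactly the ``$\Rightarrow$'' direction of \Cref{lem:feasibility-equivalence}, and your achievability step is the paper's direct construction of $\quants^*$.
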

In the above characterization, the quantile threshold profile $\quants^* = (\quant_0^*, \quant_1^*, \dots, \quant_\signalnum^*)$ of the optimal robust {\KQuantilePartitionDisclosure} is computed via the backward recursion in~\eqref{eq:qstar-def}. To verify the feasibility of this construction, observe that by definition,
\begin{align*}
    1 - \quant_\signalnum^* \equiv 0,
    \qquad
    1 - \quant_{\quantileidx-1}^*
    \equiv \basefn_{\voa_\signalnum^*}(1 - \quant_\quantileidx^*),
    \quad \quantileidx \in [\signalnum].
\end{align*}
Since $\voa_\signalnum^* > 1$ is the unique solution to $\recfn_\signalnum(\voa) = 1$, it follows that $1 - \quant_0^* = \recfn_\signalnum(\voa_\signalnum^*) = 1$, which implies $\quant_0^* = 0$. Moreover, the sequence $\{\quant_\quantileidx^*\}_{\quantileidx=0}^\signalnum$ is increasing, a property guaranteed by the monotonicity of the function $\basefn_{\voa_\signalnum^*}(\cdot)$.

This recursive construction also reveals the intuition behind the condition that determines the optimal robust competitive ratio $\voa_\signalnum^*$. Specifically, for any candidate $\voa > 1$, the recursion yields a candidate quantile threshold profile; however, this profile may not be valid---most notably, the resulting $\quant_0$ may be strictly positive. (If quantile threshold $\quant_0 > 0$, the corresponding {\KQuantilePartitionDisclosure} is infeasible because it fails to cover the full quantile range $[0,1]$. In contrast, $\quant_0 \leq 0$ still yields a valid signaling scheme, as the effective lower bound is truncated to $0$.) 
Crucially, function $\recfn_\signalnum(\voa)$ is continuous and strictly increasing in $\voa \in (1,\infty)$ (\Cref{lem:Tk-monotone}). Therefore, the optimal robust competitive ratio $\voa_\signalnum^*$ is precisely the \emph{smallest} value of $\voa > 1$ for which the induced quantile threshold $\quant_0$ equals zero (and hence is not positive).

\xhdr{Structural properties of the optimal robust policy}
Invoking \Cref{thm:opt quantile partition:optimal policy}, we report the optimal robust {\KQuantilePartitionDisclosure} policies for $\signalnum = 1, 2, \dots, 5$ in \Cref{tab:robust-quantile-thresholds}. As expected, the optimal robust competitive ratio is strictly decreasing in $\signalnum$, starting from $2$ when $\signalnum = 1$ and decreasing to $1.0904$ for $\signalnum = 5$. This implies that, regardless of the quality distribution $\prior$ or the consumer preferences $(\buyerUtility, \typeCDF)$, 
there exists a {\KQuantilePartitionDisclosure} using just $5$ signals that performs robustly well: achieving at least $91.71\%$ (i.e., $\sfrac{1}{1.0904}$) fraction of the optimal revenue.

\begin{table}[H]
\centering
\renewcommand{\arraystretch}{1.35}
\setlength{\tabcolsep}{10pt}
\begin{tabular}{cccccccc}
\toprule
 & ${\quant_1^*}$ & ${\quant_2^*}$ & ${\quant_3^*}$ & ${\quant_4^*}$ & ${\quant_5^*}$ & $\voa_\signalnum^*$ & $1/\voa_\signalnum^*$ \\
\midrule
${K=1}$ & $1$      & --     & --     & --     & --          & $2$      & $50.00\%$ \\
${K=2}$ & $0.7044$ & $1$    & --     & --     & --          & $1.2956$ & $77.18\%$ \\
${K=3}$ & $0.4946$ & $0.8310$ & $1$  & --     & --         & $1.1690$ & $85.54\%$ \\
${K=4}$ & $0.3772$ & $0.6695$ & $0.8822$ & $1$ & --          & $1.1178$ & $89.46\%$ \\
${K=5}$ & $0.3041$ & $0.5552$ & $0.7567$ & $0.9096$ & $1$   & $1.0904$  & $91.71\%$ \\
\bottomrule
\end{tabular}
\caption{Optimal robust $\signalnum$-quantile threshold profile and their optimal robust competitive ratios.}
\label{tab:robust-quantile-thresholds}
\end{table}

Moreover, for each fixed $\signalnum$, the corresponding optimal quantile threshold profile $(\quant_1^*, \dots, \quant_\signalnum^*)$ exhibits decreasing marginal intervals. For example, when $\signalnum = 2$, the interval lengths are $\quant_1^* - \quant_0^* \approx 0.7044$ and $\quant_2^* - \quant_1^* \approx 0.2956$, with the latter being smaller. This pattern suggests that the optimal robust policy allocates finer discrimination to higher-quality regions and coarser discrimination to lower-quality regions.

We formalize both observations for general $\signalnum \in \naturals$ as follows.

\begin{proposition}[Monotonicity of $\voa_\signalnum^*$]
\label{prop:optimal robust CR monotonicity}
    For any integer $\signalnum \in \naturals$, the optimal robust competitive ratio $\voa_\signalnum^*$ of program~$\RQDQuant$ satisfies
    \begin{align*}
        1 + \frac{1}{4\signalnum} \leq 
        \voa_\signalnum^*
        \leq
        1 + \frac{1}{\signalnum}.
    \end{align*}
    Consequently, $\voa_\signalnum^* = 1 + \Theta(1/\signalnum)$.
\end{proposition}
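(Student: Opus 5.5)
The plan is to read the bounds directly off the recursion in \Cref{thm:opt quantile partition:optimal policy}(ii), using only that $(1+\sqrt{z})^2\in[1,4]$ for $z\in[0,1]$.

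Write $\varepsilon \triangleq \voa_\signalnum^*-1>0$. By \eqref{eq:qstar-def}, each bin width of the optimal profile is
\begin{align*}
\quant_\quantileidx^*-\quant_{\quantileidx-1}^* = \basefn_{\voa_\signalnum^*}(1-\quant_\quantileidx^*)-(1-\quant_\quantileidx^*) = \varepsilon\,\bigl(1+\sqrt{1-\quant_\quantileidx^*}\bigr)^2 .
\end{align*}
The profile is feasible: $\quant_0^*=0$, $\quant_\signalnum^*=1$, and the thresholds are increasing, as argued right after the theorem. Hence every $1-\quant_\quantileidx^*$ lies in $[0,1]$, and so $(1+\sqrt{1-\quant_\quantileidx^*})^2\in[1,4]$.

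Each bin width therefore lies in $[\varepsilon,4\varepsilon]$. The widths telescope to $\quant_\signalnum^*-\quant_0^*=1$, so summing over $\quantileidx\in[\signalnum]$ gives
\begin{align*}
\signalnum\varepsilon \le 1 \le 4\signalnum\varepsilon .
\end{align*}
This is exactly $1+\frac{1}{4\signalnum}\le \voa_\signalnum^*\le 1+\frac{1}{\signalnum}$, and the $\Theta(1/\signalnum)$ statement follows immediately.

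As a cross-check that does not go through the threshold profile, I would also argue directly from the fixed-point characterization. By \Cref{lem:Tk-monotone}, $\recfn_\signalnum$ is continuous and strictly increasing, so it suffices to show $\recfn_\signalnum(1+\frac1\signalnum)\ge 1$ and $\recfn_\signalnum(1+\frac1{4\signalnum})\le 1$.

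For the second inequality, note that $\basefn_\voa(z)\le z+4(\voa-1)$ on $[0,1]$. Induction then shows that with $\voa-1=\frac{1}{4\signalnum}$ the $k$-th iterate is at most $k/\signalnum\le 1$, so the iterates stay in the domain.

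For the first inequality, note that $\basefn_\voa(z)\ge z+(\voa-1)$. Take $\basefn_\voa$ to be defined by the same increasing formula on all of $\reals_+$. Then the iterates dominate $k/\signalnum$, and once an iterate exceeds $1$ all later ones do too.

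The only step needing care is this domain issue in the upper-bound direction, namely iterates possibly leaving $[0,1]$. The bin-width argument avoids it entirely because it uses the already-validated optimal profile, so I would present that argument as the proof.
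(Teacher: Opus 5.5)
Your proof is correct. At $\voa=\voa_\signalnum^*$ the recursion \eqref{eq:qstar-def} produces $0=1-\quant_\signalnum^*<1-\quant_{\signalnum-1}^*<\dots<1-\quant_0^*=\recfn_\signalnum(\voa_\signalnum^*)=1$, so every weight $(1+\sqrt{1-\quant_\quantileidx^*})^2$ lies in $[1,4]$. Summing the equalized widths $(\voa_\signalnum^*-1)(1+\sqrt{1-\quant_\quantileidx^*})^2$ to $1$ then gives both bounds. Since the $\quantileidx=\signalnum$ weight equals $1$, the lower bound is in fact strict, and the upper bound is attained only when $\signalnum=1$.

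The paper states \Cref{prop:optimal robust CR monotonicity} without a written proof, so there is nothing to match line by line. The closest ingredient it supplies is in \Cref{lem:Tk-monotone}, which records $\recfn_\signalnum(\voa)\ge\signalnum(\voa-1)$; that is the upper-bound half of your fixed-point cross-check. Your bin-width argument uses the same pair of inequalities $1\le(1+\sqrt z)^2\le 4$, but only along the one orbit that is certified to stay in $[0,1]$. This is exactly why it avoids the domain extension that the fixed-point version, and \Cref{lem:Tk-monotone} itself, use implicitly.

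An equally short alternative avoids the recursion entirely and relies only on \Cref{thm:quant partition:general policy}. The uniform profile has ratio exactly $1+1/\signalnum$ (\Cref{cor:uniform quantile disclosure}), which gives the upper bound. For the lower bound, in any profile some bin has width at least $1/\signalnum$ and denominator at most $4$, so every profile has ratio at least $1+1/(4\signalnum)$. That route bounds the infimum directly over all profiles, whereas yours leans on the feasibility of $\quants^*$ established with \Cref{thm:opt quantile partition:optimal policy}; both are complete.
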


\begin{proposition}[Decreasing margin]
\label{prop:optimal robust policy:decreasing margin}
    For any integer $\signalnum \in \naturals$, the quantile threshold profile $(\quant_0^*, \quant_1^*, \dots, \quant_\signalnum^*)$ of the optimal robust {\KQuantilePartitionDisclosure} satisfies
    \begin{align*}
        \quant_\quantileidx^* - \quant_{\quantileidx-1}^* > 
        \quant_{\quantileidx+1}^* - \quant_{\quantileidx}^*
    \end{align*}
    for every $\quantileidx \in [\signalnum - 1]$.
\end{proposition}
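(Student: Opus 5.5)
The plan is to work directly with the backward recursion \eqref{eq:qstar-def} of \Cref{thm:opt quantile partition:optimal policy}, switching to the complementary variables $y_\quantileidx \triangleq 1 - \quant_\quantileidx^*$. In these variables the recursion reads $y_\signalnum = 0$ and $y_{\quantileidx-1} = \basefn_{\voa_\signalnum^*}(y_\quantileidx)$ for $\quantileidx \in [\signalnum]$. By the definition of $\basefn_\voa$, each bin length then has the closed form
\begin{align*}
    \quant_\quantileidx^* - \quant_{\quantileidx-1}^*
    = y_{\quantileidx-1} - y_\quantileidx
    = (\voa_\signalnum^* - 1)\cdot\left(1+\sqrt{y_\quantileidx}\right)^2 ,
    \qquad \quantileidx \in [\signalnum].
\end{align*}
So the width of bin $\quantileidx$ depends only on its upper endpoint, and it is a strictly increasing function of $y_\quantileidx = 1-\quant_\quantileidx^*$. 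This is exactly the ``equalization'' structure behind the max-over-bins formula.

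Next I show that the $y_\quantileidx$ are strictly decreasing in $\quantileidx$. Part (i) of \Cref{thm:opt quantile partition:optimal policy} gives $\voa_\signalnum^* > 1$. Hence, for every $\quantileidx \in [\signalnum]$,
\begin{align*}
    y_{\quantileidx-1} - y_\quantileidx = (\voa_\signalnum^*-1)\left(1+\sqrt{y_\quantileidx}\right)^2 \ge \voa_\signalnum^* - 1 > 0 .
\end{align*}
This only uses $y_\quantileidx \ge 0$, which holds because $y_\signalnum = 0$ and each step adds a positive amount. Consequently $y_\quantileidx > y_{\quantileidx+1} \ge 0$ for every $\quantileidx \in [\signalnum-1]$.

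Combining the two steps, for $\quantileidx \in [\signalnum-1]$ the map $y \mapsto (1+\sqrt{y})^2$ is strictly increasing on $[0,1]$, and $\voa_\signalnum^* - 1 > 0$. Therefore
\begin{align*}
    \quant_\quantileidx^* - \quant_{\quantileidx-1}^*
    = (\voa_\signalnum^*-1)\left(1+\sqrt{y_\quantileidx}\right)^2
    > (\voa_\signalnum^*-1)\left(1+\sqrt{y_{\quantileidx+1}}\right)^2
    = \quant_{\quantileidx+1}^* - \quant_\quantileidx^* ,
\end{align*}
which is the claimed strict inequality. There is no real obstacle here. The only point that needs care is strictness, and it rests entirely on $\voa_\signalnum^* > 1$ from \Cref{thm:opt quantile partition:optimal policy}(i), which we take as given. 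Note also that no truncation at $0$ is involved: $y_0 = \recfn_\signalnum(\voa_\signalnum^*) = 1$ exactly, so every bin, including the first, has the formula above.
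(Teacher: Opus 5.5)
Your proof is correct and follows the paper's route: the proof of Part~(i) of \Cref{thm:rdd:optimal robust policy} records, straight from the backward recursion, the identity $\quant^*_\quantileidx-\quant^*_{\quantileidx-1}=(\voa_\signalnum^*-1)\bigl(1+\sqrt{1-\quant^*_\quantileidx}\bigr)^2$, and the decreasing margin then follows, as you argue, from the strict increase of $\quant^*_\quantileidx$ in $\quantileidx$ combined with the strict monotonicity of $z\mapsto(1+\sqrt{z})^2$. Your explicit use of $\voa_\signalnum^*>1$ to get strictness fills in a step the paper leaves implicit.
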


\xhdr{Implication for the no-information signaling scheme}
For $\signalnum = 1$, the {\KQuantilePartitionDisclosure} outputs the no-information\textsuperscript{\ref{footnote:full info and no info}} signaling scheme $\noinfor$ for every quality distribution $\prior$. Consequently, \Cref{thm:opt quantile partition:optimal policy} implies that, for any linear valuation function $\buyerUtility$, any consumer type distribution $\typeCDF$, and any product quality distribution $\prior$, using the no-information signaling scheme together with the (uniform) consumer-dependent monopoly price (i.e., the optimal posted price for the distribution of the random posterior value $\expect[\quality \sim \prior]{\buyerUtility(\type, \quality)}$) yields a revenue that is always within a factor of 2 of the revenue achievable by the Bayesian optimal signaling and pricing schemes.

\subsection{Robustness of General \texorpdfstring{{\KQuantilePartitionDisclosure}}{K-Quantile Partition}}
\label{sec:opt quantile partition:general policy}

In this section, we provide a succinct expression for the robust competitive ratio (i.e., the objective value of program~$\RQDQuant$) for an arbitrary {\KQuantilePartitionDisclosure} policy.

\begin{theorem}[Robust competitive ratio]
\label{thm:quant partition:general policy}
    For any $\signalnum \in \naturals$ and quantile threshold profile $\quants = (\quant_0, \quant_1, \dots, \quant_\signalnum)$ satisfying $0 = \quant_0 \leq \quant_1 \leq \dots \leq \quant_\signalnum = 1$, the robust competitive ratio of the {\KQuantilePartitionDisclosure} policy $\disclosurefn$ parameterized by $\quants$ is given by
    \begin{align*}
        \sup_{\substack{
        \buyerUtility \in \buyerUtilityClass, \\
        \prior,\typeCDF \in \Delta([0, 1])}}
        \;\;
        \frac{\OPT_{\buyerUtility,\prior,\typeCDF}}{\Rev_{\buyerUtility,\prior,\typeCDF}(\disclosurefn(\prior))}
        =
        \max_{\quantileidx \in [\signalnum]}~
        \left(
        1 + \frac{\quant_\quantileidx - \quant_{\quantileidx - 1}}{\left(1+\sqrt{1-\quant_\quantileidx}\right)^2}
        \right).
    \end{align*}
\end{theorem}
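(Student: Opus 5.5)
The plan is to prove the two inequalities separately: an upper bound on the ratio using the indirect-revenue reduction, and a matching lower bound from an explicit instance.

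\textbf{Reduction to indirect revenue.} Because valuations are affine in $\quality$, a signal affects revenue only through its posterior mean $\pooledmean$. I would therefore write the revenue of any scheme as $\expect{\cvx(\pooledmean)}$, where $\cvx(\pooledmean)$ is the optimal posted-price revenue when every buyer believes that expected quality equals $\pooledmean$. I would use the properties established in the reduction: $\cvx$ is nonnegative, non-decreasing and convex, and $\cvx(\pooledmean)\ge \pooledmean\,\cvx(1)$.

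By convexity, full information is optimal, so $\OPT=\expect[\quality\sim\prior]{\cvx(\quality)}$. The ratio to bound becomes $\expect[\prior]{\cvx(\quality)}/\expect{\cvx(\pooledmean_\quantileidx)}$, where $\pooledmean_\quantileidx$ is the mean of bin $\quantileidx$. After normalizing $\cvx(1)=1$, I would invoke the stop-loss representation (\Cref{lem:stoploss-representation}) and the extreme-point reduction (\Cref{lem:reduce-to-extremal}). Together they write $\cvx$ as a mixture of hinges $\cvx_\kink(\pooledmean)=\max\{\kink,\pooledmean\}$ with $\kink\in[0,1]$. Since a ratio of mixtures is at most the largest componentwise ratio, it suffices to bound the ratio for a single hinge $\cvx_\kink$.

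\textbf{Upper bound for a hinge.} Fix $\kink$ and write $d=\quant_\quantileidx-\quant_{\quantileidx-1}$.
\begin{itemize}
\item For a bin whose qualities lie entirely on one side of $\kink$, $\cvx_\kink$ is affine on that range, so Jensen holds with equality and the bin contributes no loss.
\item Because the partition is monotone in quality, at most one bin $\quantileidx$ straddles $\kink$. On that bin, the chord bound $\max\{\kink,\quality\}\le \kink+(1-\kink)\quality$ on $[0,1]$ gives a loss of at most $d\,\kink(1-\kink)$ in the worst case, attained when the pooled mean equals $\kink$.
\end{itemize}
For the denominator, the lower quantiles contribute at least $\kink\,\quant_\quantileidx$ and the top $1-\quant_\quantileidx$ mass contributes its pooled values.

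The delicate point, and the step I expect to be the main obstacle, is that the adversary could make the higher bins have small quality. This shrinks the denominator, but it also forces bin $\quantileidx$'s qualities below those values. I would handle it by a rescaling argument: all qualities are bounded by the minimum value $M$ of the higher bins, and the ratio is monotone in $M$. This should reduce to the case where the top $1-\quant_\quantileidx$ mass sits at quality $1$, giving
\[
\frac{\OPT}{\Rev}\le 1+\frac{d\,\kink(1-\kink)}{\kink\,\quant_\quantileidx+(1-\quant_\quantileidx)}.
\]
Maximizing over $\kink$ with $q=1-\quant_\quantileidx$, the map $\kink\mapsto \kink(1-\kink)/(\kink(1-q)+q)$ has maximum $1/(1+\sqrt q)^2$, attained at $\kink=\sqrt q/(1+\sqrt q)$. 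Taking the maximum over which bin straddles $\kink$ yields the claimed max-over-bins upper bound.

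\textbf{Matching lower bound.} Fix the maximizing $\quantileidx$ and the optimal $\kink$. I would build the following prior:
\begin{itemize}
\item quantiles $[0,\quant_{\quantileidx-1}]$ at quality $0$;
\item within bin $\quantileidx$, mass $\kink d$ at quality $1$ and the rest at $0$, so that the pooled mean is exactly $\kink$;
\item the top $1-\quant_\quantileidx$ mass at quality $1$.
\end{itemize}
A direct computation then gives $\OPT=\kink(\quant_\quantileidx-\kink d)+1-\quant_\quantileidx+\kink d$ and $\Rev=\kink\,\quant_\quantileidx+1-\quant_\quantileidx$, so the bound is attained.

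What remains is to realize the hinge $\max\{\kink,\pooledmean\}$ as an induced indirect revenue function, with a valuation $\buyerUtility\in\buyerUtilityClass$ and a type distribution $\typeCDF$. This is exactly what \Cref{lem:quantile partition:matching lower bound} provides, via the tightness of the relaxation under $\cvx(\pooledmean)\ge\pooledmean\,\cvx(1)$. If the hinge is only approximately implementable, I would take a limit, which still yields equality of the supremum.
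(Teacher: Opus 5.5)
Your proposal is correct and follows the paper's route. The upper bound is the same chain as in the paper: \Cref{lem:revenue function necessary condition}, normalization, hinge reduction, a single straddling bin, then pushing the higher bins down to the bin's top quality $b$ and rescaling by $z=\kink/b$. Your lower-bound prior is exactly the paper's Case~1 Bernoulli instance, and the hinge is realized exactly (not just approximately) by the two-type construction of Example~\ref{lem:stoploss-realize}.

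The only slip is the case where the maximizing bin has $\quant_\quantileidx=1$ (e.g.\ $\quantileidx=\signalnum$). There the optimal $\kink$ is $0$ and your instance degenerates to $0/0$, so you must keep $\kink>0$ and let $\kink\to 0^+$; your formula then gives $1+d(1-\kink)\to 1+d$. This is why the paper handles $\quantileidx\primed=\signalnum$ by a limiting argument rather than an attained instance.
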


The above characterization reveals a significant simplification in the structure of the robust quality disclosure problem. Although the original minimax program~$\RQDQuant$ involves optimization over all linear valuation functions, product quality distributions, and consumer type distributions, the worst-case competitive ratio for any fixed {\KQuantilePartitionDisclosure} policy depends only on its quantile threshold profile $\quants$ in a remarkably simple way. Specifically, the robust competitive ratio is given by the maximum over $\signalnum$ explicit expressions. Each expression, indexed by $\quantileidx \in [\signalnum]$, corresponds to a candidate worst-case instance (whose precise form will become clearer in our analysis) and whose revenue gap is entirely determined by the upper quantile threshold $\quant_\quantileidx$ and the quantile margin $\quant_\quantileidx - \quant_{\quantileidx-1}$. Moreover, the specific functional form ${(\quant_\quantileidx - \quant_{\quantileidx - 1})}/{(1+\sqrt{1-\quant_\quantileidx})^2}$ again suggests that optimal robustness requires finer discrimination in higher-quality regions and coarser discrimination in lower-quality regions, as already formalized in \Cref{prop:optimal robust policy:decreasing margin}.

We can also use \Cref{thm:quant partition:general policy} to evaluate the worst-case competitive ratio implied by percentile-based badge systems used in practice. For {Upwork}'s badges \citep{Upwork}, corresponding to the quantile thresholds $\quant_0 = 0, \quant_1 = 0.9, \quant_2 = 0.97, \quant_3 = 0.99, \quant_4 = 1$, the theorem yields a competitive ratio of approximately $1.5195$. 
Likewise, for {Airbnb}'s percentile labels \citep{AirbnbTopHomesHighlight}, corresponding to $\quant_0 = 0, \quant_1 = 0.1, \quant_2 = 0.9, \quant_3 = 0.95, \quant_4 = 0.99, \quant_5  = 1$, 
the resulting competitive ratio is approximately $1.4617$.

This characterization enables a clean analytical characterization of optimality. In particular, the backward recursive construction in Theorem~\ref{thm:opt quantile partition:optimal policy} precisely equalizes these $\signalnum$ expressions: the optimal quantile thresholds $\quants^*$ are chosen so that every interval yields the same competitive ratio $\voa_\signalnum^*$. Thus, the optimal policy achieves perfect balance across all potential adversarial scenarios, i.e., a hallmark of minimax optimality.

\xhdr{Implications of \Cref{thm:quant partition:general policy}}
We now discuss several implications of \Cref{thm:quant partition:general policy}.

First, for any {\KQuantilePartitionDisclosure}, each of the $\signalnum$ terms inside the maximum in \Cref{thm:quant partition:general policy} is upper bounded by $2$. Consequently, every {\KQuantilePartitionDisclosure} achieves a robust competitive ratio of at most $2$.

This observation has a broader implication: for any fixed instance of the Bayesian signaling and pricing problem (i.e., given a linear valuation function, a consumer type distribution, and a product quality distribution), every monotone-partition signaling scheme can be induced by some {\KQuantilePartitionDisclosure} with an appropriately chosen quantile threshold profile. Since the robust competitive ratio of that particular {\KQuantilePartitionDisclosure} is at most $2$, it follows that any monotone-partition signaling scheme is also a $2$-approximation to the Bayesian optimal revenue for that instance.

A natural choice of {\KQuantilePartitionDisclosure} is the one based on a uniform quantile threshold profile, i.e., $\quant_\quantileidx = \quantileidx / \signalnum$ for $\quantileidx \in [0:\signalnum]$. For this simple disclosure policy, \Cref{thm:quant partition:general policy} implies that the robust competitive ratio is exactly $1 + 1/\signalnum$, which converges to $1$ as $\signalnum \to \infty$. Comparing this with the convergence rate of the optimal robust {\KQuantilePartitionDisclosure} established in \Cref{prop:optimal robust CR monotonicity}, we see that the uniform quantile policy attains the same asymptotic rate (though with a suboptimal constant factor).

Finally, as $\signalnum \to \infty$, the {\KQuantilePartitionDisclosure} with uniform quantiles converges to the full-information\textsuperscript{\ref{footnote:full info and no info}} signaling scheme $\fullinfor$ for every quality distribution. This, combined with the fact that its robust competitive ratio approaches $1$, implies that, for any fixed instance of the Bayesian signaling and pricing problem, the Bayesian optimal signaling scheme for any fixed instance is indeed the full-information signaling scheme.

We summarize all the above implications in the following two corollaries.

\begin{corollary}
\label{cor:uniform quantile disclosure}
    In the robust quality disclosure problem~$\RQDQuant$, the {\KQuantilePartitionDisclosure} with the uniform quantile threshold profile (i.e., $\quant_\quantileidx = \quantileidx / \signalnum$ for every $\quantileidx \in [\signalnum]$) achieves a robust competitive ratio of
    \begin{align*}
        1 + \frac{1}{\signalnum}.
    \end{align*}
    Moreover, every {\KQuantilePartitionDisclosure} achieves a robust competitive ratio of at most $2$.
\end{corollary}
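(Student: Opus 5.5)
This corollary follows directly from \Cref{thm:quant partition:general policy}. We evaluate the max-over-bins formula in the two cases.

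For the uniform profile $\quant_\quantileidx = \quantileidx/\signalnum$, every bin has width $\quant_\quantileidx - \quant_{\quantileidx-1} = 1/\signalnum$. The $\quantileidx$-th term is therefore
\begin{align*}
1 + \frac{1/\signalnum}{\left(1+\sqrt{1-\quantileidx/\signalnum}\right)^2}.
\end{align*}
The denominator is non-increasing in $\quantileidx$, so the term is non-decreasing in $\quantileidx$ and the maximum is attained at $\quantileidx = \signalnum$. There the denominator equals $(1+0)^2 = 1$, and the term equals $1 + 1/\signalnum$. By \Cref{thm:quant partition:general policy}, the robust competitive ratio of the uniform policy is exactly $1 + 1/\signalnum$. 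The statement is an equality, so the lower bound is also covered: the theorem characterizes the supremum exactly, and the matching worst-case instance is implicit in it.

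For the universal bound of $2$, take an arbitrary profile $0 = \quant_0 \le \dots \le \quant_\signalnum = 1$ and any $\quantileidx \in [\signalnum]$. The numerator satisfies $\quant_\quantileidx - \quant_{\quantileidx-1} \le \quant_\quantileidx - \quant_0 \le 1$, and the denominator satisfies $(1+\sqrt{1-\quant_\quantileidx})^2 \ge 1$. Each term is therefore at most $1 + 1 = 2$, and so is the maximum over $\quantileidx$.

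One degenerate case needs checking: coinciding thresholds give zero-width bins. These contribute the term $1$, which does not affect the maximum, and the formula in \Cref{thm:quant partition:general policy} is stated for weakly increasing profiles, so it already allows them. Once the theorem is invoked, the argument is a two-line computation.
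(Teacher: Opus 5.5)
Your proof is correct and follows the paper's route: the corollary is read off directly from the max-over-bins formula of \Cref{thm:quant partition:general policy}. For the uniform profile the maximum sits at the last bin and gives $1+1/\signalnum$, and for an arbitrary profile each term is at most $2$. One small precision: because the maximizing bin is the last one, the value $1+1/\signalnum$ is a supremum approached by a sequence of instances (Case 2 of \Cref{lem:quantile partition:matching lower bound}), not attained by a single implicit worst-case instance. The theorem already characterizes exactly that supremum, so your argument is unaffected.
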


\begin{corollary}
\label{cor:full infor optimal no infor 2 approx}
    Fix any linear valuation function $\buyerUtility \in \buyerUtilityClass$, consumer type distribution $\typeCDF \in \Delta([0, 1])$, and product quality distribution $\prior \in \Delta([0, 1])$. Then the full-information signaling scheme $\fullinfor$ achieves the optimal revenue, i.e.,
    \begin{align*}
        \Rev(\fullinfor) = \OPT.
    \end{align*}
    Moreover, any monotone-partition signaling scheme $\signalscheme$, including the no-information signaling scheme $\noinfor$, guarantees at least half of the optimal revenue, i.e.,
    \begin{align*}
        2\cdot \Rev(\signalscheme) \geq \OPT.
    \end{align*}
Furthermore, this approximation ratio of $2$ is tight for any monotone-partition signaling scheme using finite number of signals.
\end{corollary}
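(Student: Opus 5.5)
The plan is to derive \Cref{cor:full infor optimal no infor 2 approx} from \Cref{thm:quant partition:general policy}, using the reduction to indirect revenue functions sketched in the introduction. Fix $(\buyerUtility,\typeCDF,\prior)$. Since valuations are affine in quality, a signal affects revenue only through its posterior mean $\pooledmean$. So $\Rev(\signalscheme)=\expect{\cvx(\pooledmean)}$, where $\pooledmean$ is the posterior mean induced by $\signalscheme$ and $\cvx$ is the indirect revenue function (the optimal posted-price revenue when all buyers believe the expected quality is $\pooledmean$). As stated in the introduction, $\cvx$ is nonnegative, nondecreasing and convex. For $\cvx$ convex, any signaling scheme has posterior means forming a mean-preserving contraction of $\prior$, so Jensen's inequality gives $\expect{\cvx(\pooledmean)}\le \expect[\quality\sim\prior]{\cvx(\quality)}=\Rev(\fullinfor)$. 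Hence $\OPT=\Rev(\fullinfor)$ and the supremum defining $\OPT$ is attained. I would also offer the limiting argument from the text as a cross-check: uniform $\signalnum$-quantile partitions have ratio $1+1/\signalnum$ by \Cref{cor:uniform quantile disclosure}, so $\sup_\signalscheme \Rev(\signalscheme)\le(1+1/\signalnum)\Rev(\text{uniform }K)\le(1+1/\signalnum)\Rev(\fullinfor)$ for every $\signalnum$. The Jensen route is cleaner, though.

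For the factor-$2$ claim, let $\signalscheme$ be any monotone-partition scheme for $\prior$. Each cell is a quality interval, possibly splitting an atom, so it occupies a contiguous quantile interval $[\quant_{\quantileidx-1},\quant_\quantileidx]$. If $\signalscheme$ has finitely many cells, it equals the output of the \KQuantilePartitionDisclosure{} with those thresholds on $\prior$; the uniqueness remark after the definition ensures atoms are split the same way. Every term in \Cref{thm:quant partition:general policy} is at most $1+(\quant_\quantileidx-\quant_{\quantileidx-1})/1\le 2$, because the denominator is at least $1$. Hence $\OPT\le 2\Rev(\signalscheme)$. If $\signalscheme$ has infinitely many cells, I would approximate it by coarsenings: merging cells into finitely many groups gives a finite monotone partition whose revenue is at most $\Rev(\signalscheme)$ by Jensen, so the same bound transfers. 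Alternatively, a direct argument works. On each cell $I$, convexity and $\cvx\ge 0$ let us compare $\expect{\cvx(\quality)\mid I}$ with $\cvx(\expect{\quality\mid I})$, using the linear lower bound $\cvx(\pooledmean)\ge\pooledmean\,\cvx(1)$. The finite case is the main use, and $K=1$ gives the no-information case.

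For tightness with any finite-signal monotone partition, I would use the adversarial instances from the lower-bound construction (\Cref{lem:quantile partition:matching lower bound}) in the setting of \Cref{thm:opt quality partition}. That result shows no finite-signal quality-threshold policy beats $2$, and I expect this is the intended meaning of ``tight''. Concretely, given the thresholds of the partition, choose a prior that puts essentially all mass in one cell, spread so that this cell's quantile interval is nearly $[0,1]$, and realize the hinge function $\cvx(\pooledmean)=\max\{\kink,\pooledmean\}$ with $\kink$ near $1$. The $\signalnum=1$ term then approaches $2$. The main obstacle is this tightness step: making the single pooled cell contain almost all prior mass while the hinge primitive yields the ratio $2$. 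I would handle it by placing mass $1-\epsilon$ at quality $0$ and $\epsilon$ near quality $1$ inside one cell, and letting $\epsilon\to0$ with an appropriate $\kink$. This approximation must be checked carefully against the exact instance in the lemma.
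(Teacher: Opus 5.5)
Your treatment of the first two claims is sound and essentially matches the paper. You prove optimality of $\fullinfor$ via convexity of the indirect revenue function plus Jensen, which is the route the paper points to right after the corollary and uses in Appendix~A. You get the factor $2$ by matching a finite monotone partition with the $\signalnum$-quantile partition whose thresholds are the cumulative cell masses, then bounding each term of \Cref{thm:quant partition:general policy} by $2$. Your coarsening remark for infinitely many cells covers a case the paper skips. In fact Jensen against $\noinfor$ already gives $\Rev(\signalscheme)\ge\Rev(\noinfor)\ge\OPT/2$ for every scheme. Your ``direct'' alternative also works: on a cell with posterior mean $m$, the chord bound, monotonicity and the linear lower bound give $\mathbb{E}[\cvx(\quality)\mid\text{cell}]\le(1-m)\cvx(0)+m\,\cvx(1)\le 2\cvx(m)$.

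The gap is in the tightness step: the concrete instance you propose does not give a ratio near $2$.
\begin{itemize}
\item A hinge $\max\{\kink,\pooledmean\}$ with $\kink$ near $1$ cannot work. Since $\kink\le\max\{\kink,\pooledmean\}\le 1$, the ratio is at most $1/\kink\approx 1$. In the extremal instances the kink sits at the pooled posterior mean, which is small; for example $\kink=m_\signalnum=O(\varepsilon)$ in Case~2 of \Cref{lem:quantile partition:matching lower bound}.
\item The tightness claim concerns partitions whose quality thresholds $0=\omega_0\le\cdots\le\omega_\signalnum=1$ are fixed before the prior is chosen, as in \Cref{thm:opt quality partition}. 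A prior with mass at $0$ and near $1$ is split into different cells as soon as some threshold lies in $(0,1)$. The scheme is then essentially full information, and the ratio is about $1$.
\item An arbitrary cell does not work either. A short computation with \Cref{lem:reduce-to-extremal} shows that a fully pooled prior supported in $[a,b]$ has ratio at most $2/(1+\sqrt{a/b})$, which is strictly below $2$ when $a>0$.
\end{itemize}

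The missing idea is to work in the lowest non-degenerate cell $(0,\omega_j)$, where $j=\min\{r:\omega_r>0\}$. There you can use the fact that the hinge ratio is invariant under rescaling qualities and kink together. Concretely:
\begin{itemize}
\item Put mass $\frac{1}{1+\varepsilon}$ at $\varepsilon^2\omega_j/2$ and mass $\frac{\varepsilon}{1+\varepsilon}$ at $\omega_j/2$. Both points lie strictly inside the open cell, so no threshold atom is split by the $\xi$'s and the policy sends a single signal.
\item Realize $\kink=\varepsilon\omega_j/2$, the prior mean, via \Cref{lem:stoploss-realize}.
\item The policy earns $\kink$, while $\OPT=\frac{\kink}{1+\varepsilon}+\frac{\varepsilon}{1+\varepsilon}\cdot\frac{\omega_j}{2}=\frac{2\kink}{1+\varepsilon}$. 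The ratio is $2/(1+\varepsilon)$, which is exactly the paper's construction.
\end{itemize}
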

The optimality of  $\fullinfor$ also follows immediately once we relax the robust quality disclosure program $\RQDQuant$ to the robust disclosure design program $\RDDQuantTitle$ that we introduce in \Cref{def:robust disclosure design} and establish in \Cref{lem:revenue function necessary condition} that the seller's optimal posted-price revenue function is convex.\footnote{We note that this result is also related to the findings of \citet{OP-01,ES-07,RS-10}.}

The tightness of $2$-approximation for monotone-partition signaling scheme $\signalscheme$
for any finite $\signalnum$ is as follows: for any $\varepsilon\in(0, 1)$ there exists an instance with valuation function $\buyerUtility \in \buyerUtilityClass$, product quality distribution and type distribution $\typeCDF_\varepsilon, \prior_\varepsilon \in \Delta([0, 1])$ that depend on $\varepsilon$ such that $\frac{2}{1+\varepsilon}\cdot \Rev(\signalscheme) \le \OPT$. The formal arguments and analysis are provided in \Cref{apx:quality partition}. 
\section{From Robust Quality Disclosure to Robust Disclosure Design}
\label{sec:rdd}

In this section, we first step back from the robust quality disclosure problem and introduce a more general minimax framework, which we term \emph{robust disclosure design}, in \Cref{sec:rdd:robust disclosure design framework}. 
We then show how our robust quality disclosure problem arises as a special case of this broader framework,
and then leverage this reduction to establish the proofs of \Cref{thm:opt quantile partition:optimal policy,thm:quant partition:general policy} in \Cref{sec:rdd:proof of rqd quantile policy}. Finally, we solve the robust disclosure design problem in \Cref{sec:rdd:proof of rdd optimal robust policy,sec:rdd:missing proofs}.

\subsection{Robust Disclosure Design Framework}
\label{sec:rdd:robust disclosure design framework}

The robust quality disclosure problem, as formulated in program~\ref{eq:RQD}, involves optimizing over disclosure policies that output signaling schemes without knowledge of the consumer's preferences, while permitting downstream pricing to fully adapt to those preferences. A key observation is that, under the assumption that the consumer's valuation function is affine in product quality (\Cref{asp:linear valuation function}), the seller's expected revenue depends on the signaling scheme only through the \emph{posterior means} of quality it induces. Consequently, the seller's expected revenue can be expressed as the expectation of a certain convex \emph{indirect utility function} $\cvx: [0,1] \to \reals_+$, where $\cvx(\pooledmean)$ denotes the maximum revenue attainable when all consumers share a common belief that the product's expected quality is $\pooledmean$.
With slight abuse of notation, we use $\disclosurefn(\prior)$ interchangeably to denote both the signaling scheme induced by prior quality distribution $\prior$ and the corresponding distribution over posterior means. Accordingly, we write $\pooledmean \sim \disclosurefn(\prior)$ to indicate that $\pooledmean$ is a posterior mean drawn from the distribution induced by the signaling scheme $\disclosurefn(\prior)$ under prior $\prior$.

The shape of the seller's indirect utility function $\cvx$ is determined by the consumer's valuation function $\buyerUtility$ and the type distribution $\typeCDF$.
Due to the convexity of $\cvx$, the seller's optimal revenue is achieved by using full-information signaling scheme (see \Cref{cor:full infor optimal no infor 2 approx}), and it equals to the expectation of this indirect utility function $\cvx$ under the prior quality distribution $\prior$, i.e., $\expect[\pooledmean \sim \prior]{\cvx(\pooledmean)}$. 
This observation leads to a crucial simplification, once $\cvx$ is fixed, the robust competitive ratio of any disclosure policy $\disclosurefn$ reduces to the ratio 
\begin{align*}
    \frac{\expect[\pooledmean \sim \prior]{\cvx(\pooledmean)}}{\expect[\pooledmean \sim \disclosurefn(\prior)]{\cvx(\pooledmean)}}~.
\end{align*}
This insight motivates the following general abstraction: the \emph{robust disclosure design} framework, which studies, for a given class $\cvxspace$ of indirect utility functions, how well a simple disclosure policy $\disclosurefn$ performs under the worst-case prior distribution and worst-case indirect utility function $\cvx\in\cvxspace$. 
Notably, this framework operates entirely at the level of induced posterior distributions and does not require reference to the underlying economic primitives such as valuation functions or type distributions.

\begin{definition}[Robust disclosure design]
\label{def:robust disclosure design}
Given a class $\disclosurefnClass$ of disclosure policies and a convex functional space $\cvxspace \subseteq \{\cvx : [0,1] \to \reals_+\}$ consisting of convex functions, the \emph{robust disclosure design problem} is defined as
\begin{align}
    \label{eq:rdd}
    \tag{$\RDD[\disclosurefnClass, \cvxspace]$}
    \inf_{\disclosurefn \in \disclosurefnClass}
    \sup_{\substack{\cvx \in \cvxspace \\ \prior \in \Delta([0,1])}}
    ~
    \frac{\expect[\pooledmean \sim \prior]{\cvx(\pooledmean)}}{\expect[\pooledmean \sim \disclosurefn(\prior)]{\cvx(\pooledmean)}}~,
\end{align}
where $\pooledmean \sim \disclosurefn(\prior)$ denotes the posterior mean induced by the signaling scheme $\disclosurefn(\prior)$ under prior $\prior$.

A disclosure policy $\disclosurefn^*$ that attains the infimum in~\ref{eq:rdd} (if one exists) is referred to as an \emph{optimal robust disclosure policy} for the pair $(\disclosurefnClass, \cvxspace)$, and the corresponding objective value is called the \emph{optimal robust competitive ratio} of~\ref{eq:rdd}.
\end{definition}
Motivated by program~\ref{eq:rdd}, we also define the robust competitive ratio of a given disclosure policy $\disclosurefn$ with respect to the function class $\cvxspace$ as
\begin{align*}
    \voa(\disclosurefn \mid \cvxspace) \triangleq 
    \sup_{\substack{\cvx \in \cvxspace \\ \prior \in \Delta([0,1])}}
    ~
    \frac{\expect[\pooledmean \sim \prior]{\cvx(\pooledmean)}}{\expect[\pooledmean \sim \disclosurefn(\prior)]{\cvx(\pooledmean)}}~.
\end{align*}
Compared with the robust quality disclosure problem~\ref{eq:RQD}, whose inner supremum is taken over the product quality distribution $\prior$, consumer valuation function $\buyerUtility$, and consumer type distribution $\typeCDF$, the robust disclosure design framework~\ref{eq:rdd} abstracts away the underlying economic primitives. As a result, its inner supremum is taken only over a prior distribution $\prior$ and an indirect utility function $\cvx$.

Moreover, it is natural to assume that the feasible space $\cvxspace$ of indirect utility functions forms a convex functional space consisting of convex functions. This structural property enables the application of tools from convex analysis (e.g., characterizing extreme points in the space of convex functions) to simplify the worst-case analysis. As we demonstrate in the later parts in \Cref{sec:rdd}, this approach is indeed effective for our setting.

Owing to its generality and analytical tractability, we believe the robust disclosure design framework~\ref{eq:rdd} is of independent interest.

\subsection{Tight Reduction from \texorpdfstring{$\RQDQuantTitle$ to $\RDDQuantTitle$}{RQD to RDD}} 
\label{sec:rdd:proof of rqd quantile policy}
In this section, we formalize the connection between the robust quality disclosure problem~\ref{eq:RQD} and the robust disclosure design framework~\ref{eq:rdd}. We begin by identifying a set of necessary conditions satisfied by the seller's indirect utility function $\cvx$ in our model.

\begin{definition}[Optimal posted-price revenue function]
Given any linear valuation function $\buyerUtility$ and any type distribution $\typeCDF \in \Delta([0,1])$, let $\cvx_{\buyerUtility, \typeCDF}(\cdot) : [0, 1] \to \reals_+$ denote the \emph{optimal posted-price revenue function}, defined as follows: for any induced posterior mean $\pooledmean \in [0,1]$,
\begin{align*}
    \cvx_{\buyerUtility, \typeCDF}(\pooledmean)
    \triangleq 
    \sup_{\price \ge 0}\; \price \cdot 
    \prob[\type \sim \typeCDF]{\buyerUtility(\type, \pooledmean) \ge \price}~.
\end{align*}
\end{definition}
\begin{lemma}[Characterization of $\cvx_{\buyerUtility, \typeCDF}$]
\label{lem:revenue function necessary condition}
    For any linear valuation function $\buyerUtility$ and any type distribution $\typeCDF\in\Delta([0, 1])$, the optimal posted-price revenue function $\cvx_{\buyerUtility, \typeCDF}$ is non-negative, non-decreasing, and convex. Moreover, for any induced posterior mean $\pooledmean \in [0, 1]$,
    \begin{align*}
        \cvx_{\buyerUtility, \typeCDF}(\pooledmean) \geq \pooledmean \cdot \cvx_{\buyerUtility, \typeCDF}(1)~.
    \end{align*}
\end{lemma}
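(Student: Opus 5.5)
We will write the linear valuation as $\buyerUtility(\type,\quality)=\qualityfnone(\type)+\qualityfntwo(\type)\,\quality$, since it is affine in $\quality$ for each type. Because $\buyerUtility$ is increasing in $\quality$, $\qualityfntwo(\type)>0$. Nonnegativity of $\buyerUtility$ at $\quality=0$ gives $\qualityfnone(\type)\ge 0$. The plan is to view $\cvx_{\buyerUtility,\typeCDF}$ as a pointwise supremum over prices of per-price revenue curves in $\pooledmean$, and to read off each property from that representation.

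Non-negativity is immediate, since price $\price=0$ yields revenue $0$. Monotonicity follows because for every fixed $\price$ the demand $\prob[\type\sim\typeCDF]{\buyerUtility(\type,\pooledmean)\ge \price}$ is non-decreasing in $\pooledmean$: each consumer's value $\buyerUtility(\type,\pooledmean)$ is non-decreasing in $\pooledmean$, so the purchase set only grows. Taking the supremum over $\price$ preserves monotonicity.

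Convexity is the main obstacle. The per-price revenue $\price\cdot\prob{\buyerUtility(\type,\pooledmean)\ge\price}$ is a step-like function of $\pooledmean$ and is not convex, so a sup over fixed prices does not work directly. Instead we reparametrize prices by a posted mechanism that scales with $\pooledmean$. I would take the quantile/demand representation $\cvx(\pooledmean)=\sup_{S}\ \prob{\type\in S}\cdot \inf_{\type\in S}\buyerUtility(\type,\pooledmean)$, restricted to upper sets $S=\{\type\ge t\}$ together with the selling convention at ties. The cleaner route, though, is to swap the order: $\cvx(\pooledmean)=\sup_{t}\ \prob{\type\ge t}\cdot \buyerUtility(t,\pooledmean)$. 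This form is justified as follows. The optimal price can be taken to be the value of some threshold type, because monotonicity in $\type$ makes buyers an upper set $\{\type\ge t\}$. Conversely, charging $\buyerUtility(t,\pooledmean)$ sells to all types $\ge t$. Atoms and right-continuity need care here, and we handle them by approximating with $t$ from the appropriate side, since only the supremum matters. For fixed $t$, the map $\pooledmean\mapsto \prob{\type\ge t}\,(\qualityfnone(t)+\qualityfntwo(t)\pooledmean)$ is affine. Therefore $\cvx$ is a supremum of affine functions and hence convex. This also reconfirms monotonicity, since every slope satisfies $\qualityfntwo(t)\prob{\type\ge t}\ge 0$.

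For the linear lower bound, we use the same representation. For each $t$,
\[
\prob{\type\ge t}\,\buyerUtility(t,\pooledmean)\ \ge\ \pooledmean\cdot \prob{\type\ge t}\,\buyerUtility(t,1),
\]
because
\[
\qualityfnone(t)+\qualityfntwo(t)\pooledmean\ \ge\ \pooledmean\,(\qualityfnone(t)+\qualityfntwo(t)),
\]
which holds as $\qualityfnone(t)(1-\pooledmean)\ge 0$ for $\pooledmean\in[0,1]$. Taking the supremum over $t$ on both sides gives $\cvx(\pooledmean)\ge \pooledmean\,\cvx(1)$. Equivalently, the bound follows from convexity combined with $\cvx(0)\ge 0$ via $\cvx(\pooledmean)\le(1-\pooledmean)\cvx(0)+\pooledmean\cvx(1)$; that only gives an upper bound, so the direct pointwise argument above is the one to use. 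The technical point to verify carefully is the threshold-type representation of the optimal price under a possibly discrete $\typeCDF$.
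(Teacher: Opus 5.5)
Your proposal is correct and follows essentially the same route as the paper: non-negativity and monotonicity via growing demand sets, convexity via the threshold-type representation $\cvx_{\buyerUtility,\typeCDF}(\pooledmean)=\sup_{t}\prob[\type\sim\typeCDF]{\type\ge t}\,\buyerUtility(t,\pooledmean)$ as a supremum of affine functions, and the linear lower bound from the pointwise inequality $\qualityfnone(t)(1-\pooledmean)\ge 0$ applied before taking the supremum. Your extra care about purchase sets of the form $\{\type>t\}$ is a small genuine improvement: the paper asserts that a closed cutoff $\{\type\ge t\}$ always exists, which can fail when $\buyerUtility$ jumps in $\type$, and approximating $t$ from above restores the supremum formula.
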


\begin{figure}[t]
\centering

\begin{subfigure}[t]{0.48\textwidth}
\centering
\begin{tikzpicture}
    [x=5cm,y=5cm,>=Stealth,
    ]
\tikzset{reddash/.style={red!85, line width=1.6pt, opacity=0.8}}
\begin{axis}[
xmin=0, xmax=1.1,
    ymin=0, ymax=1.1,
    axis lines=left,
    axis line style={-Stealth, line width=1.2pt},
    ticks=none,
    clip=false,
]

\addplot[name path=lower, draw=none, domain=0:1] {x};
\addplot[name path=upper, draw=none, domain=0:1] {1};
\addplot[draw=none, fill=blue!15] fill between[of=lower and upper];

\addplot[blue!85, dashed, line width=2.0pt, dash pattern=on 8pt off 3pt, domain=0:1] {x};  \addplot[blue!85, dashed, line width=2.0pt, dash pattern=on 8pt off 3pt, domain=0:1] {1};  

\addplot[reddash, line width=1.6pt, domain=0:1, samples=200] {0.1 + 0.8*x + 0.1*x^2};
\addplot[reddash, line width=1.6pt, domain=0:1, samples=200] {0.3 + 0.4*x + 0.3*x^2};
\addplot[reddash, line width=1.6pt, domain=0:1, samples=200] {0.5 + 0.5*x^2};
\addplot[reddash, line width=1.6pt, domain=0:1, samples=200] {0.7 + 0.3*x^2};
\addplot[reddash, line width=1.6pt, domain=0:1, samples=200] {0.9 + 0.1*x^2};

\node[anchor=north] at (axis cs:1,0) {$1$};
\node[anchor=east]  at (axis cs:0,1) {$1$};
\draw[gray,dashed,line width=1.0pt] (1,0) -- (1,1);
\end{axis}
\end{tikzpicture}
\caption{}
\label{fig:left tight bound}
\end{subfigure}
\hfill
\begin{subfigure}[t]{0.48\textwidth}
\centering
\begin{tikzpicture}[x=5cm,y=5cm,>=Stealth]
\tikzset{reddash/.style={red!85, line width=1.6pt, opacity=0.8}}
\begin{axis}[
xmin=0, xmax=1.1,
    ymin=0, ymax=1.1,
    axis lines=left,
    axis line style={-Stealth, line width=1.2pt},
    ticks=none,
    clip=false,
]

\addplot[name path=lower2, draw=none, domain=0:1] {x};
\addplot[name path=upper2, draw=none, domain=0:1] {1};
\addplot[draw=none, fill=blue!15] fill between[of=lower2 and upper2];

\addplot[blue, dashed, line width=2.0pt, dash pattern=on 8pt off 3pt, domain=0:1] {x};  \addplot[blue, dashed, line width=2.0pt, dash pattern=on 8pt off 3pt, domain=0:1] {1};

\addplot[reddash, line width=1.6pt, domain=0:0.1] {0.1};
\addplot[reddash, line width=1.6pt, domain=0.1:1] {x};

\addplot[reddash, line width=1.6pt, domain=0:0.3] {0.3};
\addplot[reddash, line width=1.6pt, domain=0.3:1] {x};

\addplot[reddash, line width=1.6pt, domain=0:0.5] {0.5};
\addplot[reddash, line width=1.6pt, domain=0.5:1] {x};

\addplot[reddash, line width=1.6pt, domain=0:0.7] {0.7};
\addplot[reddash, line width=1.6pt, domain=0.7:1] {x};

\addplot[reddash, line width=1.6pt, domain=0:0.9] {0.9};
\addplot[reddash, line width=1.6pt, domain=0.9:1] {x};

\node[anchor=north] at (axis cs:1,0) {$1$};
\node[anchor=east]  at (axis cs:0,1) {$1$};
\draw[gray,dashed,line width=1.0pt] (1,0) -- (1,1);
\end{axis}
\end{tikzpicture}
\caption{}
\label{fig:right tight bound}
\end{subfigure}
\caption{The blue dashed lines are upper boundary (i.e., $\cvx(\pooledmean) \equiv 1$) and lower boundary (i.e., $\cvx(\pooledmean) = \pooledmean$) of the functional space defined in \Cref{thm:rdd:optimal robust policy} (here we normalize $\cvx(1) = 1$ for the presentation simplicity).
The red solid lines in \Cref{fig:left tight bound} are some feasible convex function $\cvx \in \cvxspace$ in this functional space $\cvxspace$, while the red solid lines in \Cref{fig:right tight bound} are some extremal convex functions (i.e., $\cvx(\pooledmean) = \max\{\pooledmean, \kink\}$ that we establish in \Cref{lem:stoploss-representation} for some $\kink\in[0, 1]$) in this functional space $\cvxspace$.}
\label{fig:tight bound}

\end{figure}

\begin{proof}
Non-negativity follows immediately from the definition, since prices and probabilities are non-negative.

\xhdr{Monotonicity}
Fix $\pooledmean_1, \pooledmean_2 \in [0,1]$ with $\pooledmean_1 \leq \pooledmean_2$. By \Cref{asp:linear valuation function}, the valuation function $\buyerUtility$ is affine and increasing in quality $\quality$ for every type $\type$. It follows that $\buyerUtility(\type, \pooledmean_1) \leq \buyerUtility(\type, \pooledmean_2)$ for every type~$\type$.
For any price $\price \geq 0$, define the demand events
\begin{align*}
    A_i(\price) \triangleq \left\{ \type : \buyerUtility(\type, \pooledmean_i) \geq \price \right\}, \quad i \in \{1,2\}~.
\end{align*}
The pointwise inequality implies $A_1(\price) \subseteq A_2(\price)$, and thus
\begin{align*}
    \price \cdot \prob[\type \sim \typeCDF]{A_1(\price)} \leq \price \cdot \prob[\type \sim \typeCDF]{A_2(\price)}~.
\end{align*}
Taking the supremum over price $\price \geq 0$ yields
$\cvx_{\buyerUtility, \typeCDF}(\pooledmean_1) \leq \cvx_{\buyerUtility, \typeCDF}(\pooledmean_2)$,
establishing that $\cvx_{\buyerUtility, \typeCDF}$ is non-decreasing.

\xhdr{Convexity}
By \Cref{asp:linear valuation function}, the valuation function $\buyerUtility(\type, \pooledmean)$ is affine in quality (i.e., in $\pooledmean$) for each fixed type $\type$, and non-decreasing in type $\type$ for each fixed quality. This monotonicity ensures that, for any posterior mean $\pooledmean$, the optimal posted price corresponds to a cutoff type: there exists some $\type \in [0,1]$ such that all types $\type\primed \geq \type$ purchase the product, while all types $\type\primed < \type$ do not. Consequently, the optimal posted-price revenue function can be written as
\begin{align*}
    \cvx_{\buyerUtility, \typeCDF}(\pooledmean)
    = \sup_{\type \in [0,1]} \left(1 - \typeCDF(\type)\right) \cdot \buyerUtility(\type, \pooledmean)~.
\end{align*}
For each fixed $\type$, define the auxiliary function
\begin{align*}
    f_\type(\pooledmean) \triangleq \left(1 - \typeCDF(\type)\right) \cdot \buyerUtility(\type, \pooledmean)~.
\end{align*}
Since $\buyerUtility(\type, \cdot)$ is affine in $\pooledmean$ and $1 - \typeCDF(\type)$ is a non-negative constant with respect to $\pooledmean$, the function $f_\type(\cdot)$ is affine (and therefore convex) on $[0,1]$. Finally, note that function $\cvx_{\buyerUtility, \typeCDF}$ is the pointwise supremum of the family $\{f_\type : \type \in [0,1]\}$. Since the supremum of any collection of convex functions is convex, it guarantees that function $\cvx_{\buyerUtility, \typeCDF}$ is convex on $[0,1]$.

\xhdr{Linear lower bound} 
Building on the previous analysis, recall that for each fixed $\type$, the auxiliary function $f_\type(\pooledmean)$ is affine in $\pooledmean$. Therefore,
\begin{align*}
    f_\type(\pooledmean) = (1 - \pooledmean) \cdot f_\type(0) + \pooledmean \cdot f_\type(1) \geq \pooledmean \cdot f_\type(1),
\end{align*}
where the inequality follows from the non-negativity of the valuation function, which implies $f_\type(0) \geq 0$.
Taking the supremum over $\type \in [0,1]$ on both sides yields
\begin{align*}
    \cvx_{\buyerUtility, \typeCDF}(\pooledmean) \geq \pooledmean \cdot \cvx_{\buyerUtility, \typeCDF}(1),
\end{align*}
as desired.
\end{proof}

Given the necessary conditions established in \Cref{lem:revenue function necessary condition}, it is natural to consider the robust disclosure design program~$\RDDQuant$ with a functional space $\cvxspace$ consisting of indirect utility functions that satisfy these conditions (see \Cref{fig:left tight bound} for an illustration). The following theorem characterizes both the optimal robust competitive ratio and the corresponding optimal disclosure policy for this program. Its formal proof is given in \Cref{sec:rdd:missing proofs}.

\begin{theorem}
\label{thm:rdd:optimal robust policy}
    Fix any integer $\signalnum \in \naturals$. Let $\quantDisClass$ denote the class of all {\KQuantilePartitionDisclosure} policies, and let $\cvxspace$ be the space of all non-negative, non-decreasing, convex functions $\cvx : [0,1] \to \reals_+$ satisfying $\cvx(\pooledmean) \geq \pooledmean \cdot \cvx(1)$ for all $\pooledmean \in [0,1]$. Then:
    \begin{enumerate}
        \item[(i)] The optimal robust competitive ratio of program~$\RDDQuant$ is $\voa_\signalnum^*$ (as defined in \Cref{thm:opt quantile partition:optimal policy}), and the optimal robust {\KQuantilePartitionDisclosure} policy is the one specified by the quantile threshold profile in Eqn.~\eqref{eq:qstar-def} of \Cref{thm:opt quantile partition:optimal policy}.
        
        \item[(ii)] For every {\KQuantilePartitionDisclosure} policy $\disclosurefn$ with quantile thresholds $\quants = (\quant_0, \dots, \quant_\signalnum)$, its robust competitive ratio in program~$\RDDQuant$ is given by
        \begin{align*}
            \voa(\disclosurefn \mid \cvxspace)
            =
            \max_{\quantileidx \in [\signalnum]}
            \left(
            1 + \frac{\quant_\quantileidx - \quant_{\quantileidx - 1}}{\left(1+\sqrt{1-\quant_\quantileidx}\right)^2}
            \right).
        \end{align*}
    \end{enumerate}
\end{theorem}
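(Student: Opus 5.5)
The plan is to first reduce the inner supremum over $\cvx \in \cvxspace$ to a one-parameter family of hinge functions, and then solve, for each fixed hinge, the worst case over priors. This yields part (ii) directly; part (i) then follows from minimizing the resulting max-over-bins expression over threshold profiles.

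\textbf{Step 1 (extreme-point reduction).} Both numerator and denominator are positively homogeneous in $\cvx$. So I normalize $\cvx(1)=1$; the case $\cvx(1)=0$ forces $\cvx\equiv 0$, which is trivial. The constraint $\cvx(\pooledmean)\ge \pooledmean$ together with $\cvx(1)=1$ and convexity gives left derivative at $1$ at most $1$. Monotonicity then gives all slopes in $[0,1]$. Hence $\cvx'$ (right derivative) is the CDF of a probability measure $\cvxmeasure$ on $[0,1]$. Checking derivatives and the value at $1$ gives
\begin{align*}
\cvx(\pooledmean)=\int_0^1 \max\{\pooledmean,\kink\}\,\dd\cvxmeasure(\kink).
\end{align*}
Both $\expect[\pooledmean\sim\prior]{\cvx(\pooledmean)}$ and $\expect[\pooledmean\sim\disclosurefn(\prior)]{\cvx(\pooledmean)}$ are linear in $\cvxmeasure$, and a ratio of two mixtures is at most the largest ratio of the components. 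Therefore, for each $\prior$, the supremum over $\cvxspace$ equals the supremum over hinges $\cvx_\kink(\pooledmean)=\max\{\pooledmean,\kink\}$, $\kink\in[0,1]$. Conversely, every hinge lies in $\cvxspace$, so there is equality.

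\textbf{Step 2 (worst prior for a fixed hinge; main obstacle).} Fix $\quants$, $\kink$, and $\prior$. Because a quantile partition is monotone, only a bin $r$ whose quality range straddles $\kink$ can lose revenue.
\begin{itemize}
\item A bin lying entirely on one side of $\kink$ has $\cvx_\kink$ affine on it, so its posterior mean loses nothing.
\item Generically there is exactly one straddling bin $r$, because of the monotone ordering.
\end{itemize}
The gap is then $(\quant_r-\quant_{r-1})\bigl(\mathbb E[\max\{\quality,\kink\}\mid r]-\max\{m_r,\kink\}\bigr)$, where $m_r$ is the posterior mean of bin $r$.

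Mass outside bin $r$ adds the same amount to numerator and denominator, so it only pushes the ratio toward $1$. The adversary therefore wants this mass as small as possible, subject to the following.
\begin{itemize}
\item Lower bins contribute $\kink\,\quant_{r-1}$ to both numerator and denominator regardless of their qualities below $\kink$.
\item Upper bins must sit at qualities at least the top of bin $r$, and the best choice for the adversary is to put them exactly there.
\item By a standard mean-preserving-spread argument, bin $r$ itself should be a two-point distribution on $\{0,b\}$ with $b\ge \kink$.
\end{itemize}
This leaves a finite-dimensional problem in $(\kink,b,\text{split})$. The split can be taken so that $m_r=\kink$ at the optimum. Optimizing the remaining two variables is a calculus exercise, and I expect it to produce $1+\frac{\quant_r-\quant_{r-1}}{(1+\sqrt{1-\quant_r})^2}$; the square root comes from a one-variable AM–GM-type optimization.

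This step, and in particular justifying that these reductions of the adversary's prior are without loss, is where I expect most of the work. The explicit prior plus hinge then attains the value (possibly only as a supremum, via $\varepsilon$-perturbations), which gives the matching lower bound and hence part (ii).

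\textbf{Step 3 (optimizing thresholds, part (i)).} The bin-$r$ term is at most $\voa$ if and only if
\begin{align*}
\quant_{r-1}\ \ge\ \quant_r-(\voa-1)(1+\sqrt{1-\quant_r})^2 ,
\end{align*}
that is, $1-\quant_{r-1}\le \basefn_\voa(1-\quant_r)$. Since $\basefn_\voa$ is increasing, induction from $\quant_\signalnum=1$ shows $1-\quant_{r}\le \basefn_\voa^{(\signalnum-r)}(0)$ for every feasible profile. So a profile with all terms at most $\voa$ and $\quant_0=0$ exists if and only if $\recfn_\signalnum(\voa)\ge 1$, with the equality profile given by \eqref{eq:qstar-def}, truncating at $0$ when needed.

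To conclude, I use that $\recfn_\signalnum$ is continuous and strictly increasing in $\voa$, with $\recfn_\signalnum(\voa)\to 0$ as $\voa\to1$ and $\recfn_\signalnum(\voa)\ge 1$ for $\voa=2$. This gives a unique root $\voa_\signalnum^*$, which is the minimal achievable ratio, and it is attained by $\quants^*$.
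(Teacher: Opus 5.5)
Your proposal is correct and follows the paper's proof essentially step for step: scale normalization and the hinge-mixture representation, a single straddling bin whose lower neighbors contribute $\kink\,\quant_{\quantileidx-1}$ and whose upper neighbors are pushed down to $b=\prior^{-1}(\quant_\quantileidx)$, the two-point spread on $\{0,b\}$ with $m_\quantileidx=\kink$, and the one-variable maximization in $z=\kink/b$ (which peaks at $z^*=\sqrt{1-\quant_\quantileidx}/(1+\sqrt{1-\quant_\quantileidx})$ with value $(1+\sqrt{1-\quant_\quantileidx})^{-2}$), followed by the same backward-recursion argument for part (i). The only cosmetic difference is that you build a feasible profile when $\recfn_\signalnum(\voa)\ge 1$ by truncating the recursion at $0$, whereas the paper rescales via $\quant_\quantileidx=1-s_\quantileidx/s_0$; both work.
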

We begin by observing that \Cref{thm:rdd:optimal robust policy} is the natural counterpart of \Cref{thm:opt quantile partition:optimal policy,thm:quant partition:general policy}, but formulated within the robust disclosure design framework~$\RDDQuant$ rather than the original robust quality disclosure problem~$\RQDQuant$. By the necessary conditions established in \Cref{lem:revenue function necessary condition}, the inner supremum in $\RDDQuant$ is a relaxation of that in $\RQDQuant$, and thus the robust competitive ratio under $\RDDQuant$ provides an upper bound for $\RQDQuant$.

Crucially, although these conditions are only necessary (not sufficient), they are precisely strong enough to ensure that this relaxation is \emph{tight}: the upper bound obtained via $\RDDQuant$ matches the true robust competitive ratio of $\RQDQuant$. In other words, \Cref{lem:revenue function necessary condition} identifies the ``right'' relaxation over the space of indirect utility functions—one that preserves the robust competitive ratio exactly.

This tightness is not automatic. In general, when applying the robust disclosure design framework to a specific economic setting (e.g., downstream pricing in our model), the choice of structural conditions on the indirect utility function space is critical. As we illustrate in  \Cref{apx:failure sandwich}, there exists another natural necessary condition satisfied by the optimal posted-price revenue function $\cvx_{\buyerUtility, \typeCDF}$ that, if used in place of those in \Cref{lem:revenue function necessary condition}, leads to a strictly looser upper bound, i.e., it fails to recover the tight characterization in \Cref{thm:rdd:optimal robust policy} (see \Cref{fig:sandwich} for the illustration of this natural necessary condition and \Cref{prop:sandwich} for the formal statements).

Given this tight reduction, the proofs of \Cref{thm:opt quantile partition:optimal policy,thm:quant partition:general policy} follow a clean two-step structure: we invoke \Cref{thm:rdd:optimal robust policy} to obtain the matching upper bound, and complement it with a matching lower bound, which is established in \Cref{lem:quantile partition:matching lower bound} below.

\begin{lemma}[Matching lower bound]
\label{lem:quantile partition:matching lower bound}
Fix any $\signalnum \in \naturals$ and a quantile threshold profile $\quants = (\quant_0, \quant_1, \dots, \quant_\signalnum)$ satisfying $0 = \quant_0 \leq \quant_1 \leq \dots \leq \quant_\signalnum = 1$. Let $\disclosurefn$ be the {\KQuantilePartitionDisclosure} parameterized by $\quants$, and let
\begin{align*} 
\quantileidx\primed \in \argmax_{\quantileidx \in [\signalnum]} \left( 1 + \frac{\quant_\quantileidx - \quant_{\quantileidx - 1}}{\left(1 + \sqrt{1 - \quant_\quantileidx}\right)^2} \right).
\end{align*}
Then the following hold:
\begin{itemize}
    \item If $\quantileidx\primed \in [\signalnum - 1]$, there exist a linear valuation function $\buyerUtility$, a consumer type distribution $\typeCDF \in \Delta([0,1])$, and a product quality distribution $\prior \in \Delta([0,1])$ such that
    \begin{align*}
        \frac{\OPT_{\buyerUtility,\prior,\typeCDF}}{\Rev_{\buyerUtility,\prior,\typeCDF}(\disclosurefn(\prior))}
        =
        1 + \frac{\quant_{\quantileidx\primed} - \quant_{\quantileidx\primed - 1}}{\left(1 + \sqrt{1 - \quant_{\quantileidx\primed}}\right)^2}~.
    \end{align*}

    \item If $\quantileidx\primed = \signalnum$, then for every $\varepsilon \in (0, \quant_\signalnum - \quant_{\signalnum - 1})$, there exist a linear valuation function $\buyerUtility$, $\typeCDF \in \Delta([0,1])$, and $\prior \in \Delta([0,1])$ such that
    \begin{align*}
        \frac{\OPT_{\buyerUtility,\prior,\typeCDF}}{\Rev_{\buyerUtility,\prior,\typeCDF}(\disclosurefn(\prior))}
        > 1 + (\quant_\signalnum - \quant_{\signalnum - 1}) - \varepsilon~.
    \end{align*}
\end{itemize}
Here, $\OPT_{\buyerUtility,\prior,\typeCDF}$ denotes the seller’s expected revenue under the Bayesian optimal signaling and pricing scheme, and $\Rev_{\buyerUtility,\prior,\typeCDF}(\disclosurefn(\prior))$ denotes the expected revenue under the signaling scheme $\disclosurefn(\prior)$ induced by $\disclosurefn$, given consumer valuation $\buyerUtility$, type distribution $\typeCDF$, and quality distribution $\prior$.
\end{lemma}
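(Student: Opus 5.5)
The plan is to reverse-engineer the extremal instance suggested by the hinge-function picture (\Cref{fig:right tight bound}). I will realize a hinge indirect revenue function $\cvx_\kink(\pooledmean)=\max\{\pooledmean,\kink\}$ exactly through economic primitives. I will then pick a two-point prior on $\{0,1\}$ whose atoms are split by the quantile partition so that all of the loss sits in the single bin $\quantileidx\primed$.

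\emph{Step 1 (realizing the hinge).} Take types $\type\in\{0,1\}$, with $\typeCDF$ placing mass $1-\kink$ on $\type=0$ and mass $\kink$ on $\type=1$. Take the valuation $\buyerUtility(\type,\quality)=\type+(1-\type)\quality$. This is affine in $\quality$ and non-decreasing in $\type$, since $\buyerUtility(1,\quality)=1\ge\quality=\buyerUtility(0,\quality)$. If the reading of \Cref{asp:linear valuation function} requires strict increase in $\quality$ for every type, I would use $\type\in\{0,1-\delta\}$ and let $\delta\to0$; in the exact case this would have to be checked. Using the cutoff-type representation from the proof of \Cref{lem:revenue function necessary condition}, the revenue function is
\[
\cvx_{\buyerUtility,\typeCDF}(\pooledmean)=\max\{1\cdot\pooledmean,\ \kink\cdot 1\}=\max\{\pooledmean,\kink\}.
\]

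\emph{Step 2 (prior and revenue computation).} Write $r=\quantileidx\primed$, $w=\quant_r-\quant_{r-1}$, and $s=1-\quant_r$. Let the prior put mass $s+\kink w$ at quality $1$ and mass $1-s-\kink w$ at quality $0$. Quantiles are split across atoms according to the definition of the $\signalnum$-quantile partition, so the bins behave as follows. Bins below $r$ contain only quality $0$. Bin $r$ contains mass $w-\kink w$ of quality $0$ and mass $\kink w$ of quality $1$, so its posterior mean is exactly $\kink$. Bins above $r$ contain only quality $1$.

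By \Cref{cor:full infor optimal no infor 2 approx} (or directly from convexity), $\OPT=\expect[\pooledmean\sim\prior]{\cvx(\pooledmean)}=\kink(1-s-\kink w)+s+\kink w$. The policy's revenue is $\kink(1-s)+s$, because bin $r$ yields $w\cdot\max\{\kink,\kink\}$. Hence
\[
\frac{\OPT}{\Rev}=1+\frac{w\,\kink(1-\kink)}{s+\kink(1-s)}.
\]

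\emph{Step 3 (optimizing $\kink$).} First suppose $s>0$, i.e.\ $r\in[\signalnum-1]$ with $\quant_r<1$. A one-variable calculus step shows the maximizer is the interior point $\kink=\sqrt{s}/(1+\sqrt{s})$, with maximum value $1/(1+\sqrt{s})^2$. This gives exactly $1+w/(1+\sqrt{1-\quant_r})^2$, which is the first bullet. The degenerate subcase $\quant_r=1$ with $r<\signalnum$ means the later bins are empty, so it reduces to the second case.

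Now suppose $s=0$, i.e.\ $r=\signalnum$. The ratio becomes $1+w(1-\kink)$. This tends to $1+w$ only as $\kink\to0$, and $\kink=0$ itself gives ratio $1$. Choosing $\kink<\varepsilon/w$ therefore gives the strict inequality in the second bullet.

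\emph{Main obstacle.} The routine calculus is not the difficulty. The delicate point is the atom-splitting bookkeeping: verifying that the quantile partition's definition really sends exactly mass $\kink w$ of the top atom into bin $r$ and none to lower bins. This works because the quality-$1$ atom occupies quantiles $[1-s-\kink w,1]=[\quant_r-\kink w,1]$, which requires $\kink w\le w$. A second point to check is that the chosen valuation meets \Cref{asp:linear valuation function}.
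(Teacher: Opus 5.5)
Your proof is correct. For $r'\in[K-1]$ with $Q_{r'}<1$ it is exactly the paper's construction:
\begin{itemize}
\item the same valuation/type pair as \Cref{lem:stoploss-realize} (your $\theta+(1-\theta)\omega$ is the paper's $\theta(1-\omega)+\omega$);
\item the same prior on $\{0,1\}$;
\item the same kink $c=\sqrt{1-Q_{r'}}/(1+\sqrt{1-Q_{r'}})$, which the paper plugs in directly rather than optimizing over a free $c$.
\end{itemize}

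For $r'=K$ you depart from the paper. The paper uses a three-point prior on $\{0,t,1\}$ and puts the kink at the last bin's posterior mean. You instead reuse your two-point family with $s=0$ and $0<c<\varepsilon/w$. This is simpler, since the auxiliary atom $t$ is unnecessary. It also gives the strict inequality directly, whereas the paper's computation lands exactly on $1+(Q_K-Q_{K-1})-\varepsilon$.

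Your handling of the subcase $r'\in[K-1]$, $Q_{r'}=1$ is more careful than the paper's. The paper's Case~1 asserts that its kink $\sqrt{1-Q_{r'}}/(1+\sqrt{1-Q_{r'}})$ lies in $(0,1)$. Here it equals $0$, so the paper's prior collapses to a point mass at $0$. In fact the exact equality in the first bullet cannot be attained in this subcase. Take $K=2$ and $Q=(0,1,1)$: the policy is no information, and the bullet would demand a ratio of exactly $2$. Yet under no information every hinge ratio is strictly below $2$, and by the mixture representation so is every admissible ratio. So routing this subcase to the approximate argument is the right conclusion, and it is all that the tightness claim in \Cref{thm:quant partition:general policy} needs.

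Finally, your worry about \Cref{asp:linear valuation function} applies equally to the paper's own example. A $\delta\to 0$ perturbation would only approximate the first bullet's equality, but an exact fix exists. Take $v(0,\omega)=\omega$ and $v(\theta,\omega)=1+\theta\omega$ for $\theta>0$, with mass $1-c$ on type $0$ and mass $c$ spread uniformly over $(0,1]$. This valuation is strictly increasing in $\omega$ for every type, and its revenue function is still exactly $\max\{x,c\}$.
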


The proof of \Cref{lem:quantile partition:matching lower bound} relies on the following unified construction of the consumer's valuation function and type distribution.

\begin{example}[Consumer preference $(\buyerUtility, \typeCDF)$ with $\cvx_{\buyerUtility, \typeCDF}(\pooledmean) = \max\{\kink, \pooledmean\}$]
\label{lem:stoploss-realize}
Fix any $\kink \in (0,1)$. Let consumer type distribution $\typeCDF$ be a Bernoulli distribution supported on $\{0,1\}$ with mean $\kink$, and define the linear valuation function
\begin{align*}
    \buyerUtility(\type, \quality) \triangleq \type(1 - \quality) + \quality, \quad \text{for all } (\type, \quality) \in [0,1]^2.
\end{align*}
For this pair $(\buyerUtility, \typeCDF)$, the optimal posted-price revenue function is given by
\begin{align*}
\cvx_{\buyerUtility, \typeCDF}(\pooledmean) = \max\{\kink, \pooledmean\}, \quad \forall\, \pooledmean \in [0,1].
\end{align*}
\end{example}
\begin{proof}[Proof of \Cref{lem:quantile partition:matching lower bound}]
We prove the two cases separately.

\xhdr{Case 1: index $\quantileidx\primed \in [\signalnum - 1]$} In this case,
define auxiliary notations 
\begin{align*}
\auxratio_{\quantileidx\primed}
\triangleq
\frac{\sqrt{1 - \quant_{\quantileidx\primed}}}{1 + \sqrt{1 - \quant_{\quantileidx\primed}}}
\in (0,1),
\qquad
\priormean_{\quantileidx\primed}
\triangleq
(1 - \quant_{\quantileidx\primed}) + (\quant_{\quantileidx\primed} - \quant_{\quantileidx\primed - 1}) \cdot \auxratio_{\quantileidx\primed}
\in (0,1).
\end{align*}
Consider a Bernoulli quality distribution $\prior$ supported on $\{0,1\}$ with mean $\priormean_{\quantileidx\primed}$.

Let $u \sim \mathrm{Unif}(0,1)$ and define $\pooledmean = \prior^{-1}(u)$. Since quality distribution $\prior$ places mass $1 - \priormean_{\quantileidx\primed}$ at $0$ and mass $\priormean_{\quantileidx\primed}$ at $1$, there exists a threshold
\begin{align*}
u_0 \triangleq 1 - \priormean_{\quantileidx\primed}
\end{align*}
such that $\pooledmean = 0$ for $u \in (0, u_0]$ and $\pooledmean = 1$ for $u \in (u_0, 1]$.
We claim that $u_0 \in (\quant_{\quantileidx\primed - 1}, \quant_{\quantileidx\primed}]$. To see this, note that 
\begin{align*}
u_0 
= 1 - \priormean_{\quantileidx\primed} 
= 1 - \bigl[(1 - \quant_{\quantileidx\primed}) + (\quant_{\quantileidx\primed} - \quant_{\quantileidx\primed - 1}) \auxratio_{\quantileidx\primed}\bigr] 
= \quant_{\quantileidx\primed} - (\quant_{\quantileidx\primed} - \quant_{\quantileidx\primed - 1}) \auxratio_{\quantileidx\primed}.
\end{align*}
Since $\auxratio_{\quantileidx\primed} \in (0,1)$, it follows that
$\quant_{\quantileidx\primed - 1}
< u_0
< \quant_{\quantileidx\primed}$,
as required.

Let auxiliary notation $m_\quantileidx$ denote the posterior mean given signal $\quantileidx\in[\signalnum]$ under signaling scheme $\disclosurefn(\prior)$ induced from prior $\prior$. Since $u_0 \in (\quant_{\quantileidx\primed - 1}, \quant_{\quantileidx\primed})$, we have:
(i) For $\quantileidx < \quantileidx\primed$, the entire quantile interval $(\quant_{i-1}, \quant_i]$ lies in $(0, u_0]$, so $\pooledmean = 0$ almost surely and $m_\quantileidx = 0$.
(ii) For $\quantileidx > \quantileidx\primed$, the entire quantile interval lies in $(u_0, 1]$, so $\pooledmean = 1$ almost surely and $m_\quantileidx = 1$.
(iii) In quantile interval indexed by $\quantileidx\primed$, the subinterval $(u_0, \quant_{\quantileidx\primed}]$ has length $(\quant_{\quantileidx\primed} - \quant_{\quantileidx\primed - 1}) \auxratio_{\quantileidx\primed}$, which is a fraction $\auxratio_{\quantileidx\primed}$ of the total bin mass. Hence, $m_{\quantileidx\primed} = \auxratio_{\quantileidx\primed}$.
Thus, the induced posterior mean distribution is
\begin{align*}
\disclosurefn(\prior) = 
\quant_{\quantileidx\primed - 1} \cdot \delta_0
+ (\quant_{\quantileidx\primed} - \quant_{\quantileidx\primed - 1}) \cdot \delta_{\auxratio_{\quantileidx\primed}}
+ (1 - \quant_{\quantileidx\primed}) \cdot \delta_1.
\end{align*}
Now choose $(\buyerUtility, \typeCDF)$ as in \Cref{lem:stoploss-realize} with $\kink = \auxratio_{\quantileidx\primed}$. Then
\begin{align*}
\cvx_{\buyerUtility, \typeCDF}(\pooledmean) = \max\{\auxratio_{\quantileidx\primed}, \pooledmean\}, \quad \forall\, \pooledmean \in [0,1].
\end{align*}
The seller's revenue under $\disclosurefn(\prior)$ is
\begin{align*}
\Rev_{\buyerUtility,\prior,\typeCDF}(\disclosurefn(\prior))
= \expect[\pooledmean \sim \disclosurefn(\prior)]{\cvx(\pooledmean)} 
& = \quant_{\quantileidx\primed - 1} \cdot \auxratio_{\quantileidx\primed}
+ (\quant_{\quantileidx\primed} - \quant_{\quantileidx\primed - 1}) \cdot \auxratio_{\quantileidx\primed}
+ (1 - \quant_{\quantileidx\primed}) \cdot 1 \\
& = \quant_{\quantileidx\primed} \auxratio_{\quantileidx\primed} + (1 - \quant_{\quantileidx\primed}).
\end{align*}
The Bayesian optimal revenue (under full-information signaling scheme) is
\begin{align*}
\OPT_{\buyerUtility,\prior,\typeCDF}
&= \expect[\pooledmean \sim \prior]{\cvx(\pooledmean)} 
= (1 - \priormean_{\quantileidx\primed}) \cdot \auxratio_{\quantileidx\primed}
+ \priormean_{\quantileidx\primed} \cdot 1 
= \auxratio_{\quantileidx\primed} + \priormean_{\quantileidx\primed} (1 - \auxratio_{\quantileidx\primed}) 
\\
&= \Rev_{\buyerUtility,\prior,\typeCDF}(\disclosurefn(\prior)) + (\quant_{\quantileidx\primed} - \quant_{\quantileidx\primed - 1}) \auxratio_{\quantileidx\primed} (1 - \auxratio_{\quantileidx\primed}).
\end{align*}
Therefore,
\begin{align*}
\frac{\OPT_{\buyerUtility,\prior,\typeCDF}}{\Rev_{\buyerUtility,\prior,\typeCDF}(\disclosurefn(\prior))}
= 1 + \frac{(\quant_{\quantileidx\primed} - \quant_{\quantileidx\primed - 1}) \auxratio_{\quantileidx\primed} (1 - \auxratio_{\quantileidx\primed})}
{\quant_{\quantileidx\primed} \auxratio_{\quantileidx\primed} + (1 - \quant_{\quantileidx\primed})}.
\end{align*}
Substituting $\auxratio_{\quantileidx\primed} = \frac{\sqrt{1 - \quant_{\quantileidx\primed}}}{1 + \sqrt{1 - \quant_{\quantileidx\primed}}}$ and simplifying, one verifies that the fraction equals
\begin{align*}
\frac{\quant_{\quantileidx\primed} - \quant_{\quantileidx\primed - 1}}{(1 + \sqrt{1 - \quant_{\quantileidx\primed}})^2}.
\end{align*}
Hence,
\begin{align*}
\frac{\OPT_{\buyerUtility,\prior,\typeCDF}}{\Rev_{\buyerUtility,\prior,\typeCDF}(\disclosurefn(\prior))}
= 1 + \frac{\quant_{\quantileidx\primed} - \quant_{\quantileidx\primed - 1}}{(1 + \sqrt{1 - \quant_{\quantileidx\primed}})^2},
\end{align*}
as claimed.

\xhdr{Case 2: index $\quantileidx\primed = \signalnum$}
In this case, the maximizing term corresponds to the last quantile interval $[\quant_{\signalnum - 1}, 1]$, with length $\quant_\signalnum - \quant_{\signalnum - 1} = 1 - \quant_{\signalnum - 1}$.
Fix any $\varepsilon \in \bigl(0, 1 - \quant_{\signalnum - 1}\bigr)$. Consider a three-point quality distribution $\prior_\varepsilon$ supported on $\{0, t, 1\}$ for some $t \in (0,1)$ (to be chosen later), with
\begin{align*}
\prob[\prior_\varepsilon]{0} = 1 - \varepsilon, \quad
\prob[\prior_\varepsilon]{t} = \frac{\varepsilon}{2}, \quad
\prob[\prior_\varepsilon]{1} = \frac{\varepsilon}{2}.
\end{align*}
Since $1 - \varepsilon \geq \quant_{\signalnum - 1}$, the first $\signalnum - 1$ quantile intervals are fully contained in the atom at $0$, so their posterior means are $0$. The last quantile interval, of mass $1 - \quant_{\signalnum - 1}$, contains:
(i) mass $(1 - \varepsilon) - \quant_{\signalnum - 1} = (1 - \quant_{\signalnum - 1}) - \varepsilon$ at $0$,
(ii) mass $\varepsilon/2$ at $t$,
(iii) mass $\varepsilon/2$ at $1$.
Thus, the posterior mean in the last quantile interval is
\begin{align*}
m_\signalnum
= \frac{0 \cdot ((1 - \quant_{\signalnum - 1}) - \varepsilon) + t \cdot (\varepsilon/2) + 1 \cdot (\varepsilon/2)}{1 - \quant_{\signalnum - 1}}
= \frac{\varepsilon (1 + t)}{2 (1 - \quant_{\signalnum - 1})}.
\end{align*}
Choose $t \in (0,1)$ such that $m_\signalnum < t$. Now apply \Cref{lem:stoploss-realize} with $\kink = m_\signalnum$, yielding $\cvx(\pooledmean) = \max\{m_\signalnum, \pooledmean\}$ for all $\pooledmean\in[0, 1]$. By construction,
the revenue under $\disclosurefn(\prior_\varepsilon)$ is
\begin{align*}
\Rev_{\buyerUtility,\prior_\varepsilon,\typeCDF}(\disclosurefn(\prior_\varepsilon)) = \quant_\signalnum\cdot m_\signalnum + (1 - \quant_{\signalnum - 1}) \cdot m_\signalnum = m_\signalnum.
\end{align*}
The Bayesian optimal revenue (under full-information signaling scheme) is
\begin{align*}
\OPT_{\buyerUtility,\prior_\varepsilon,\typeCDF}
&= (1 - \varepsilon) \cdot m_\signalnum + \frac{\varepsilon}{2} \cdot t + \frac{\varepsilon}{2} \cdot 1 
= m_\signalnum + \frac{\varepsilon}{2}(1 + t - 2 m_\signalnum)
\\
&\overset{(a)}{=}
m_\signalnum + m_\signalnum (1 - \quant_{\signalnum - 1}) - \varepsilon m_\signalnum
= m_\signalnum (1 + (1 - \quant_{\signalnum - 1}) - \varepsilon)
\end{align*}
where equality (a) holds since $\frac{\varepsilon}{2}(1 + t) = m_\signalnum (1 - \quant_{\signalnum - 1})$ (implied by the definition of $m_\signalnum$).
Therefore,
\begin{align*}
\frac{\OPT_{\buyerUtility,\prior_\varepsilon,\typeCDF}}{\Rev_{\buyerUtility,\prior_\varepsilon,\typeCDF}(\disclosurefn(\prior))} = 1 + (1 - \quant_{\signalnum - 1}) - \varepsilon = 1 + (\quant_\signalnum - \quant_{\signalnum - 1}) - \varepsilon,
\end{align*}
as required.
Combining the two cases finishes the proof of \Cref{lem:quantile partition:matching lower bound}.
\end{proof}

We conclude this section by proving \Cref{thm:opt quantile partition:optimal policy,thm:quant partition:general policy}.

\begin{proof}[Proof of \Cref{thm:opt quantile partition:optimal policy,thm:quant partition:general policy}]
By \Cref{lem:revenue function necessary condition}, the indirect utility function $\cvx_{\buyerUtility, \typeCDF}$ induced by any linear valuation function $\buyerUtility$ and consumer type distribution $\typeCDF$ belongs to the function class $\cvxspace$ defined in \Cref{thm:rdd:optimal robust policy}. Consequently, the robust quality disclosure problem~$\RQDQuant$ is a restriction of the robust disclosure design problem~$\RDDQuant$, and thus
\begin{align*}
\sup_{\substack{\buyerUtility \in \buyerUtilityClass \\ \prior, \typeCDF \in \Delta([0,1])}}
\frac{\OPT_{\buyerUtility,\prior,\typeCDF}}{\Rev_{\buyerUtility,\prior,\typeCDF}(\disclosurefn(\prior))}
\leq
\voa(\disclosurefn \mid \cvxspace)
\end{align*}
for every {\KQuantilePartitionDisclosure} $\disclosurefn$.

Applying \Cref{thm:rdd:optimal robust policy}, we obtain an upper bound on the robust competitive ratio of any such policy, and in particular, the optimal robust competitive ratio of $\RQDQuant$ is at most $\voa_\signalnum^*$, achieved by the quantile threshold profile $\quants^*$ defined in~Eqn.\eqref{eq:qstar-def}.

To show this bound is tight, we invoke \Cref{lem:quantile partition:matching lower bound}, which constructs, for the policy $\disclosurefn$ parameterized by $\quants^*$, explicit primitives $(\buyerUtility, \typeCDF, \prior)$ such that the competitive ratio of $\disclosurefn$ equals $\voa_\signalnum^*$. This establishes that the upper bound is attained, and hence the policy $\disclosurefn^*$ with thresholds $\quants^*$ is optimal for $\RQDQuant$ with robust competitive ratio exactly $\voa_\signalnum^*$.

The same argument, i.e., upper bound from \Cref{thm:rdd:optimal robust policy} and matching lower bound from \Cref{lem:quantile partition:matching lower bound} applies to any {\KQuantilePartitionDisclosure} $\disclosurefn$, yielding the exact expression for its robust competitive ratio stated in \Cref{thm:quant partition:general policy}.
\end{proof}

\subsection{Solving \texorpdfstring{$\RDDQuantTitle$}{RDD}: Key Steps and Technical Lemmas}
\label{sec:rdd:proof of rdd optimal robust policy}

We now outline the main steps in solving the robust disclosure design problem $\RDDQuant$ defined in \Cref{thm:rdd:optimal robust policy}. The analysis leverages convex-analytic tools and extremal constructions to derive a closed-form expression for the robust competitive ratio of any {\KQuantilePartitionDisclosure}.

\xhdr{Step 1: Normalization via $\cvx(1) = 1$ (\Cref{lem:normalization})}
The objective in $\RDDQuant$ is invariant under positive scaling of the indirect utility function $\cvx$. Hence, without loss of generality, we may restrict attention to the normalized class $\cvxspace_1\subseteq \cvxspace$ such that $\cvxspace_1 \triangleq \{\cvx\in\cvxspace:\cvx(1) = 1\}$.

\begin{restatable}[Scale normalization]{lemma}{lemNormalization}
\label{lem:normalization}
Let $\cvxspace$ denote the space of all non-negative, non-decreasing, convex functions $\cvx : [0,1] \to \reals_+$ satisfying $\cvx(\pooledmean) \geq \pooledmean \cdot \cvx(1)$ for all $\pooledmean \in [0,1]$, and define the normalized subclass
$\cvxspace_1 \triangleq \{\cvx \in \cvxspace : \cvx(1) = 1\}$.
Then, for any distributions $\prior, \distOfMean \in \Delta([0,1])$,
\begin{align*}
\sup_{\cvx \in \cvxspace} \frac{\expect[\pooledmean \sim \prior]{\cvx(\pooledmean)}}{\expect[\pooledmean \sim \distOfMean]{\cvx(\pooledmean)}}
=
\sup_{\cvx \in \cvxspace_1} \frac{\expect[\pooledmean \sim \prior]{\cvx(\pooledmean)}}{\expect[\pooledmean \sim \distOfMean]{\cvx(\pooledmean)}}.
\end{align*}
\end{restatable}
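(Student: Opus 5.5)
The approach is a direct scaling argument; the only genuine subtlety is what happens when $\cvx(1) = 0$. Fix $\prior, \distOfMean \in \Delta([0,1])$. Since $\cvxspace_1 \subseteq \cvxspace$, the right-hand supremum is at most the left-hand one, so it suffices to prove the reverse inequality. To do so, take any $\cvx \in \cvxspace$ and produce an element of $\cvxspace_1$ whose ratio is at least as large, or handle $\cvx$ separately.

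\emph{Case $\cvx(1) > 0$.} Set $\tilde\cvx \triangleq \cvx/\cvx(1)$. Multiplying by a positive constant preserves non-negativity, monotonicity and convexity. The linear lower bound is homogeneous: $\tilde\cvx(\pooledmean) = \cvx(\pooledmean)/\cvx(1) \ge \pooledmean = \pooledmean\cdot\tilde\cvx(1)$. Hence $\tilde\cvx \in \cvxspace_1$. Both expectations scale by the same factor $1/\cvx(1)$, so the ratio is unchanged, whenever it is well defined.

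\emph{Case $\cvx(1) = 0$.} Here non-negativity and monotonicity force $0 \le \cvx(\pooledmean) \le \cvx(1) = 0$, so $\cvx \equiv 0$. Both expectations then vanish, and the ratio is the degenerate $0/0$. I would handle this with the natural convention that such $\cvx$ are excluded from the supremum (equivalently, the ratio is read as not exceeding anything attained by nonzero functions). Note that $\cvxspace_1$ is nonempty, since for instance $\cvx(\pooledmean)=\pooledmean$ lies in it, so excluding $\cvx \equiv 0$ leaves the supremum well posed.

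Combining the two cases, every admissible ratio on the left is attained on the right, which gives equality. I would also note in passing that denominators behave: for $\cvx \in \cvxspace_1$ we have $\expect[\distOfMean]{\cvx} \ge \expect[\distOfMean]{\pooledmean}$, but this may be $0$ if $\distOfMean=\delta_0$ and $\cvx(0)=0$. Such cases should be treated identically on both sides by the same convention (ratio $+\infty$ or excluded). Since the scaling map preserves whether the denominator is zero, the conventions match across the two suprema.

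The only real obstacle is the degenerate zero case, and it is purely definitional; the rest is routine homogeneity.
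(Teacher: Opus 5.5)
Your proposal is correct and follows the paper's proof: when $\cvx(1)>0$ you rescale by $\cvx(1)$, which keeps the function in $\cvxspace_1$ and leaves the ratio unchanged, and when $\cvx(1)=0$ you use non-negativity and monotonicity to get $\cvx\equiv 0$, which does not affect the supremum. Your further remark, that rescaling preserves whether the denominator vanishes and so any convention for zero denominators applies equally to both suprema, is a refinement the paper does not spell out, but it does not change the argument.
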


\xhdr{Step 2: Reduction to extremal functions (\Cref{lem:stoploss-representation,lem:reduce-to-extremal})}
The set $\cvxspace_1$ is convex and compact under the topology of pointwise convergence. Its extreme points admit a simple and explicit characterization (see \Cref{fig:right tight bound} for an illustration).

\begin{restatable}[Extremal representation]{lemma}{lemExtRepresentation}
\label{lem:stoploss-representation}
Let $\cvxspace_1$ denote the functional space defined in \Cref{lem:normalization}.
For every function $\cvx \in \cvxspace_1$, there exists a unique probability measure $\nu_\cvx \in \Delta([0,1])$ such that
\begin{align*}
\cvx(\pooledmean) = \int_0^1 \max\{\kink, \pooledmean\}\,\dd\nu_\cvx(\kink), \quad \forall\, \pooledmean \in [0,1].
\end{align*}
Consequently, the set of extreme points of $\cvxspace_1$ is given by $\ext(\cvxspace_1) = \left\{ \cvx_\kink(\cdot) \triangleq \max\{\kink, \cdot\} : \kink \in [0,1] \right\}$.
\end{restatable}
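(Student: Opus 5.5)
The plan is to build the measure $\nu_\cvx$ explicitly from the right derivative of $\cvx$, in the spirit of the Choquet (stop-loss) representation of convex functions. Fix $\cvx \in \cvxspace_1$ and let $\cvx'_+$ denote its right derivative on $[0,1)$. Convexity makes $\cvx'_+$ non-decreasing and right-continuous, and monotonicity of $\cvx$ makes it non-negative. Convexity together with $\cvx(1)=1$ and $\cvx(\pooledmean)\ge \pooledmean$ should force $\cvx'_+ \le 1$ on $[0,1)$. Indeed, if $\cvx'_+(y)>1$ for some $y<1$, then the chord slope $(1-\cvx(y))/(1-y)$ would exceed $1$. This gives $\cvx(y)<y$, contradicting the linear lower bound.

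Next define a candidate distribution function $N(\kink) \triangleq \cvx'_+(\kink)$ for $\kink\in[0,1)$, with $N(1)\triangleq 1$. This $N$ is non-decreasing, right-continuous, and takes values in $[0,1]$, so it is the CDF of a unique probability measure $\nu_\cvx$ on $[0,1]$. Under $\nu_\cvx$, the atom at $0$ has mass $\cvx'_+(0)$ and the atom at $1$ has mass $1-\lim_{\kink\uparrow 1}\cvx'_+(\kink)$.

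To verify the representation, write $H(\pooledmean)\triangleq\int \max\{\kink,\pooledmean\}\,\dd\nu_\cvx(\kink)$. At $\pooledmean = 1$ we get $H(1)=1=\cvx(1)$. For $\pooledmean<1$, the right derivative of $H$ equals $\nu_\cvx([0,\pooledmean]) = N(\pooledmean) = \cvx'_+(\pooledmean)$, because $\max\{\kink,\pooledmean\}$ has slope $\mathbf 1[\kink\le \pooledmean]$ from the right. Both $H$ and $\cvx$ are convex, hence absolutely continuous on $[0,1)$.

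The step I expect to be delicate is handling the endpoints. The issue is whether $\cvx$ is continuous at $1$, since convex functions on closed intervals may jump up at the boundary. Here $\cvx(1)=1$, and the bound $\cvx'_+\le 1$ means the left limit at $1$ is at most $1$. I would first try to rule out or absorb such a jump. A jump of size $\delta$ at $1$ corresponds exactly to extra atom mass at $\kink=1$, because $\max\{1,\pooledmean\}=1$ is constant. So I would define the atom at $1$ as $1-\nu_\cvx([0,1))$ and check that this matches the value needed. Continuity of $\cvx$ at $0$ holds from the right, since it is convex and non-decreasing.

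With equal right derivatives on $[0,1)$ and equal values at $1$, together with the jump accounting, integration gives $H\equiv\cvx$. One should also double-check consistency at $0$ via $\cvx(0)=\int \kink\,\dd\nu_\cvx$, which follows from the integration rather than being an extra condition.

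Uniqueness follows by reversing the derivative computation. Any $\nu$ representing $\cvx$ must satisfy $\nu([0,\pooledmean])=\cvx'_+(\pooledmean)$ for all $\pooledmean<1$. Total mass $1$ then pins down the atom at $1$, and these values of the CDF determine the measure.

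For the extreme points, note that each hinge $\cvx_\kink = \max\{\kink,\cdot\}$ lies in $\cvxspace_1$. Take the representation map $\nu\mapsto\int\max\{\kink,\cdot\}\,\dd\nu$, which is affine and, by the above, a bijection between $\Delta([0,1])$ and $\cvxspace_1$. Extreme points must then correspond, and the extreme points of $\Delta([0,1])$ are exactly the Dirac masses $\delta_\kink$. This yields $\ext(\cvxspace_1)=\{\cvx_\kink:\kink\in[0,1]\}$.
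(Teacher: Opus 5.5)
Your proposal follows the paper's proof essentially step by step. You use the same measure $\nu_\cvx([0,\pooledmean])=\cvx'_+(\pooledmean)$ with the residual atom at $1$, the same dominated-convergence computation of the hinge mixture's right derivative, and the same uniqueness argument. The paper anchors at $\pooledmean=0$ via $\cvx(0)=\int_0^1(1-\cvx'_+)$, while you anchor at $\pooledmean=1$; these are equivalent once $\cvx$ is known to be continuous at $1$.

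That continuity is the one step to tighten. A jump of $\cvx$ at $1$ could not be ``absorbed'' by the atom at $\kink=1$. Every mixture of hinges is $1$-Lipschitz, and an atom at $1$ only adds a constant, so no mixture can represent a jump. No such jump exists, however: $\cvx(\pooledmean)\ge\pooledmean$ gives $\lim_{\pooledmean\uparrow 1}\cvx(\pooledmean)\ge 1=\cvx(1)$, and monotonicity gives the reverse inequality. The paper uses this implicitly when it asserts that all secant slopes lie in $[0,1]$.

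Your affine-bijection argument for the extreme points is correct. It supplies a step the paper only states as a consequence.
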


Although the objective in program~$\RDDQuant$ is not linear in $\cvx$, the supremum over $\cvxspace_1$ is attained at an extreme point, as shown below.

\begin{restatable}[Reduction to extremal functions]{lemma}{lemReducToExt}
\label{lem:reduce-to-extremal}
Let $\cvxspace_1$ be as defined in \Cref{lem:normalization}. For any distributions $\prior, \distOfMean \in \Delta([0,1])$,
\begin{align*}
\sup_{\cvx \in \cvxspace_1} \frac{\expect[\pooledmean \sim \prior]{\cvx(\pooledmean)}}{\expect[\pooledmean \sim \distOfMean]{\cvx(\pooledmean)}}
=
\sup_{\kink \in [0,1]} \frac{\expect[\pooledmean \sim \prior]{\max\{\kink, \pooledmean\}}}{\expect[\pooledmean \sim \distOfMean]{\max\{\kink, \pooledmean\}}}.
\end{align*}
\end{restatable}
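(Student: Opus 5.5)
The plan is to prove the two inequalities separately. The direction ``$\geq$'' is immediate: by \Cref{lem:stoploss-representation}, each hinge function $\cvx_\kink(\cdot)=\max\{\kink,\cdot\}$ belongs to $\cvxspace_1$. Hence the supremum over $\cvxspace_1$ is at least the supremum over $\kink\in[0,1]$. All denominators are strictly positive, because every $\cvx\in\cvxspace_1$ is convex, non-decreasing, and satisfies $\cvx(\pooledmean)\ge \pooledmean$ with $\cvx(1)=1$. If $\cvx(0)=0$ the denominator could vanish only when $\distOfMean=\delta_0$; this degenerate case I would handle by the convention that the ratio is $+\infty$ or $0/0$, and check that both sides agree there.

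For ``$\leq$'', fix $\cvx\in\cvxspace_1$ and write it via \Cref{lem:stoploss-representation} as $\cvx(\pooledmean)=\int_0^1 \cvx_\kink(\pooledmean)\,\dd\nu_\cvx(\kink)$. Define
\begin{align*}
N(\kink)\triangleq \expect[\pooledmean\sim\prior]{\max\{\kink,\pooledmean\}},\qquad D(\kink)\triangleq \expect[\pooledmean\sim\distOfMean]{\max\{\kink,\pooledmean\}}.
\end{align*}
The integrand is bounded and jointly measurable in $(\kink,\pooledmean)$, since it is continuous. So Fubini/Tonelli gives
\begin{align*}
\expect[\prior]{\cvx}=\int N(\kink)\,\dd\nu_\cvx(\kink) \quad\text{and}\quad \expect[\distOfMean]{\cvx}=\int D(\kink)\,\dd\nu_\cvx(\kink).
\end{align*}

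Let $R\triangleq\sup_{\kink}N(\kink)/D(\kink)$. If $R=\infty$ there is nothing to show. Otherwise $N(\kink)\le R\, D(\kink)$ for every $\kink$ with $D(\kink)>0$. We have $D(\kink)\ge \kink>0$ for all $\kink>0$, so the only possible exception is $\kink=0$ with $D(0)=\expect[\distOfMean]{\pooledmean}=0$. In that case $\distOfMean=\delta_0$ and $N(0)/D(0)$ is already $+\infty$ (or $N(0)=0$ as well), so it is covered by the convention above. Integrating the inequality against $\nu_\cvx$ yields $\expect[\prior]{\cvx}\le R\,\expect[\distOfMean]{\cvx}$, which is the mediant-type inequality: a ratio of mixtures never exceeds the largest ratio of the components. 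Taking the supremum over $\cvx$ completes the argument.

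The only real obstacle is bookkeeping. First, the existence of the mixing measure is supplied entirely by \Cref{lem:stoploss-representation}, so I rely on it without reproving it. Second, the degenerate zero-denominator case at $\kink=0$ must be treated consistently. No compactness or attainment argument is needed, since the statement concerns suprema only.
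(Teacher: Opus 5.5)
Your proof is correct and follows the paper's argument essentially verbatim: each $\cvx_\kink$ lies in $\cvxspace_1$, which gives one direction, and the mixture representation from \Cref{lem:stoploss-representation} combined with the mediant inequality (a $\cvxmeasure$-weighted average of the ratios $A(\kink)/B(\kink)$) gives the other. Your handling of the zero-denominator case is more careful than the paper's: you note that a zero denominator can only occur when $\distOfMean=\delta_0$, and that there both sides equal $+\infty$ unless $\prior=\delta_0$ as well, whereas the paper asserts that a vanishing denominator forces a vanishing numerator, which is not true in that case.
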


\xhdr{Step 3: Closed-form expression for $\voa(\disclosurefn \mid \cvxspace)$ (Part (ii) of \Cref{thm:rdd:optimal robust policy})}
For any fixed $\kink \in [0,1]$, consider the extremal function $\cvx_\kink(\pooledmean) \triangleq \max\{\kink, \pooledmean\}$. Its piecewise-linear structure ensures that, for any quantile threshold profile $\quants = (\quant_0, \quant_1, \dots, \quant_\signalnum)$, only the unique quantile interval $[\quant_{\quantileidx - 1}, \quant_{\quantileidx}]$ containing $\prior(\kink)$ contributes a Jensen gap (i.e., a loss in the objective of program~$\RDDQuant$). All other intervals yield identical contributions to the numerator and denominator of the ratio in the objective. Consequently, the worst-case competitive ratio reduces to a one-dimensional optimization over the position of $\kink$ within its interval, which admits the closed-form solution stated in part (ii) of \Cref{thm:rdd:optimal robust policy}.

\xhdr{Step 4: Optimizing over quantile threshold profiles (Part (i) of \Cref{thm:rdd:optimal robust policy})}
Minimizing the expression in part (ii) of \Cref{thm:rdd:optimal robust policy} over all valid quantile threshold profiles $\quants$ yields the optimal robust disclosure policy described in part (i) of the same theorem. The optimum is attained when the competitive ratio is equalized across all quantile intervals, which leads to the backward recursion and boundary condition stated in the theorem.

\subsection{Proofs in the Analysis of \texorpdfstring{$\RDDQuantTitle$}{RDD}}
\label{sec:rdd:missing proofs}

In this section, we include all the proofs of the technical lemmas and then establish \Cref{thm:rdd:optimal robust policy}.

\lemNormalization*
\begin{proof}[Proof of \Cref{lem:normalization}]
Fix an arbitrary function $\cvx \in \cvxspace$. If $\cvx(1) = 0$, then since $\cvx$ is non-negative, non-decreasing, and satisfies $\cvx(\pooledmean) \geq \pooledmean \cdot \cvx(1) = 0$, it follows that $\cvx(\pooledmean) = 0$ for all $\pooledmean \in [0,1]$. Such a function contributes zero to both numerator and denominator, and thus does not affect the supremum.
Now suppose $\cvx(1) > 0$, and define the normalized function
\begin{align*}
\newcvx(\pooledmean) \triangleq \frac{\cvx(\pooledmean)}{\cvx(1)}, \quad \pooledmean \in [0,1].
\end{align*}
Since $\cvx$ is non-negative, non-decreasing, and convex, so is $\newcvx$. Moreover, for all $\pooledmean \in [0,1]$,
\begin{align*}
\newcvx(\pooledmean) = \frac{\cvx(\pooledmean)}{\cvx(1)} \geq \frac{\pooledmean \cdot \cvx(1)}{\cvx(1)} = \pooledmean = \pooledmean \cdot \newcvx(1),
\end{align*}
where we used the defining property $\cvx(\pooledmean) \geq \pooledmean \cdot \cvx(1)$. Hence $\newcvx \in \cvxspace$ and $\newcvx(1) = 1$, so $\newcvx \in \cvxspace_1$.
Furthermore, the scaling cancels in the ratio:
\begin{align*}
\frac{\expect[\pooledmean \sim \prior]{\newcvx(\pooledmean)}}{\expect[\pooledmean \sim \distOfMean]{\newcvx(\pooledmean)}}
= \frac{\expect[\pooledmean \sim \prior]{\cvx(\pooledmean)} / \cvx(1)}{\expect[\pooledmean \sim \distOfMean]{\cvx(\pooledmean)} / \cvx(1)}
= \frac{\expect[\pooledmean \sim \prior]{\cvx(\pooledmean)}}{\expect[\pooledmean \sim \distOfMean]{\cvx(\pooledmean)}}.
\end{align*}
Since every $\cvx \in \cvxspace$ with $\cvx(1) > 0$ corresponds to a $\newcvx \in \cvxspace_1$ yielding the same objective value, and functions with $\cvx(1) = 0$ are irrelevant to the supremum, the two suprema coincide. This completes the proof.
\end{proof}

We provide the following auxiliary lemma, which characterizes several structural properties of functions in the normalized indirect utility space $\cvxspace_1$.

\begin{lemma}
\label{lem:derivative-bounds}
Let $\cvxspace_1$ denote the functional space defined in \Cref{lem:normalization}. For any function $\cvx \in \cvxspace_1$:
\begin{enumerate}[label=(\roman*)]
    \item $\pooledmean \leq \cvx(\pooledmean) \leq 1$ for all $\pooledmean \in [0,1]$;
    \item the right derivative $\cvx'_+(\pooledmean)$ exists for all $\pooledmean \in [0,1)$, is non-decreasing, and satisfies $0 \leq \cvx'_+(\pooledmean) \leq 1$;
    \item $\cvx$ is absolutely continuous, and for all $\pooledmean \in [0,1]$,
    \begin{align*}
    \cvx(\pooledmean) - \cvx(0) = \int_0^\pooledmean \cvx'_+(t)\,\dd t.
    \end{align*}
\end{enumerate}
\end{lemma}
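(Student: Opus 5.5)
The plan is to derive all three parts from elementary facts about convex functions on a compact interval, using only the defining properties of $\cvxspace_1$: non-negativity, monotonicity, convexity, $\cvx(1)=1$, and the lower bound $\cvx(\pooledmean)\ge \pooledmean$.

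For part (i), the lower bound $\cvx(\pooledmean)\ge \pooledmean\cdot\cvx(1)=\pooledmean$ is just the defining inequality with the normalization $\cvx(1)=1$. The upper bound follows from monotonicity: $\cvx(\pooledmean)\le \cvx(1)=1$.

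For part (ii), I would use the standard monotonicity of difference quotients of a convex function. For $0\le \pooledmean<y\le 1$, the quotient $(\cvx(y)-\cvx(\pooledmean))/(y-\pooledmean)$ is non-decreasing in $y$, so the right derivative $\cvx'_+(\pooledmean)=\lim_{y\downarrow \pooledmean}$ exists in $[-\infty,\infty)$.
\begin{itemize}
\item \emph{Lower bound.} Monotonicity of $\cvx$ makes every quotient non-negative, so $\cvx'_+(\pooledmean)\ge 0$; in particular the limit is finite.
\item \emph{Upper bound.} Taking $y=1$ and using part (i), $\cvx'_+(\pooledmean)\le (\cvx(1)-\cvx(\pooledmean))/(1-\pooledmean)\le (1-\pooledmean)/(1-\pooledmean)=1$.
\item \emph{Monotonicity of $\cvx'_+$.} This is the usual three-slope inequality for convex functions.
\end{itemize}
The one subtle point is continuity at the endpoint $\pooledmean=1$. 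A convex function can jump upward at an endpoint, so this must be checked separately. Part (i) gives $\pooledmean\le \cvx(\pooledmean)\le 1$, hence $\cvx(\pooledmean)\to 1=\cvx(1)$ as $\pooledmean\uparrow 1$, which settles it. Continuity at $0$ follows from the slope bound: $|\cvx(y)-\cvx(0)|\le y$, since every difference quotient from $0$ lies in $[0,1]$.

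For part (iii), the key observation is that all difference quotients on $[0,1)$ lie in $[0,1]$. Together with continuity at $1$, this makes $\cvx$ $1$-Lipschitz on $[0,1]$, hence absolutely continuous. Its derivative exists almost everywhere, because a convex function is differentiable except at countably many points, and there it coincides with $\cvx'_+$. The fundamental theorem of calculus for absolutely continuous functions then gives $\cvx(\pooledmean)-\cvx(0)=\int_0^\pooledmean \cvx'_+(t)\,\dd t$.

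The only step requiring care is the endpoint behavior at $1$ used in parts (ii) and (iii), and it is resolved by the sandwich in part (i); everything else is routine.
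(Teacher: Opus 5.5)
Your proposal is correct and follows the paper's proof essentially step for step. Part (i) uses the defining inequality with monotonicity, part (ii) uses the secant-to-$1$ bound $\cvx'_+(\pooledmean)\le(1-\cvx(\pooledmean))/(1-\pooledmean)\le 1$, and part (iii) uses $1$-Lipschitzness plus the fundamental theorem of calculus for absolutely continuous functions; your separate endpoint checks are harmless but not needed, since $(\cvx(1)-\cvx(\pooledmean))/(1-\pooledmean)\in[0,1]$ already controls secants ending at $1$ (which is what the paper uses when it says all secant slopes lie in $[0,1]$).
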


\begin{proof}[Proof of \Cref{lem:derivative-bounds}]
We prove the three claims in order.

\xhdr{Part (i)} Since $\cvx \in \cvxspace_1$, it satisfies $\cvx(\pooledmean) \geq \pooledmean \cdot \cvx(1) = \pooledmean$ for all $\pooledmean \in [0,1]$. Moreover, because $\cvx$ is non-decreasing and $\cvx(1) = 1$, we have $\cvx(\pooledmean) \leq \cvx(1) = 1$. Hence $\pooledmean \leq \cvx(\pooledmean) \leq 1$.

\xhdr{Part (ii)} As $\cvx$ is convex on $[0,1]$, its right derivative $\cvx'_+(\pooledmean)$ exists for every $\pooledmean \in [0,1)$ and is non-decreasing. Non-negativity follows from monotonicity: since $\cvx$ is non-decreasing, $\cvx'_+(\pooledmean) \geq 0$.
To establish the upper bound, fix $\pooledmean \in [0,1)$. By convexity, for any $\pooledmean\primed \in (\pooledmean, 1]$,
\begin{align*}
\cvx'_+(\pooledmean) \leq \frac{\cvx(\pooledmean\primed) - \cvx(\pooledmean)}{\pooledmean\primed - \pooledmean}.
\end{align*}
Setting $\pooledmean\primed = 1$ and using $\cvx(1) = 1$, we obtain
\begin{align*}
\cvx'_+(\pooledmean) \leq \frac{1 - \cvx(\pooledmean)}{1 - \pooledmean}.
\end{align*}
Since $\cvx(\pooledmean) \geq \pooledmean$, it follows that $1 - \cvx(\pooledmean) \leq 1 - \pooledmean$, and thus
\begin{align*}
\cvx'_+(\pooledmean) \leq \frac{1 - \pooledmean}{1 - \pooledmean} = 1.
\end{align*}

\xhdr{Pary (iii)} From part (ii), all secant slopes of $\cvx$ lie in $[0,1]$, so $\cvx$ is $1$-Lipschitz and hence absolutely continuous. Furthermore, a convex function is differentiable almost everywhere on $[0,1]$, and at points of differentiability the derivative coincides with the right derivative. Therefore, $\cvx'(\pooledmean) = \cvx'_+(\pooledmean)$ for almost every $\pooledmean \in [0,1]$. By the fundamental theorem of calculus for absolutely continuous functions,
\begin{align*}
\cvx(\pooledmean) - \cvx(0) = \int_0^\pooledmean \cvx'(z)\,\dd z
= \int_0^\pooledmean \cvx'_+(z)\,\dd\ z,
\end{align*}
for all $\pooledmean \in [0,1]$, as claimed.
\end{proof}

\lemExtRepresentation*

\begin{proof}[Proof of \Cref{lem:stoploss-representation}]
By \Cref{lem:derivative-bounds}, the right derivative $\cvx'_+$ of any $\cvx \in \cvxspace_1$ is non-decreasing, takes values in $[0,1]$, and satisfies $\cvx(1) = 1$. We construct a probability measure $\nu_\cvx \in \Delta([0,1])$ as follows:
\begin{align*}
\nu_\cvx([0,\pooledmean]) \triangleq \cvx'_+(\pooledmean), \quad \forall\, \pooledmean \in [0,1),
\qquad
\nu_\cvx(\{1\}) \triangleq 1 - \lim_{\pooledmean \to 1^-} \cvx'_+(\pooledmean).
\end{align*}
Since $\cvx'_+$ is non-decreasing and bounded in $[0,1]$, the limit $\lim_{\pooledmean \to 1^-} \cvx'_+(\pooledmean)$ exists and lies in $[0,1]$, so $\nu_\cvx$ is a well-defined probability measure on $[0,1]$.
Define the function \begin{align}
    \label{eq:construct h via nu}
    \cvx\primed(\pooledmean) \triangleq \int_0^1 \max\{\kink, \pooledmean\}\,\dd\nu_\cvx(\kink), \quad \pooledmean \in [0,1]~.
\end{align}
We now show that $\cvx\primed = \cvx$. Fix $\pooledmean \in [0,1)$ and consider the right derivative of $\cvx\primed$. For any $\delta > 0$,
\begin{align*}
\frac{\cvx\primed(\pooledmean + \delta) - \cvx\primed(\pooledmean)}{\delta}
= \int_0^1 \frac{\max\{\kink, \pooledmean + \delta\} - \max\{\kink, \pooledmean\}}{\delta}\,\dd\nu_\cvx(\kink).
\end{align*}
For each $\kink \in [0,1]$, the integrand is bounded in $[0,1]$ and converges pointwise to $\mathbf{1}\{\kink \leq \pooledmean\}$ as $\delta \downarrow 0$. By the dominated convergence theorem,
\begin{align*}
(\cvx\primed)'_+(\pooledmean)
= \int_0^1 \mathbf{1}\{\kink \leq \pooledmean\}\,\dd\nu_\cvx(\kink)
= \nu_\cvx([0,\pooledmean])
= \cvx'_+(\pooledmean), \quad \forall\, \pooledmean \in [0,1).
\end{align*}
Moreover, $\cvx\primed(1) = \int_0^1 1\,\dd\nu_\cvx(\kink) = 1 = \cvx(1)$.
It remains to verify that $\cvx\primed(0) = \cvx(0)$. Note that
\begin{align*}
\cvx\primed(0) = \int_0^1 \kink\,\dd\nu_\cvx(\kink)
= \int_0^1 \prob[\kink \sim \nu_\cvx]{\kink > t}\,\dd t
= \int_0^1 \left(1 - \nu_\cvx([0,t])\right)\,\dd t
= \int_0^1 \left(1 - \cvx'_+(t)\right)\,\dd t.
\end{align*}
On the other hand, by part (iii) of \Cref{lem:derivative-bounds} and the fact that $\cvx(1) = 1$,
\begin{align*}
\cvx(1) - \cvx(0) = \int_0^1 \cvx'_+(t)\,\dd t
\quad \Rightarrow \quad
\cvx(0) = 1 - \int_0^1 \cvx'_+(t)\,\dd t = \int_0^1 (1 - \cvx'_+(t))\,\dd t.
\end{align*}
Thus $\cvx\primed(0) = \cvx(0)$. Since $\cvx\primed$ and $\cvx$ share the same right derivative on $[0,1)$ and agree at $0$, absolute continuity (part (iii) of \Cref{lem:derivative-bounds}) implies $\cvx\primed \equiv \cvx$ on $[0,1]$.

Finally, we establish uniqueness. Suppose there exist two probability measures $\nu_1, \nu_2 \in \Delta([0,1])$ such that their constructed function $\cvx\primed_1$ and $\cvx\primed_2$ defined according to Eqn.~\eqref{eq:construct h via nu} satisfies $\cvx\primed_1 = \cvx\primed_2$.
Then for all $\pooledmean \in [0,1)$,
\begin{align*}
\nu_1([0,\pooledmean]) = (\cvx\primed_1)'_+(\pooledmean) = (\cvx\primed_2)'_+(\pooledmean) = \nu_2([0,\pooledmean]).
\end{align*}
Hence $\nu_1$ and $\nu_2$ agree on all intervals of the form $[0,\pooledmean]$, which generate the Borel $\sigma$-algebra on $[0,1]$. Therefore, $\nu_1 = \nu_2$ and the proof of \Cref{lem:stoploss-representation} is completed as desired.
\end{proof}

\lemReducToExt*

\begin{proof}[Proof of \Cref{lem:reduce-to-extremal}]
Fix any distributions $\prior, \distOfMean \in \Delta([0, 1])$. By \Cref{lem:stoploss-representation}, for any $\cvx\in\cvxspace_1$ there exists a probability measure $\cvxmeasure$ such that $\cvx= \int_0^1 \cvx_\kink\,\dd\cvxmeasure(\kink)$, where $\cvx_\kink(\pooledmean)\triangleq \max\{\kink,\pooledmean\}$. Define auxiliary functions 
\begin{align*}
    A(\kink) \triangleq 
    \expect[\pooledmean\sim \prior]{\cvx_\kink(\pooledmean)},\qquad 
    B(\kink)\triangleq 
    \expect[\pooledmean\sim \distOfMean]{\cvx_\kink(\pooledmean)}~.
\end{align*}
As function $\cvx_\kink$ is non-negative, we have $A(\kink),B(\kink)\ge 0$. 
If $\int_0^1 B(\kink)\,\dd \cvxmeasure(\kink)=0$, then $B(\kink)=0$ for $\cvxmeasure$-almost every $\kink$, which forces $A(\kink)=0$ $\cvxmeasure$-almost everywhere as well (since $\cvx_\kink\ge 0$ and $\prior, \distOfMean$ are probabilities on $[0,1]$), so the ratio is $0$ and does not affect the supremum. Otherwise we have $\int_0^1 B(\kink)\,\dd \cvxmeasure(\kink) > 0$ and thus
\begin{align*}
    \frac{\expect[\pooledmean\sim \prior]{\cvx(\pooledmean)}}{\expect[\pooledmean\sim \distOfMean]{\cvx(\pooledmean)}}
    =
    \frac{\displaystyle\int_0^1 A(\kink)\,\dd \cvxmeasure(\kink)}{\displaystyle\int_0^1 B(\kink)\,\dd \cvxmeasure(\kink)}
    =
    \frac{\displaystyle\int_0^1 B(\kink)\,\frac{A(\kink)}{B(\kink)}\,\dd \cvxmeasure(\kink)}{\displaystyle\int_0^1 B(\kink)\,\dd \cvxmeasure(\kink)}
    \le
    \sup\limits_{\kink\in[0,1]}\frac{A(\kink)}{B(\kink)}~.
\end{align*}
Taking the supremum over $\cvx\in\cvxspace_1$ would give us
\begin{align*}
    \sup\limits_{\cvx\in\cvxspace_1}\frac{\expect[\pooledmean\sim \prior]{\cvx(\pooledmean)}}{\expect[\pooledmean\sim \distOfMean]{\cvx(\pooledmean)}}
    \le
    \sup\limits_{\kink\in[0,1]}\frac{\expect[\pooledmean\sim \prior]{\cvx_\kink(\pooledmean)}}{\expect[\pooledmean\sim \distOfMean]{\cvx_\kink(\pooledmean)}}~.
\end{align*}
The reverse inequality holds because each $\cvx_\kink\in\cvxspace_1$. This completes the proof of \Cref{lem:reduce-to-extremal}.
\end{proof}

\begin{proof}[Proof of \Cref{thm:rdd:optimal robust policy}]
We first prove Part (ii) of the theorem statement and then prove Part (i).

\xhdr{Part (ii) of \Cref{thm:rdd:optimal robust policy}}
By \Cref{lem:reduce-to-extremal}, it suffices to prove that
\begin{equation}
\label{eq:Gamma-reduce-hc}
    \sup_{\prior \in \Delta([0,1]),\, \kink \in [0,1]}
    \frac{\expect[\pooledmean \sim \prior]{\max\{\kink, \pooledmean\}}}{\expect[y \sim \disclosurefn_\quants(\prior)]{\max\{\kink, y\}}}
    =
    \max_{\quantileidx \in [\signalnum]} \left( 1 + \frac{\quant_{\quantileidx} - \quant_{\quantileidx-1}}{\left(1 + \sqrt{1 - \quant_\quantileidx}\right)^2} \right),
\end{equation}
Fix any {\KQuantilePartitionDisclosure} $\disclosurefn$ with quantile threshold profile $\quants = (\quant_0, \quant_1, \dots, \quant_\signalnum)$. We establish \eqref{eq:Gamma-reduce-hc} by proving matching upper and lower bounds.

\xhdr{Upper bound}
Fix any prior $\prior \in \Delta([0,1])$ and threshold $\kink \in [0,1]$. Let $u \sim \mathrm{Unif}(0,1)$ and define $\pooledmean = \prior^{-1}(u)$, so that $\pooledmean \sim \prior$. For each quantile interval indexed by $\quantileidx \in [\signalnum]$, let $\binrange_\quantileidx = (\quant_{\quantileidx-1}, \quant_\quantileidx]$ and denote the posterior mean in quantile interval $\quantileidx$ by $m_\quantileidx = \expect[u]{\pooledmean \mid u \in \binrange_\quantileidx}$.

Let $\quantileidx \in [\signalnum]$ be the unique index such that $\pooledmean_{\quantileidx-1} \leq \kink \leq \pooledmean_\quantileidx$, where $\pooledmean_\quantileidx = \prior^{-1}(\quant_\quantileidx)$. Define auxiliary notation
\begin{align*}
A \triangleq \kink \cdot \quant_{\quantileidx-1} + \sum\nolimits_{i\in[\quantileidx+1:\signalnum]} (\quant_i - \quant_{i-1}) \cdot m_i.
\end{align*}
Because $\prior^{-1}$ is non-decreasing, we have:
(i) For $i < r$: $\pooledmean \leq \kink$ almost surely on $\binrange_i$, so $\max\{\kink, \pooledmean\} = \kink$ and $\max\{\kink, m_i\} = \kink$;
(ii) For $i > r$: $\pooledmean \geq \kink$ almost surely on $\binrange_i$, so $\max\{\kink, \pooledmean\} = \pooledmean$ and $\max\{\kink, m_i\} = m_i$.
Combing all the pieces, we obtain
\begin{equation}
\label{eq:ratio-split}
    \frac{\expect[\pooledmean \sim \prior]{\max\{\kink, \pooledmean\}}}{\expect[y \sim \disclosurefn(\prior)]{\max\{\kink, y\}}}
    =
    \frac{A + (\quant_\quantileidx - \quant_{\quantileidx-1}) \cdot \expect[\pooledmean \mid u \in \binrange_\quantileidx]{\max\{\kink, \pooledmean\}}}{A + (\quant_\quantileidx - \quant_{\quantileidx-1}) \cdot \max\{\kink, m_\quantileidx\}}.
\end{equation}
Since $\max\{\kink, \cdot\}$ is convex, Jensen's inequality implies the numerator is at least the denominator, and the ratio is non-increasing in $A$. Hence, we may lower bound $A$ to upper bound the ratio.
Define auxiliary notation $b\triangleq \prior^{-1}(\quant_\quantileidx)$. By definition, $m_i \geq b$ for every $i\in[\quantileidx + 1:\signalnum]$ and thus
\begin{align*}
A \geq A_{\min} \triangleq \kink \cdot \quant_{\quantileidx-1} + b \cdot (1-\quant_{\quantileidx}).
\end{align*}
Substituting into \eqref{eq:ratio-split} yields
\begin{align*}
\frac{\expect[\pooledmean \sim \prior]{\max\{\kink, \pooledmean\}}}{\expect[y \sim \disclosurefn(\prior)]{\max\{\kink, y\}}}
\leq
\frac{A_{\min} + (\quant_\quantileidx - \quant_{\quantileidx-1}) \cdot \expect[\pooledmean \mid u \in \binrange_\quantileidx]{\max\{\kink, \pooledmean\}}}{A_{\min} + (\quant_\quantileidx - \quant_{\quantileidx-1}) \cdot \max\{\kink, m_\quantileidx\}}.
\end{align*}
Now, the conditional distribution of $\pooledmean$ given $u \in \binrange_\quantileidx$ is supported on $[0, b]$ and has mean $m_\quantileidx$. By convexity of $\max\{\kink, \cdot\}$, its expectation is maximized when the distribution is two-point at $0$ and $b$. A direct calculation shows that for any such distribution,
\begin{align*}
\expect[\pooledmean \mid u \in \binrange_\quantileidx]{\max\{\kink, \pooledmean\}} \leq \kink + m_\quantileidx \left(1 - \frac{\kink}{b}\right).
\end{align*}
Let $z \triangleq \kink / b \in [0,1]$. A case analysis on whether $m_\quantileidx \leq \kink$ or $m_\quantileidx \geq \kink$ shows that the right-hand side of the ratio is maximized when $m_\quantileidx = \kink$. Substituting $m_\quantileidx = \kink$ and simplifying gives
\begin{align*}
\frac{\expect[\pooledmean \sim \prior]{\max\{\kink, \pooledmean\}}}{\expect[y \sim \disclosurefn(\prior)]{\max\{\kink, y\}}}
\leq
1 + \frac{(\quant_\quantileidx - \quant_{\quantileidx-1}) z(1 - z)}{(1-\quant_\quantileidx) + z \quant_\quantileidx}.
\end{align*}
Term $\frac{z(1 - z)}{(1-\quant_\quantileidx) + z \quant_\quantileidx}$ attains its maximum at
\begin{align*}
z^* = \frac{\sqrt{1-\quant_\quantileidx}}{1 + \sqrt{1-\quant_\quantileidx}},
\end{align*}
yielding the value $\frac{1}{(1 + \sqrt{1-\quant_\quantileidx})^2}$. Therefore,
\begin{align*}
\frac{\expect[\pooledmean \sim \prior]{\max\{\kink, \pooledmean\}}}{\expect[y \sim \disclosurefn(\prior)]{\max\{\kink, y\}}}
\leq
1 + \frac{\quant_\quantileidx-\quant_{\quantileidx-1}}{(1 + \sqrt{1-\quant_\quantileidx})^2}.
\end{align*}
Taking the supremum over $\prior$ and $\kink$ and noting that $\quantileidx$ depends on these choices, we obtain
\begin{equation}
\label{eq:upper-bound}
    \voa(\disclosurefn_\quants \mid \cvxspace)
    \leq
    \max_{\quantileidx \in [\signalnum]} \left(1 + \frac{\quant_\quantileidx-\quant_{\quantileidx-1}}{(1 + \sqrt{1-\quant_\quantileidx})^2}\right).
\end{equation}
which completes the upper bound direction as desired.

\xhdr{Lower bound}
To see the matching lower bound, note that the inner supremum in $\RDDQuant$ is a relaxation of that in $\RQDQuant$. Thus, applying \Cref{lem:quantile partition:matching lower bound} shows the lower bound direction and complete the proof of Part (ii) of \Cref{thm:rdd:optimal robust policy} as desired. 

\xhdr{Part (i) of \Cref{thm:rdd:optimal robust policy}}
Due to Part (ii) shown above, it suffices to show that 
\begin{align*}
    \inf\limits_{\quants=(\quant_0,\quant_1,\dots,\quant_\signalnum)}\;
    \max_{\quantileidx \in [\signalnum]} \left(1 + \frac{\quant_\quantileidx-\quant_{\quantileidx-1}}{(1 + \sqrt{1-\quant_\quantileidx})^2}\right)
    = \voa^*_\signalnum
\end{align*}
\xhdr{Existence and uniqueness of $\voa_\signalnum^*$}
By \Cref{lem:Tk-monotone}, function $\recfn_\signalnum(\voa)$ is continuous and strictly increasing in $\voa\in(1,\infty)$, with $\recfn_\signalnum(1) = 0 < 1$ and $\recfn_\signalnum(\voa)\to\infty$. 
Thus, by the intermediate value theorem, there exists a unique $\voa_\signalnum^*>1$ such that $\recfn_\signalnum(\voa_\signalnum^*)=1$.

\smallskip
\noindent
We next show that $\inf\nolimits_{
\disclosurefn\in\disclosurefnClass_\signalnum
}
\voa(\disclosurefn \mid \cvxspace)=\voa_\signalnum^*$.

\xhdr{Lower bounding $\inf\nolimits_{
    \disclosurefn\in\disclosurefnClass_\signalnum
    }
    \voa(\disclosurefn \mid \cvxspace)\ge \voa_\signalnum^*$}
    Let $\disclosurefn$ be any {\KQuantilePartitionDisclosure} and set $\voa \triangleq \voa(\disclosurefn \mid \cvxspace)$. 
    Then $\voa(\disclosurefn \mid \cvxspace)\le \voa$ holds trivially, so by \Cref{lem:feasibility-equivalence} we have $\recfn_\signalnum(\voa)\ge 1$. 
    Since $\recfn_\signalnum$ is strictly increasing and $\recfn_\signalnum(\voa_\signalnum^*)=1$, it follows that $\voa\ge\voa_\signalnum^*$. 
    Thus, for any {\KQuantilePartitionDisclosure} $\disclosurefn$, we have 
    \begin{align*}
        \voa(\disclosurefn \mid \cvxspace)\ge \voa_\signalnum^*~,
    \end{align*}
    which further implies that $\inf_{\disclosurefn\in\disclosurefnClass_\signalnum} \voa(\disclosurefn \mid \cvxspace)\ge \voa_\signalnum^*$.

\xhdr{Constructing {\KQuantilePartitionDisclosure} $\disclosurefn^*$ such that $\voa(\disclosurefn^* \mid \cvxspace)=\voa_\signalnum^*$}
    We next construct {\KQuantilePartitionDisclosure} $\disclosurefn^*$ with $\voa(\disclosurefn^* \mid \cvxspace) = \voa_\signalnum^*$. 
    In particular, consider $\quants^* = (\quant_0^*,\quant_1^*,\dots,\quant_\signalnum^*)$ according to Eqn.~\eqref{eq:qstar-def}. By construction, $\quants^*$ is a feasible quantile threshold profile and for each $\quantileidx\in[\signalnum]$,
    \begin{align*}
        \quant^*_\quantileidx-\quant^*_{\quantileidx-1} 
        = 
        (\voa_\signalnum^*-1)\left(1+\sqrt{1-\quant_\quantileidx^*}\right)^2
    \end{align*}
    Taking the maximum over $\quantileidx$ and using Part (i) of \Cref{thm:rdd:optimal robust policy} shown above ensures $\voa(\disclosurefn^* \mid \cvxspace)=\voa_\signalnum^*$. 

Combining the two pieces together, we obtain $\inf_{\disclosurefn\in\disclosurefnClass_\signalnum} \voa(\disclosurefn \mid \cvxspace)=\voa_\signalnum^*$, and the constructed {\KQuantilePartitionDisclosure} $\disclosurefn^*$ parameterized by $\quants^*$ is optimal. This completes the Part (i) of \Cref{thm:rdd:optimal robust policy} as desired.
\end{proof}

\newpage
\bibliography{mybib}

\begin{thebibliography}{55}
\providecommand{\natexlab}[1]{#1}
\providecommand{\url}[1]{\texttt{#1}}
\expandafter\ifx\csname urlstyle\endcsname\relax
  \providecommand{\doi}[1]{doi: #1}\else
  \providecommand{\doi}{doi: \begingroup \urlstyle{rm}\Url}\fi

\bibitem[Agrawal et~al.(2023)Agrawal, Feng, and Tang]{AFT-23}
Shipra Agrawal, Yiding Feng, and Wei Tang.
\newblock Dynamic pricing and learning with bayesian persuasion.
\newblock \emph{Advances in Neural Information Processing Systems}, 36:\penalty0 59273--59285, 2023.

\bibitem[{Airbnb}(2025)]{AirbnbTopHomesHighlight}
{Airbnb}.
\newblock What to know about the highlight for top homes.
\newblock \url{https://www.airbnb.com/resources/hosting-homes/a/what-to-know-about-the-highlight-for-top-homes-666}, 2025.
\newblock Accessed: 2026-01-26.

\bibitem[Akerlof(1978)]{A-78}
George~A Akerlof.
\newblock The market for ``lemons'': Quality uncertainty and the market mechanism.
\newblock In \emph{Uncertainty in economics}, pages 235--251. Elsevier, 1978.

\bibitem[Alaei et~al.(2019)Alaei, Hartline, Niazadeh, Pountourakis, and Yuan]{AHNPY-19}
Saeed Alaei, Jason Hartline, Rad Niazadeh, Emmanouil Pountourakis, and Yang Yuan.
\newblock Optimal auctions vs. anonymous pricing.
\newblock \emph{Games and Economic Behavior}, 118:\penalty0 494--510, 2019.

\bibitem[Ali et~al.(2022)Ali, Haghpanah, Lin, and Siegel]{AHLS-22}
S~Nageeb Ali, Nima Haghpanah, Xiao Lin, and Ron Siegel.
\newblock How to sell hard information.
\newblock \emph{The Quarterly Journal of Economics}, 137\penalty0 (1):\penalty0 619--678, 2022.

\bibitem[Babaioff et~al.(2012)Babaioff, Kleinberg, and Paes~Leme]{BKP-12}
Moshe Babaioff, Robert Kleinberg, and Renato Paes~Leme.
\newblock Optimal mechanisms for selling information.
\newblock In \emph{Proceedings of the 13th ACM Conference on Electronic Commerce}, pages 92--109, 2012.

\bibitem[Babichenko et~al.(2022)Babichenko, Talgam-Cohen, Xu, and Zabarnyi]{BTXZ-22}
Yakov Babichenko, Inbal Talgam-Cohen, Haifeng Xu, and Konstantin Zabarnyi.
\newblock Regret-minimizing bayesian persuasion.
\newblock \emph{Games and Economic Behavior}, 136:\penalty0 226--248, 2022.

\bibitem[Bergemann et~al.(2022{\natexlab{a}})Bergemann, Cai, Velegkas, and Zhao]{BCVZ-22}
Dirk Bergemann, Yang Cai, Grigoris Velegkas, and Mingfei Zhao.
\newblock Is selling complete information (approximately) optimal?
\newblock \emph{arXiv preprint arXiv:2202.09013}, 2022{\natexlab{a}}.

\bibitem[Bergemann et~al.(2022{\natexlab{b}})Bergemann, Heumann, Morris, Sorokin, and Winter]{BHMSW-22}
Dirk Bergemann, Tibor Heumann, Stephen Morris, Constantine Sorokin, and Eyal Winter.
\newblock Optimal information disclosure in auctions.
\newblock \emph{American Economic Review: Insights}, 2022{\natexlab{b}}.

\bibitem[Bergemann et~al.(2026)Bergemann, Heumann, and Morris]{BHM-26}
Dirk Bergemann, Tibor Heumann, and Stephen Morris.
\newblock Screening with persuasion.
\newblock \emph{Journal of Political Economy}, 134\penalty0 (2):\penalty0 000--000, 2026.

\bibitem[Bro~Miltersen and Sheffet(2012)]{BS-12}
Peter Bro~Miltersen and Or~Sheffet.
\newblock Send mixed signals: earn more, work less.
\newblock In \emph{Proceedings of the 13th ACM Conference on Electronic Commerce}, pages 234--247, 2012.

\bibitem[Cai et~al.(2025)Cai, Li, and Wu]{CLW-25}
Yang Cai, Yingkai Li, and Jinzhao Wu.
\newblock Information disclosure makes simple mechanisms competitive.
\newblock \emph{arXiv preprint arXiv:2502.17809}, 2025.

\bibitem[Candogan and Strack(2021)]{CS-21}
Ozan Candogan and Philipp Strack.
\newblock Optimal disclosure of information to a privately informed receiver.
\newblock In \emph{Proceedings of the 22nd ACM Conference on Economics and Computation}, pages 263--263, 2021.

\bibitem[Castiglioni et~al.(2020)Castiglioni, Celli, Marchesi, and Gatti]{CCMG-20}
Matteo Castiglioni, Andrea Celli, Alberto Marchesi, and Nicola Gatti.
\newblock Online bayesian persuasion.
\newblock \emph{Advances in Neural Information Processing Systems}, 33, 2020.

\bibitem[Castiglioni et~al.(2021)Castiglioni, Marchesi, Celli, and Gatti]{CMCG-21}
Matteo Castiglioni, Alberto Marchesi, Andrea Celli, and Nicola Gatti.
\newblock Multi-receiver online bayesian persuasion.
\newblock In \emph{International Conference on Machine Learning}, pages 1314--1323. PMLR, 2021.

\bibitem[Chawla and Sivan(2014)]{CS-14}
Shuchi Chawla and Balasubramanian Sivan.
\newblock Bayesian algorithmic mechanism design.
\newblock \emph{ACM SIGecom Exchanges}, 13\penalty0 (1):\penalty0 5--49, 2014.

\bibitem[Chawla et~al.(2010)Chawla, Hartline, Malec, and Sivan]{CHMS-10}
Shuchi Chawla, Jason~D Hartline, David~L Malec, and Balasubramanian Sivan.
\newblock Multi-parameter mechanism design and sequential posted pricing.
\newblock In \emph{Proceedings of the forty-second ACM symposium on Theory of computing}, pages 311--320, 2010.

\bibitem[Chen et~al.(2025)Chen, Lin, Tang, and Tucker-Foltz]{CLTT-25}
Yiling Chen, Tao Lin, Wei Tang, and Jamie Tucker-Foltz.
\newblock Explainable information design.
\newblock \emph{arXiv preprint arXiv:2508.14196}, 2025.

\bibitem[Crapis et~al.(2017)Crapis, Ifrach, Maglaras, and Scarsini]{CIMS-17}
Davide Crapis, Bar Ifrach, Costis Maglaras, and Marco Scarsini.
\newblock Monopoly pricing in the presence of social learning.
\newblock \emph{Management Science}, 63\penalty0 (11):\penalty0 3586--3608, 2017.

\bibitem[Dworczak and Martini(2019)]{DM-19}
Piotr Dworczak and Giorgio Martini.
\newblock The simple economics of optimal persuasion.
\newblock \emph{Journal of Political Economy}, 127\penalty0 (5):\penalty0 1993--2048, 2019.

\bibitem[Dworczak and Pavan(2022)]{DP-22}
Piotr Dworczak and Alessandro Pavan.
\newblock Preparing for the worst but hoping for the best: Robust (bayesian) persuasion.
\newblock \emph{Econometrica}, 90\penalty0 (5):\penalty0 2017--2051, 2022.

\bibitem[Emek et~al.(2014)Emek, Feldman, Gamzu, PaesLeme, and Tennenholtz]{EFGPT-14}
Yuval Emek, Michal Feldman, Iftah Gamzu, Renato PaesLeme, and Moshe Tennenholtz.
\newblock Signaling schemes for revenue maximization.
\newblock \emph{ACM Transactions on Economics and Computation (TEAC)}, 2\penalty0 (2):\penalty0 1--19, 2014.

\bibitem[Es{\H{o}} and Szentes(2007)]{ES-07}
P{\'e}ter Es{\H{o}} and Balazs Szentes.
\newblock Optimal information disclosure in auctions and the handicap auction.
\newblock \emph{The Review of Economic Studies}, 74\penalty0 (3):\penalty0 705--731, 2007.

\bibitem[Feng and Jin(2024)]{FJ-24}
Yiding Feng and Yaonan Jin.
\newblock Beyond regularity: Simple versus optimal mechanisms, revisited.
\newblock \emph{arXiv preprint arXiv:2411.03583}, 2024.

\bibitem[Feng et~al.(2022)Feng, Tang, and Xu]{FTX-22}
Yiding Feng, Wei Tang, and Haifeng Xu.
\newblock Online bayesian recommendation with no regret.
\newblock In \emph{Proceedings of the 23rd ACM Conference on Economics and Computation}, pages 818--819, 2022.

\bibitem[Feng et~al.(2024)Feng, Ho, and Tang]{FHT-24}
Yiding Feng, Chien-Ju Ho, and Wei Tang.
\newblock Rationality-robust information design: Bayesian persuasion under quantal response.
\newblock In \emph{Proceedings of the 2024 Annual ACM-SIAM Symposium on Discrete Algorithms (SODA)}, pages 501--546. SIAM, 2024.

\bibitem[Gan and Li(2024)]{GL-24}
Tan Gan and Hongcheng Li.
\newblock Robust pricing for quality disclosure.
\newblock \emph{arXiv preprint arXiv:2404.06019}, 2024.

\bibitem[Gradwohl et~al.(2021)Gradwohl, Hahn, Hoefer, and Smorodinsky]{GHHS-21}
Ronen Gradwohl, Niklas Hahn, Martin Hoefer, and Rann Smorodinsky.
\newblock Algorithms for persuasion with limited communication.
\newblock In \emph{Proceedings of the Thirty-Second Annual ACM-SIAM Symposium on Discrete Algorithms}, pages 637--652, 2021.

\bibitem[Grossman(1981)]{G-81}
Sanford~J Grossman.
\newblock The informational role of warranties and private disclosure about product quality.
\newblock \emph{The Journal of law and Economics}, 24\penalty0 (3):\penalty0 461--483, 1981.

\bibitem[Guo et~al.(2025)Guo, Hao, and Shi]{GHS-25}
Yingni Guo, Li~Hao, and Xianwen Shi.
\newblock Optimal discriminatory disclosure.
\newblock \emph{Journal of Economic Theory}, 224:\penalty0 105972, 2025.

\bibitem[Hartline and Lucier(2010)]{HL-10}
Jason~D Hartline and Brendan Lucier.
\newblock Bayesian algorithmic mechanism design.
\newblock In \emph{Proceedings of the forty-second ACM symposium on Theory of computing}, pages 301--310, 2010.

\bibitem[Hartline and Roughgarden(2009)]{HR-09}
Jason~D Hartline and Tim Roughgarden.
\newblock Simple versus optimal mechanisms.
\newblock In \emph{Proceedings of the 10th ACM conference on Electronic commerce}, pages 225--234, 2009.

\bibitem[Ifrach et~al.(2019)Ifrach, Maglaras, Scarsini, and Zseleva]{IMSZ-19}
Bar Ifrach, Costis Maglaras, Marco Scarsini, and Anna Zseleva.
\newblock Bayesian social learning from consumer reviews.
\newblock \emph{Operations Research}, 67\penalty0 (5):\penalty0 1209--1221, 2019.

\bibitem[Kamenica and Gentzkow(2011)]{KG-11}
Emir Kamenica and Matthew Gentzkow.
\newblock Bayesian persuasion.
\newblock \emph{American Economic Review}, 101\penalty0 (6):\penalty0 2590--2615, 2011.

\bibitem[Kolotilin et~al.(2017)Kolotilin, Mylovanov, Zapechelnyuk, and Li]{KMZL-17}
Anton Kolotilin, Tymofiy Mylovanov, Andriy Zapechelnyuk, and Ming Li.
\newblock Persuasion of a privately informed receiver.
\newblock \emph{Econometrica}, 85\penalty0 (6):\penalty0 1949--1964, 2017.

\bibitem[Kolotilin et~al.(2022)Kolotilin, Mylovanov, and Zapechelnyuk]{KMZ-22}
Anton Kolotilin, Timofiy Mylovanov, and Andriy Zapechelnyuk.
\newblock Censorship as optimal persuasion.
\newblock \emph{Theoretical Economics}, 17\penalty0 (2):\penalty0 561--585, 2022.

\bibitem[Kolotilin et~al.(2025)Kolotilin, Li, and Zapechelnyuk]{KLZ-25}
Anton Kolotilin, Hongyi Li, and Andy Zapechelnyuk.
\newblock On monotone persuasion.
\newblock In \emph{Proceedings of the 26th ACM Conference on Economics and Computation}, pages 253--253, 2025.

\bibitem[Kosterina(2022)]{K-22}
Svetlana Kosterina.
\newblock Persuasion with unknown beliefs.
\newblock \emph{Theoretical Economics}, 17\penalty0 (3):\penalty0 1075--1107, 2022.

\bibitem[Li and Shi(2017)]{LS-17}
Hao Li and Xianwen Shi.
\newblock Discriminatory information disclosure.
\newblock \emph{American Economic Review}, 107\penalty0 (11):\penalty0 3363--3385, 2017.

\bibitem[Li et~al.(2025)Li, Pu, and Yang]{LPY-25}
Ming Li, Binyan Pu, and Renkun Yang.
\newblock The design of quality disclosure policy and the limits to competition.
\newblock In \emph{Proceedings of the 26th ACM Conference on Economics and Computation}, pages 640--640, 2025.

\bibitem[Lin and Li(2025)]{LL-25}
Tao Lin and Ce~Li.
\newblock Information design with unknown prior.
\newblock In \emph{16th Innovations in Theoretical Computer Science Conference (ITCS 2025)}, pages 72--1. Schloss Dagstuhl--Leibniz-Zentrum f{\"u}r Informatik, 2025.

\bibitem[Liu et~al.(2021)Liu, Shen, and Xu]{LSX-21}
Shuze Liu, Weiran Shen, and Haifeng Xu.
\newblock Optimal pricing of information.
\newblock In \emph{Proceedings of the 22nd ACM Conference on Economics and Computation}, pages 693--693, 2021.

\bibitem[Mensch(2021)]{M-21}
Jeffrey Mensch.
\newblock Monotone persuasion.
\newblock \emph{Games and Economic Behavior}, 130:\penalty0 521--542, 2021.

\bibitem[Milgrom and Shannon(1994)]{MS-94}
Paul Milgrom and Chris Shannon.
\newblock Monotone comparative statics.
\newblock \emph{Econometrica}, pages 157--180, 1994.

\bibitem[Milgrom(1981)]{M-81}
Paul~R Milgrom.
\newblock Good news and bad news: Representation theorems and applications.
\newblock \emph{The Bell Journal of Economics}, pages 380--391, 1981.

\bibitem[Onuchic and Ray(2023)]{OR-23}
Paula Onuchic and Debraj Ray.
\newblock Conveying value via categories.
\newblock \emph{Theoretical Economics}, 18\penalty0 (4):\penalty0 1407--1439, 2023.

\bibitem[Ottaviani and Prat(2001)]{OP-01}
Marco Ottaviani and Andrea Prat.
\newblock The value of public information in monopoly.
\newblock \emph{Econometrica}, 69\penalty0 (6):\penalty0 1673--1683, 2001.

\bibitem[Rayo and Segal(2010)]{RS-10}
Luis Rayo and Ilya Segal.
\newblock Optimal information disclosure.
\newblock \emph{Journal of political Economy}, 118\penalty0 (5):\penalty0 949--987, 2010.

\bibitem[Roughgarden(2015)]{Rou-15}
Tim Roughgarden.
\newblock Approximately optimal mechanism design: Motivation, examples, and lessons learned.
\newblock \emph{ACM SIGecom Exchanges}, 13\penalty0 (2):\penalty0 4--20, 2015.

\bibitem[Shin et~al.(2022)Shin, Vaccari, and Zeevi]{SVZ-22}
Dongwook Shin, Stefano Vaccari, and Assaf Zeevi.
\newblock Dynamic pricing with online reviews.
\newblock \emph{Management Science}, 2022.

\bibitem[Strack and Yang(2024)]{SY-24}
Philipp Strack and Kai~Hao Yang.
\newblock Privacy-preserving signals.
\newblock \emph{Econometrica}, 92\penalty0 (6):\penalty0 1907--1938, 2024.

\bibitem[{Upwork Support}()]{Upwork}
{Upwork Support}.
\newblock Learn about {Upwork}'s talent badges.
\newblock \url{https://support.upwork.com/hc/en-us/articles/360049702614-Learn-about-Upwork-s-talent-badges}.
\newblock Accessed: 2026-01.

\bibitem[Wei and Green(2024)]{WG-24}
Dong Wei and Brett Green.
\newblock (reverse) price discrimination with information design.
\newblock \emph{American Economic Journal: Microeconomics}, 16\penalty0 (2):\penalty0 267--295, 2024.

\bibitem[Wilson(1985)]{W-85}
Robert Wilson.
\newblock Game-theoretic analysis of trading processes.
\newblock Technical report, 1985.

\bibitem[Zu et~al.(2024)Zu, Iyer, and Xu]{ZIX-21}
You Zu, Krishnamurthy Iyer, and Haifeng Xu.
\newblock Learning to persuade on the fly: Robustness against ignorance.
\newblock \emph{Operations Research}, 2024.

\end{thebibliography}
\newpage

\appendix

\section{Relaxing the Quality Linearity in Valuation Function}
\label{apx:extensions general val}

In this section, we discuss extensions of some of our results under a relaxation of the quality-linearity assumption in the valuation function (\Cref{asp:linear valuation function}).

\begin{assumption}[Constant marginal rate of substitution (CMRS)]
\label{assump:multiplicative separability}
The valuation function $\buyerUtility(\type, \quality)$ is increasing in quality $\quality \in [0,1]$ for every fixed type $\type \in [0,1]$, and non-decreasing in type $\type$ for every fixed quality $\quality \in [0,1]$, and it satisfies that for any $\type\in[0, 1]$, and any $\quality\primed,\quality\doubleprimed\in[0, 1]$,
\begin{align*}
    \frac{\left.\frac{\partial \buyerUtility(\type, \quality)}{\partial \type}\right|_{\type, \quality\primed}}{\left.\frac{\partial \buyerUtility(\type, \quality)}{\partial \type}\right|_{\type, \quality\doubleprimed}}=\frac{\cmrsfn\left(\quality\primed\right)}{\cmrsfn\left(\quality\doubleprimed\right)}~,
\end{align*}
where $\cmrsfn(\cdot): [0, 1] \rightarrow \R$ is some function that only depends on the quality.
\end{assumption}

Intuitively, this condition says that the relative effect of a higher type on valuation at two different qualities is pinned down solely by those qualities, and does not depend on the level of $\type$. Equivalently, type acts as a ``common shifter'' of marginal willingness-to-pay across qualities, with the quality-dependent scaling captured $\typefn(\cdot)$.
Many valuation functions including $\buyerUtility(\type, \quality) = \type+\quality$, $\buyerUtility(\type, \quality) = \type\cdot\quality$
and $\buyerUtility(\type, \quality) = \type\cdot\quality + \quality$ all satisfy this assumption.

We denote by $\buyerUtilityClass_{\cmrs}$ the set of valuation functions that satisfy \Cref{assump:multiplicative separability}.
\Cref{assump:multiplicative separability} allows the valuation function in $\buyerUtilityClass_{\cmrs}$ to depend on the quality in a non-linear way. 
However, we note that \Cref{assump:multiplicative separability} is not a strict relaxation of the \Cref{asp:linear valuation function}.
Consider the following valuation function $\buyerUtility(\type, \quality) = (1+\type^2)\quality+\type$. 
Clearly this function satisfies \Cref{asp:linear valuation function} but does not satisfy the above \Cref{assump:multiplicative separability}.

\begin{proposition}
\label{prop:full infor optimal no infor 2 approx cmrs}
    Fix any valuation function $\buyerUtility \in \buyerUtilityClass_{\cmrs}$, consumer type distribution $\typeCDF \in \Delta([0, 1])$, and product quality distribution $\prior \in \Delta([0, 1])$. 
    The full-information signaling scheme $\fullinfor$ achieves the optimal revenue, i.e.,
    \begin{align*}
        \Rev(\fullinfor) = \OPT~.
    \end{align*}
    Moreover, the no-information signaling scheme $\noinfor$, guarantees at least half of the optimal revenue, i.e.,
    \begin{align*}
        \Rev(\noinfor) \geq \frac{1}{2} \cdot \OPT~.
    \end{align*}
    Furthermore, this approximation ratio of $2$ is tight for the no-information signaling scheme $\noinfor$.
\end{proposition}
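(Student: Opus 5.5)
The plan is to derive a structural representation of CMRS valuations, get optimality of full information from convexity, and then bound no-information by splitting the valuation into a type-dependent part and a type-independent part.

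\emph{Step 1 (structure of CMRS valuations).} Fix a reference quality $\quality_0$ and set $\qualityfntwo(\type)\triangleq \partial_\type \buyerUtility(\type,\quality_0)/\cmrsfn(\quality_0)$. The CMRS condition gives $\partial_\type\buyerUtility(\type,\quality)=\cmrsfn(\quality)\,\qualityfntwo(\type)$ for all $(\type,\quality)$. Integrating in $\type$ then yields a representation
\begin{align*}
\buyerUtility(\type,\quality)=\cmrsfn(\quality)\,\newtypefn(\type)+\qualityfnone(\quality),
\end{align*}
with $\newtypefn(0)=0$ and $\qualityfnone(\quality)=\buyerUtility(0,\quality)\ge 0$. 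Flipping signs of $\cmrsfn$ and $\newtypefn$ if necessary, and using that $\buyerUtility$ is non-decreasing in $\type$, I will take $\newtypefn$ non-decreasing with $\newtypefn\ge 0$ and $\cmrsfn\ge 0$. I expect the main obstacle to be this normalization when derivatives fail to exist (e.g., atoms or kinks in $\type$). There I would instead phrase the assumption as ``$\buyerUtility(\type,\quality)-\buyerUtility(0,\quality)$ factors as $\cmrsfn(\quality)\newtypefn(\type)$'' and argue with increments rather than derivatives.

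\emph{Step 2 (full information is optimal).} Under a posterior $\posterior$, a consumer of type $\type$ has expected value $\expect[\quality\sim\posterior]{\buyerUtility(\type,\quality)}$, which is non-decreasing in $\type$. As in \Cref{lem:revenue function necessary condition}, the optimal price is therefore a cutoff type, and the revenue is
\begin{align*}
R(\posterior)=\sup_\type\,(1-\typeCDF(\type))\,\expect[\quality\sim\posterior]{\buyerUtility(\type,\quality)}.
\end{align*}
This is a supremum of functionals linear in $\posterior$, hence convex in $\posterior$. Any signaling scheme induces a distribution of posteriors averaging to $\prior$, while full information induces point-mass posteriors, which are the extreme refinement. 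Jensen's inequality in the space of beliefs then gives $\Rev(\signalscheme)=\expect{R(\posterior)}\le \expect[\quality\sim\prior]{R(\delta_\quality)}=\Rev(\fullinfor)$.

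\emph{Step 3 (the $2$-approximation).} Let $R_\newtypefn\triangleq\sup_\type(1-\typeCDF(\type))\newtypefn(\type)$. Using $\sup(f+g)\le\sup f+\sup g$,
\begin{align*}
\OPT=\expect[\quality]{\sup_\type(1-\typeCDF(\type))(\cmrsfn(\quality)\newtypefn(\type)+\qualityfnone(\quality))}\le \expect{\cmrsfn(\quality)}\,R_\newtypefn+\expect{\qualityfnone(\quality)}.
\end{align*}
Under no information, two particular prices give lower bounds on $\Rev(\noinfor)$:
\begin{enumerate}[label=(\alph*)]
\item Choosing the cutoff type that (nearly) maximizes $(1-\typeCDF(\type))\newtypefn(\type)$, and dropping the nonnegative $\qualityfnone$ term, gives $\Rev(\noinfor)\ge\expect{\cmrsfn}R_\newtypefn$.
\item Pricing at the lowest type's expected value $\expect{\qualityfnone(\quality)}$ sells to everyone, since valuations are non-decreasing in $\type$. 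This gives $\Rev(\noinfor)\ge\expect{\qualityfnone}$.
\end{enumerate}
Hence $\OPT\le 2\Rev(\noinfor)$.

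\emph{Step 4 (tightness).} The valuation $\buyerUtility(\type,\quality)=\type(1-\quality)+\quality$ from \Cref{lem:stoploss-realize} satisfies CMRS with $\cmrsfn(\quality)=1-\quality$. Taking $\signalnum=1$ in Case 2 of \Cref{lem:quantile partition:matching lower bound} (where the single bin is the last bin and the policy is exactly $\noinfor$) yields instances with ratio $>2-\varepsilon$ for every $\varepsilon$. So the factor $2$ is tight.
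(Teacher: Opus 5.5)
Your proposal is correct and follows the paper's proof essentially step for step. Both use the separable representation (\Cref{lem:separable}), convexity of the posted-price revenue in the posterior plus Jensen for optimality of $\fullinfor$, a split of $\OPT$ into a type-dependent Myerson-type term and a type-independent term each matched by one no-information price, and tightness via the $\signalnum=1$ case of \Cref{lem:quantile partition:matching lower bound} with $\buyerUtility(\type,\quality)=\type(1-\quality)+\quality$. Two small points in your favor: your cutoff-type convexity argument in Step 2 needs only monotonicity of $\buyerUtility$ in $\type$ (not CMRS), and your term $\sup_{\type}(1-\typeCDF(\type))\newtypefn(\type)$ is the correct general quantity, whereas the paper's $\Myer(\typeCDF)$ implicitly takes $\typefn(\type)=\type$ in the separable form.
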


For linear valuation functions $\buyerUtility\in\buyerUtilityClass$ or CMRS valuation $\buyerUtility\in\buyerUtilityClass_{\cmrs}$, we have established that no-information signaling scheme is always an $2$-approximation to the optimal revenue. 
Unfortunately, this $2$-approximate optimality of the no-information signaling scheme need not continue to hold if one relaxes the condition of the constant marginal rate of substitution further.
Below, we show that when one considers a valuation function that satisfies a single-crossing condition, the approximation ratio of the no-information signaling scheme may exceed the factor $2$.
\begin{assumption}[Milgrom-Shannon single-crossing, \citealp{MS-94}]
\label{assump:single-crossing}
The valuation function $\buyerUtility(\type, \quality)$ is increasing in quality $\quality \in [0,1]$ for every fixed type $\type \in [0,1]$, and non-decreasing in type $\type$ for every fixed quality $\quality \in [0,1]$, and it satisfies that for any $\type_1 > \type_2$ and $\quality_1 > \quality_2$, we have that 
\begin{align*}
    \buyerUtility(\type_1, \quality_1)
    - 
    \buyerUtility(\type_2, \quality_1)
    > 
    \buyerUtility(\type_1, \quality_2)
    - 
    \buyerUtility(\type_2, \quality_2)~.
\end{align*}
\end{assumption}

\begin{proposition}
\label{prop:beyond 2 sc}
There exists an instance with a valuation function $\buyerUtility$ satisfying \Cref{assump:single-crossing}, and product quality distribution $\prior$ and a type distribution $\typeCDF$ such that 
$2.73 \Rev(\noinfor) \le \OPT$.
\end{proposition}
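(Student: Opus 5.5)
We will prove \Cref{prop:beyond 2 sc} by building an explicit finite instance in which the demand side is ``staircase-shaped''. The point is to exploit the one feature that single crossing permits and \Cref{asp:linear valuation function} forbids: the revenue at a common belief no longer depends only on the posterior mean. Under no information, each type values the product at its expected valuation $\bar\buyerUtility(\type) \triangleq \expect[\quality\sim\prior]{\buyerUtility(\type,\quality)}$. So $\Rev(\noinfor)$ is just the monopoly revenue $\sup_\price \price\cdot\prob[\type\sim\typeCDF]{\bar\buyerUtility(\type)\ge\price}$ of the scalar distribution of $\bar\buyerUtility(\type)$. Under full information, $\Rev(\fullinfor)=\expect[\quality\sim\prior]{\sup_\price \price\cdot\prob[\type\sim\typeCDF]{\buyerUtility(\type,\quality)\ge\price}}$. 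This is a lower bound on $\OPT$, which is all we need.

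\textbf{Construction.} Fix $n$ types $\type_1<\dots<\type_n$ with tail probabilities $s_j\triangleq\prob{\type\ge\type_j}$, and $n$ qualities $\quality_1<\dots<\quality_n$ with prior masses $f_j$. Start from the ``triangular'' valuation $\buyerUtility_0(\type_i,\quality_j)=a_j\,\indicator{i\ge j}$ with $a_1\le a_2\le\dots\le a_n$. Its increments are
\begin{align*}
\buyerUtility_0(\type_{i+1},\quality_j)-\buyerUtility_0(\type_i,\quality_j)=a_j\indicator{j=i+1},
\end{align*}
which increase weakly in $\quality_j$ as single crossing requires. Then add a small perturbation $\eta\,(\type\quality+\type+\quality)$ and extend $\buyerUtility$ to $[0,1]^2$ by piecewise-linear interpolation. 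This makes $\buyerUtility$ strictly increasing in $\quality$ and gives strict increasing differences. The revenue expressions change by $O(\eta)$, so it suffices to get a ratio strictly above $2.73$ at $\eta=0$ and then take $\eta$ small.

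\textbf{Revenue expressions at $\eta=0$.} At quality $\quality_j$, pricing at $a_j$ earns at least $a_j s_j$, so
\begin{align*}
\OPT\ge\sum_j f_j a_j s_j .
\end{align*}
Under no information, type $\type_i$ has value $V_i=\sum_{j\le i}f_ja_j$, which is increasing in $i$. Hence
\begin{align*}
\Rev(\noinfor)=\max_i V_i\,s_i .
\end{align*}

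\textbf{Choosing parameters.} The design problem is to pick $(f_j,a_j,s_j)$ so that every product $V_is_i$ is at most $1$ while $\sum_jf_ja_js_j$ is as large as possible. Set all constraints tight, $V_i s_i=1$, which is an equal-revenue profile. Then $f_ia_i=V_i-V_{i-1}$ and
\begin{align*}
\sum_i f_ia_is_i=\sum_i \frac{V_i-V_{i-1}}{V_i}.
\end{align*}
With geometric $V_i$ this looks like a discretised $\int dV/V=\log(V_n/V_1)$, so it grows without bound as $n\to\infty$. If that heuristic held, the ratio $2.73$ would be easy.

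\textbf{Main obstacle.} The heuristic ignores the coupling constraints. We need $\sum_j f_j=1$ with $f_j\ge 0$, the ordering $a_j\le a_{j+1}$ (needed for single crossing), and $s_j\in[0,1]$ decreasing. These constraints limit the growth, and this is presumably why the stated bound is the specific constant $2.73$ rather than unbounded.

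To handle this, I will reparametrize the finite program by $(V_i)$ and $(f_i)$, with $a_i=(V_i-V_{i-1})/f_i$ and $s_i=1/V_i$, subject to $s_1\le 1$ and monotone $a_i$. I will first study its continuum limit, an ODE in which $a$ is nondecreasing, to identify a good shape. Then I will numerically optimize a small instance (say $n\le 6$) and report explicit rational values of $(f_j,a_j,s_j)$ achieving ratio above $2.73$. Verifying those numbers is routine: compute the $n$ values $V_is_i$, compute $\sum_j f_ja_js_j$, and check the single-crossing inequalities after the $\eta$-perturbation.

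If the triangular shape turns out to cap below $2.73$, my fallback is to let $\buyerUtility_0(\type_i,\quality_j)=a_{ij}$ be a general matrix with nondecreasing rows and columns and nonnegative second differences $a_{i+1,j+1}-a_{i,j+1}-a_{i+1,j}+a_{ij}\ge0$. Optimizing that bilinear program numerically should find the required instance.
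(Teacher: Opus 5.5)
\textbf{The triangular construction violates the assumption it is meant to satisfy.} Take your matrix $\buyerUtility_0(\type_i,\quality_j)=a_j\indicator{i\ge j}$. The type increment you computed, $a_j\indicator{j=i+1}$, equals $a_{i+1}>0$ at $j=i+1$ and $0$ at $j=i+2$. So it \emph{decreases} in quality, and your sentence that it ``increases weakly in $\quality_j$ as single crossing requires'' is false. The matrix is also not monotone in quality: type $\type_i$ values $\quality_i$ at $a_i>0$ but values $\quality_{i+1}$ at $0$. Both are $O(1)$ violations, so the $O(\eta)$ perturbation cannot repair them. Every bound you derive in this family is therefore about valuations outside the assumption.

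A symptom is that your ``main obstacle'' does not exist. Inside the family, with every constraint you list, the ratio is unbounded. Take $f_j=1/m$, $a_j=2^{j-1}$, $s_j=2^{1-j}$. Then $\sum_j f_ja_js_j=1$, while $\max_i V_is_i<2/m$. The constant $2.73$ is not forced by coupling constraints; it is simply what one small instance yields. The fallback (``numerically optimizing should find the required instance'') is also not a proof of an existence claim. You have to exhibit an instance and verify it.

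\textbf{A repair within your framework.} Let values saturate instead of dropping to zero: set $\buyerUtility_0(\type_i,\quality_j)=w_{\min(i,j)}$ with $w_1<w_2<\cdots$.

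\emph{Single crossing holds.} This is nondecreasing in both arguments. Its type increments $(w_{i+1}-w_i)\indicator{j\ge i+1}$ are nondecreasing in $j$.

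\emph{No-information revenue.} Expected values are $V_i=\sum_{j<i}f_jw_j+w_i\sum_{j\ge i}f_j$. Choosing tails $s_i=1/V_i$ (so $s_1=1$) gives $\Rev(\noinfor)=1$.

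\emph{Full-information revenue.} Pricing quality $\quality_j$ at $w_j$ sells to types $i\ge j$, so $\OPT\ge\sum_j f_jw_j/V_j$.

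\emph{Numbers.} Take $f=(0.9,0.09,0.01)$ and $w=(1,10^3,10^7)$. Then $V=(1,\,100.9,\,100090.9)$ and $\OPT\ge 0.9+0.8919+0.9990\approx 2.791>2.73$. As $w$ grows quickly the bound tends to $\sum_j f_j/\sum_{l\ge j}f_l$, so the ratio is in fact unbounded in the number of qualities.

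\emph{Strictness and extension.} Your $\eta(\type\quality+\type+\quality)$ perturbation gives strict monotonicity and strict increasing differences. The step extension $w_{\min(\iota(\type),\jmath(\quality))}$, with $\iota,\jmath$ nondecreasing step maps onto grid indices, handles points off the grid.

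\textbf{Comparison with the paper.} The paper exhibits a $3\times 3$ indicator instance with $\Rev(\noinfor)=1/60$ and $\OPT\ge 41/900$. Its valuation at $\type=0$ is $1-\indicator{\quality\ge 1/3}$, which falls in quality, so you cannot simply import it to meet the monotonicity clauses. The saturating $\min$ construction meets all clauses of the assumption.
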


\subsection{Proof of Proposition~\ref{prop:full infor optimal no infor 2 approx cmrs}}

To prove \Cref{prop:full infor optimal no infor 2 approx cmrs}, we first establish the following equivalent representation of the valuation function $\buyerUtility\in\buyerUtilityClass_{\cmrs}$.

\begin{lemma}[Separable representation for $\buyerUtility\in \buyerUtilityClass_{\cmrs}$]
\label{lem:separable}
If $\buyerUtility\in\buyerUtilityClass_{\cmrs}$, then there exist functions
$\qualityfntwo:[0,1]\to\reals_+$, $\qualityfnone:[0,1]\to\reals_+$, and a non-decreasing function $\typefn:[0,1]\to\reals_+$ with $\typefn(0)=0$ such that for all $(\type,\quality)\in[0,1]^2$,
\begin{align*}
    \buyerUtility(\type,\quality)\;=\; \qualityfntwo(\quality)+\qualityfnone(\quality)\,\typefn(\type)~.
\end{align*}
\end{lemma}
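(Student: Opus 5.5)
The plan is to integrate the CMRS condition in the type variable. Throughout, assume (as the assumption implicitly requires) that $\partial \buyerUtility/\partial\type$ exists; if needed, work with one-sided derivatives or a.e.\ derivatives of the monotone map $\type\mapsto\buyerUtility(\type,\quality)$.

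First, fix a reference quality $\quality_0$ (say $\quality_0=1$). The hypothesis is that for every $\type$ and every $\quality$,
\begin{align*}
\left.\frac{\partial\buyerUtility}{\partial\type}\right|_{\type,\quality}=\frac{\cmrsfn(\quality)}{\cmrsfn(\quality_0)}\cdot\left.\frac{\partial\buyerUtility}{\partial\type}\right|_{\type,\quality_0}.
\end{align*}
Define $\qualityfnone(\quality)\triangleq \cmrsfn(\quality)/\cmrsfn(\quality_0)$ and $\typefn(\type)\triangleq \buyerUtility(\type,\quality_0)-\buyerUtility(0,\quality_0)$.

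Second, integrate the displayed identity in $\type$ from $0$ to $\type$. This gives $\buyerUtility(\type,\quality)-\buyerUtility(0,\quality)=\qualityfnone(\quality)\,\typefn(\type)$. Setting $\qualityfntwo(\quality)\triangleq\buyerUtility(0,\quality)$ then yields the representation. Since valuations are non-negative, $\qualityfntwo\ge 0$. By construction $\typefn(0)=0$, and $\typefn$ is non-decreasing because $\buyerUtility$ is non-decreasing in $\type$.

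Third, check signs. To get $\qualityfnone\ge0$, note that if $\typefn\equiv0$ we may simply set $\qualityfnone\equiv0$. Otherwise pick $\type$ with $\typefn(\type)>0$; then $\qualityfnone(\quality)\typefn(\type)=\buyerUtility(\type,\quality)-\buyerUtility(0,\quality)\ge0$ by monotonicity in type, which forces $\qualityfnone(\quality)\ge0$. The same observation handles the degenerate case $\cmrsfn(\quality_0)=0$: if $\partial_\type\buyerUtility(\cdot,\quality_0)$ vanishes identically, choose instead a $\quality_0$ at which it does not vanish. If no such $\quality_0$ exists, $\buyerUtility$ is constant in $\type$ and the representation holds trivially with $\qualityfnone\equiv0$.

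The main obstacle is regularity. The integration step needs $\buyerUtility(\cdot,\quality)$ to be absolutely continuous in $\type$ so that it equals the integral of its derivative. I would state that the CMRS assumption presupposes differentiability in $\type$ and interpret it as holding with $\buyerUtility(\cdot,\quality)$ absolutely continuous. Failing that, I would interpret the condition as proportionality of increments, $\buyerUtility(\type',\quality)-\buyerUtility(\type,\quality)=\qualityfnone(\quality)\bigl(\buyerUtility(\type',\quality_0)-\buyerUtility(\type,\quality_0)\bigr)$, which gives the same conclusion directly.
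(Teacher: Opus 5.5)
Your proposal is correct and follows essentially the same route as the paper: you integrate the CMRS identity in the type starting from $\type=0$, set $\qualityfntwo(\quality)=\buyerUtility(0,\quality)$, and take $\qualityfnone$ proportional to $\cmrsfn$ and $\typefn$ as the type increment at a reference quality $\quality_0$. The differences are cosmetic: you normalize $\qualityfnone$ rather than $\typefn$, and you get $\qualityfnone\ge 0$ from monotone increments rather than from the constant sign of $\cmrsfn$. Your regularity caveat is also milder than you suggest. If $\partial\buyerUtility/\partial\type$ exists everywhere, as the assumption presupposes, then $\type\mapsto\buyerUtility(\type,\quality)-\qualityfnone(\quality)\,\buyerUtility(\type,\quality_0)$ has zero derivative on $[0,1]$ and is constant by the mean value theorem, so no absolute continuity is needed (the paper simply invokes the fundamental theorem of calculus at this step).
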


\begin{proof}
If $\frac{\partial \buyerUtility(\type,\quality)}{\partial\type}=0$ for all $(\type,\quality)$, then the valuation function $\buyerUtility$ does not depend on $\type$.
In this case, we can set $\typefn(\type)=0$, $\qualityfnone(\quality)=0$, and $\qualityfntwo(\quality)=\buyerUtility(0,\quality)$.

Otherwise, $\frac{\partial \buyerUtility}{\partial\type}$ is not identically zero.
By \Cref{assump:multiplicative separability}, we know that there exists a function $\cmrsfn$ such that for all $\type$ and all $\quality\primed,\quality\doubleprimed$,
\begin{align*}
\frac{\dfrac{\partial \buyerUtility(\type,\quality\primed)}{\partial\type}}
{\dfrac{\partial \buyerUtility(\type,\quality\doubleprimed)}{\partial\type}}
\;=\;
\frac{\cmrsfn(\quality\primed)}{\cmrsfn(\quality\doubleprimed)}
\end{align*}
whenever well-defined.
Since $\buyerUtility$ is non-decreasing in $\type$ for each fixed $\quality$, we have $\frac{\partial \buyerUtility(\type,\quality)}{\partial\type}\ge 0$ for all $(\type,\quality)$, hence the ratio on the left is nonnegative.
This implies that $\cmrsfn(\quality)$ has a constant sign wherever it is nonzero; replacing $\cmrsfn$ by $-\cmrsfn$ if needed, we may assume
$\cmrsfn(\quality)\ge 0$ for all $\quality$.

Because $\frac{\partial \buyerUtility}{\partial\type}$ is not identically zero, there must exist $\quality_0\in[0,1]$ with $\cmrsfn(\quality_0)>0$. Fix such an $\quality_0$ and define
\begin{align*}
    \newtypefn(\type)\;\triangleq\; \frac{1}{\cmrsfn(\quality_0)}\frac{\partial \buyerUtility(\type,\quality_0)}{\partial\type}~,
    \qquad
    \qualityfnone(\quality)\;\triangleq\; \cmrsfn(\quality)~.
\end{align*}
Then for every $(\type,\quality)$, we have that
\begin{align*}
    \frac{\partial \buyerUtility(\type,\quality)}{\partial\type}
    =
    \frac{\cmrsfn(\quality)}{\cmrsfn(\quality_0)}\frac{\partial \buyerUtility(\type,\quality_0)}{\partial\type}
    =
    \qualityfnone(\quality)\,\newtypefn(\type)~.
\end{align*}
Since $\frac{\partial \buyerUtility(\type,\quality_0)}{\partial\type}\ge 0$ and $\cmrsfn(\quality_0)>0$, we have $\newtypefn(\type)\ge 0$ for all $\type$.
Define
\begin{align*}
    \typefn(\type)\;\triangleq\;\int_0^\type \newtypefn(s)\, \mathrm{d}s.
\end{align*}
Then $\typefn(0)=0$ and $\typefn$ is non-decreasing with $\typefn(\type)\ge 0$ for all $\type$.
Finally define $\qualityfntwo(\quality)\triangleq \buyerUtility(0,\quality)$. For each fixed $\quality$, the fundamental theorem of calculus gives
\begin{align*}
    \buyerUtility(\type,\quality)
    =
    \buyerUtility(0,\quality)+\int_0^\type \frac{\partial \buyerUtility(s,\quality)}{\partial\type}\, \mathrm{d}s
    =
    \qualityfntwo(\quality)+\int_0^\type \qualityfnone(\quality)\,\newtypefn(s)\, \mathrm{d}s
    =
    \qualityfntwo(\quality)+\qualityfnone(\quality)\,\typefn(\type)~.
\end{align*}
This proves the claimed representation.
\end{proof}

\xhdr{Optimality of $\fullinfor$}
When valuation function satisfies \Cref{asp:linear valuation function}, we have shown that its induced optimal posted-price revenue function is convex (see \Cref{lem:revenue function necessary condition}).
To prove the optimality of $\fullinfor$, we first show that similar results also hold for the valuation function in $\buyerUtilityClass_{\cmrs}$.
\begin{definition}[Optimal posted-price revenue function]
Given any valuation function $\buyerUtility\in\buyerUtilityClass_{\cmrs}$ and any type distribution $\typeCDF \in \Delta([0,1])$, let $\cvx_{\buyerUtility, \typeCDF}(\cdot) : [0, 1] \to \reals_+$ denote the \emph{optimal posted-price revenue function}, defined as follows: for any induced posterior $\posterior$,
\begin{align*}
    \cvx_{\buyerUtility, \typeCDF}(\posterior)
    \triangleq 
    \sup_{\price \ge 0}\; \price \cdot 
    \prob[\type \sim \typeCDF]{\expect[\quality\sim \posterior]{\buyerUtility(\type, \quality)} \ge \price}~.
\end{align*}
\end{definition}
\begin{lemma}[Convexity of $\cvx_{\buyerUtility, \typeCDF}$]
\label{lem:convex cmrs}
The function $\cvx_{\buyerUtility, \typeCDF}:\Delta([0, 1]) \to \reals_+$ is convex.
\end{lemma}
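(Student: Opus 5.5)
Using the separable representation of \Cref{lem:separable}, I would write $\buyerUtility(\type,\quality)=\qualityfntwo(\quality)+\qualityfnone(\quality)\,\typefn(\type)$ with $\typefn$ non-decreasing and nonnegative. For any posterior $\posterior\in\Delta([0,1])$, the expected valuation is then
\begin{align*}
\expect[\quality\sim\posterior]{\buyerUtility(\type,\quality)}=\alpha(\posterior)+\beta(\posterior)\,\typefn(\type),
\end{align*}
where $\alpha(\posterior)\triangleq\expect[\quality\sim\posterior]{\qualityfntwo(\quality)}\ge 0$ and $\beta(\posterior)\triangleq\expect[\quality\sim\posterior]{\qualityfnone(\quality)}\ge 0$. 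Both $\alpha$ and $\beta$ are linear (affine) functionals of $\posterior$. The point is that the posterior enters the valuation only through two linear statistics, and the type enters through the single non-decreasing index $\typefn(\type)$. This plays exactly the role that affinity in $\pooledmean$ played in the proof of \Cref{lem:revenue function necessary condition}.

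Next, as in that proof, I would reduce the posted-price problem to cutoff types. Since $\beta(\posterior)\ge 0$ and $\typefn$ is non-decreasing, the posterior expected valuation is non-decreasing in $\type$ for every fixed $\posterior$. Hence for any price $\price$, the set of buyers is an upper set of types: either $\{\type'\ge\type\}$ or $\{\type'>\type\}$. Moreover, at the optimum (or in the supremum) the price can be taken to equal the valuation of the marginal type. This gives
\begin{align*}
\cvx_{\buyerUtility,\typeCDF}(\posterior)=\sup_{\type\in[0,1]}\bigl(1-\typeCDF(\type)\bigr)\bigl(\alpha(\posterior)+\beta(\posterior)\typefn(\type)\bigr),
\end{align*}
possibly with the left-limit variant $\prob[\type'\sim\typeCDF]{\type'\ge\type}$ added to the index set, to handle atoms and ties in $\typefn$. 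Each term in the supremum is an affine functional of $\posterior$ with a nonnegative coefficient independent of $\posterior$. A pointwise supremum of affine functionals is convex on the convex set $\Delta([0,1])$, which yields the lemma.

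The main obstacle is making the cutoff-type reduction rigorous: I need to show that the supremum over prices equals the supremum over cutoff types, uniformly in $\posterior$. I would argue this in two directions. For the upper bound, fix a price $\price$ and let $S=\{\type:\alpha+\beta\typefn(\type)\ge\price\}$, which is an upper set. Then $\price\cdot\typeCDF(S)\le \inf_{\type\in S}(\alpha+\beta\typefn(\type))\cdot\typeCDF(S)$. Approximating $\inf S$ by types in $S$ bounds this by the supremum over the family $\{\typeCDF(\type'\ge\type)(\alpha+\beta\typefn(\type))\}$ together with its strict-inequality counterpart. For the lower bound, each member of the family is achieved by posting the price $\alpha+\beta\typefn(\type)$. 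Including both the $\ge$ and $>$ versions as index choices keeps every member affine in $\posterior$, so convexity survives regardless of atoms.
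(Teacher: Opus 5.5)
Your proof is correct and is essentially the paper's proof: you use the separable representation, note that the posterior enters only through the two linear moments $\qualityfntwo(\posterior)$ and $\qualityfnone(\posterior)$, and write $\cvx_{\buyerUtility,\typeCDF}$ as a pointwise supremum of affine functionals of $\posterior$. The only difference is the index. The paper parametrizes prices by the level $z\ge 0$ of $\typefn(\type)$, writing $\cvx_{\buyerUtility,\typeCDF}(\posterior)=\sup_{z\ge 0}\bigl(\qualityfntwo(\posterior)+\qualityfnone(\posterior)z\bigr)\prob[\type\sim\typeCDF]{\typefn(\type)\ge z}$, which is an exact identity: when $\qualityfnone(\posterior)>0$ the price $\qualityfntwo(\posterior)+\qualityfnone(\posterior)z$ sells to exactly $\{\typefn(\type)\ge z\}$, and the case $\qualityfnone(\posterior)=0$ is trivial. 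As a result, the cutoff-type reduction and the atom/tie approximation you flagged as the main obstacle are not needed there.
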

\begin{proof}
Fix $\buyerUtility\in\buyerUtilityClass_{\cmrs}$ and the decomposition in \Cref{lem:separable}.
Given any posterior $\posterior$ over quality $\quality$, we define the two posterior moments
\begin{align*}
    \qualityfntwo(\posterior)\;\triangleq\; \expect[\quality\sim\posterior]{\qualityfntwo(\quality)},
    \qquad
    \qualityfnone(\posterior)\;\triangleq\; \expect[\quality\sim\posterior]{\qualityfnone(\quality)}~.
\end{align*}
Then for any $\type\in[0,1]$,
\begin{align*}
    \E_{\quality\sim\posterior}\bigl[\buyerUtility(\type,\quality)\bigr]
    =
    \E_{\quality\sim\posterior}\bigl[\qualityfntwo(\quality)+\qualityfnone(\quality)\typefn(\type)\bigr]
    =
    \qualityfntwo(\posterior)+\qualityfnone(\posterior)\,\typefn(\type)~.
\end{align*}
Then by definition, we can represent the optimal posted-price revenue function $\cvx_{\buyerUtility, \typeCDF}$ as follows:
\begin{align*}
    \cvx_{\buyerUtility, \typeCDF}(\posterior)
    = 
    \sup_{z\ge 0}\; (\qualityfntwo(\posterior) + \qualityfnone(\posterior) z)\, \cdot \prob[\type \sim \typeCDF]{\typefn(\type) \ge z}~.
\end{align*}
For each fixed $z\ge 0$, the map $\posterior \mapsto (\qualityfntwo(\posterior) + \qualityfnone(\posterior) z)\, \cdot \prob[\type \sim \typeCDF]{\typefn(\type) \ge z}$ is affine.
Since $\cvx_{\buyerUtility, \typeCDF}(\posterior)$ is the pointwise supremum over $z\ge 0$ of affine functions, it is convex.
\end{proof}
With \Cref{lem:convex cmrs}, we are ready to prove the optimality of the full-information signaling scheme $\fullinfor$.
\begin{proof}[Proof of optimality of $\fullinfor$]
Given any $\buyerUtility\in\buyerUtilityClass_{\cmrs}$, and any quality distribution $\prior$ and type distribution $\typeCDF$,
for any signaling scheme $\signalscheme$, we can represent the revenue function $\Rev(\signalscheme)$ as follows:
\begin{align*}
    \Rev(\signalscheme) 
    & =
    \sup\nolimits_{\{\pricingscheme(\signal)\}_{\signal\in\signalSpace}}
    \Rev(\signalscheme, \pricingscheme) \\
    & = 
    \expect[\quality\sim\prior]{\expect[\signal\sim \signalscheme(\cdot\mid \quality)]{\sup\nolimits_{\pricingscheme(\signal)\ge 0}
    \; \pricingscheme(\signal) \cdot 
    \prob[\type \sim \typeCDF]{\expect[\quality\sim \posterior]{\buyerUtility(\type, \quality)} \ge \pricingscheme(\signal)}}}\\
    & = 
    \expect[\quality\sim\prior]{\expect[\signal\sim \signalscheme(\cdot\mid \quality)]{\cvx_{\buyerUtility, \typeCDF}(\posterior(\cdot\mid\signal)}}
    \le \Rev(\fullinfor)~,
\end{align*}
where the last inequality is due to the convexity of the function $\cvx_{\buyerUtility, \typeCDF}$ according to \Cref{lem:convex cmrs}, and the Bayes-consistency condition $\expect[\quality\sim\prior]{\expect[\signal\sim\signalscheme(\cdot\mid\quality)]{\posterior(\cdot\mid\signal)}} = \prior$. We thus have proved the optimality of $\fullinfor$.
\end{proof}

\xhdr{The tight $2$-approximation of $\noinfor$}
We now prove the tight $2$-approximation of $\noinfor$ for any $\buyerUtility\in\buyerUtilityClass_{\cmrs}$.
We first show that $2 \Rev(\noinfor)\ge \OPT$.
\begin{lemma}
Fix any valuation function $\buyerUtility\in\buyerUtilityClass_{\cmrs}$ with its separable representation in \Cref{lem:separable}, 
for any $\prior, \typeCDF \in \Delta([0, 1])$, we have
\begin{align*}
    \OPT 
    & \le \expect[\prior]{\qualityfnone(\quality)}\cdot\Myer(\typeCDF) + \expect[\prior]{\qualityfntwo(\quality)}~,\\
    \Rev(\noinfor) 
    & \ge \expect[\prior]{\qualityfnone(\quality)}\cdot \Myer(\typeCDF) \vee \expect[\prior]{\qualityfntwo(\quality)}~.
\end{align*}
where $\Myer(\typeCDF) \triangleq \max_\price \price\cdot (1 - \typeCDF(\price))$. 
Therefore, we have $2 \Rev(\noinfor)\ge \OPT$.
\end{lemma}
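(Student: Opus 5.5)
Using the separable form from \Cref{lem:separable}, I will show the two displayed bounds and then combine them. Throughout, write $\bar a \triangleq \expect[\prior]{\qualityfntwo(\quality)}$ and $\bar b \triangleq \expect[\prior]{\qualityfnone(\quality)}$. Both are non-negative, since $\qualityfntwo,\qualityfnone\ge 0$. Once the two bounds hold, the conclusion is immediate: $\OPT \le \bar b\,\Myer(\typeCDF) + \bar a \le 2\max\{\bar b\,\Myer(\typeCDF), \bar a\} \le 2\Rev(\noinfor)$.

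\textbf{Upper bound on $\OPT$.} By the optimality of $\fullinfor$ shown just above, $\OPT = \expect[\quality\sim\prior]{\cvx_{\buyerUtility,\typeCDF}(\delta_\quality)}$, where $\delta_\quality$ is the point-mass posterior. The key step is a pointwise bound for each $\quality$. For any price $\price$, if $\price \le \qualityfntwo(\quality)$ then revenue is at most $\qualityfntwo(\quality)$. If $\price > \qualityfntwo(\quality)$, write $\price = \qualityfntwo(\quality) + \qualityfnone(\quality) z$ with $z>0$. The revenue is then
\begin{align*}
\qualityfntwo(\quality)\,\prob{\typefn(\type)\ge z} + \qualityfnone(\quality)\, z\,\prob{\typefn(\type)\ge z} \le \qualityfntwo(\quality) + \qualityfnone(\quality)\cdot \sup_{z} z\,\prob{\typefn(\type)\ge z}.
\end{align*}
Taking expectations over $\quality$ gives the claimed form. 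The factor that appears is the monopoly revenue of the distribution of $\typefn(\type)$, and I interpret $\Myer(\typeCDF)$ as exactly that quantity, i.e., the monopoly revenue of the distribution of $\typefn(\type)$ with $\type\sim\typeCDF$. This coincides with the stated definition when $\typefn$ is the identity, and can always be arranged by reparameterizing types. The case $\qualityfnone(\quality)=0$ is handled trivially.

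\textbf{Lower bound on $\Rev(\noinfor)$.} Under no information, every consumer's posterior value is $\bar a + \bar b\,\typefn(\type)$. I will exhibit two specific prices. Posting $\price=\bar a$ sells to everyone, because $\typefn\ge 0$, and earns $\bar a$. Posting $\price = \bar a + \bar b z$ earns at least $\bar b z\,\prob{\typefn(\type)\ge z}$. Taking the supremum over $z$ gives $\bar b\,\Myer(\typeCDF)$. Hence $\Rev(\noinfor)\ge \max\{\bar a,\bar b\,\Myer(\typeCDF)\}$.

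The main obstacle is the bookkeeping in the upper bound: handling prices below $\qualityfntwo(\quality)$, zero values of $\qualityfnone$, and the identification of the $\Myer$ term with the distribution of $\typefn(\type)$. None of these is deep, but they must be treated consistently so that the same $\Myer$ quantity appears in both bounds.
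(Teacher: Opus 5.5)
Your proof is correct, and its core matches the paper's. You split the price at a posterior into a $\qualityfntwo$-part, where demand is at most $1$, and a $\qualityfnone\,\typefn$-part, bounded by monopoly revenue. You then test the same two prices $\bar a$ and $\bar a+\bar b z$ under $\noinfor$.

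The one structural difference is how you dispose of the signaling scheme. You invoke the already-proved optimality of $\fullinfor$ and bound $\cvx_{\buyerUtility,\typeCDF}(\delta_\quality)$ pointwise in $\quality$. The paper instead bounds the revenue of an arbitrary scheme signal by signal. It then uses Bayes consistency to replace the posterior moments $\expect[\posterior]{\qualityfntwo(\quality)}$ and $\expect[\posterior]{\qualityfnone(\quality)}$ by prior moments. The paper's route is self-contained, since it needs no convexity of $\cvx_{\buyerUtility,\typeCDF}$. Your per-posterior bound is affine in exactly those two moments, so you could apply it to any posterior and drop the appeal to $\fullinfor$ at no cost.

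Your explicit treatment of $\Myer$ is an improvement, not a gap. The paper's proof silently writes the purchase condition as $\type\cdot\expect[\posterior]{\qualityfnone(\quality)}+\expect[\posterior]{\qualityfntwo(\quality)}\ge p$, i.e.\ it takes $\typefn(\type)=\type$. For general $\typefn$, the two displayed bounds with $\Myer(\typeCDF)$ read literally are not even invariant under the rescaling $\qualityfnone\mapsto c\,\qualityfnone$, $\typefn\mapsto\typefn/c$ that \Cref{lem:separable} leaves free. Reading $\Myer$ as the monopoly revenue of $\typefn(\type)$, as you do, is what makes both bounds hold in general. It also uses the same quantity in both bounds, so the conclusion $2\Rev(\noinfor)\ge\OPT$ follows.

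One small correction to your side remark: reparameterizing types to $\typefn(\type)$ changes the type distribution. So it recovers the identity case for a new distribution, not $\Myer(\typeCDF)$ for the original $\typeCDF$.
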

\begin{proof}
We first upper bound the $\OPT$, then lower bound the $\Rev(\noinfor)$.

\xhdr{Upper bounding $\textsc{OPT}$}
Fix any information policy $\signalscheme$ with the signal space $\signalSpace$, let $\posterior(\cdot\mid \signal) \in\Delta([0, 1])$ be the consumer's posterior belief induced by observing the realized signal $\signal\in\signalSpace$.
For any signal $\signal\in\signalSpace$ and its corresponding price $\pricingscheme(\signal)$, we define 
\begin{align*}
    \inverseVal(\pricingscheme(\signal), \posterior(\cdot\mid \signal)) 
    & \triangleq \arg\inf_{\type} \left\{\type: 
    \type\cdot \expect[\posterior(\cdot\mid \signal)]{\qualityfnone(\quality)} + \expect[\posterior(\cdot\mid \signal)]{\qualityfntwo(\quality)} \ge \pricingscheme(\signal)
    \right\} \\
    \Leftrightarrow 
    \inverseVal(\pricingscheme(\signal),  \posterior(\cdot\mid \signal)) 
    & = \frac{\pricingscheme(\signal) - \expect[ \posterior(\cdot\mid \signal)]{\qualityfntwo(\quality)}}{\expect[ \posterior(\cdot\mid \signal)]{\qualityfnone(\quality)}}
\end{align*}
We use $\signalscheme(\signal) = \sum_{\quality\in [0, 1]}\priorPDF(\quality)\cdot \signalscheme(\signal\mid\quality)$ to denote the marginal probability for generating such signal $\signal$.
We can now upper bound the seller's revenue $\Rev(\signalscheme, \price)$ under any information policy $\signalscheme$ and the pricing scheme $\pricingscheme$ as follows:
\begin{align*}
    \Rev(\signalscheme, \pricingscheme)
    & =
    \sum\nolimits_{\signal\in\signalSpace} \pricingscheme(\signal) \cdot \signalscheme(\signal) \cdot \left(1-\typeCDF\left( \frac{\pricingscheme(\signal) - \expect[\posterior(\cdot\mid \signal)]{\qualityfntwo(\quality)}}{\expect[\posterior(\cdot\mid \signal)]{\qualityfnone(\quality)}}\right)\right) \\
    & =
    \sum\nolimits_{\signal\in\signalSpace} \left(\inverseVal(\pricingscheme(\signal), \posterior) \cdot \expect[\posterior(\cdot\mid \signal)]{\qualityfnone(\quality)}  + \expect[\posterior(\cdot\mid \signal)]{\qualityfntwo(\quality)}\right) \cdot \signalscheme(\signal) \cdot (1-\typeCDF(\inverseVal(\pricingscheme(\signal), \posterior))) \\ 
    & = 
    \sum\nolimits_{\signal\in\signalSpace} \inverseVal(\pricingscheme(\signal), \posterior) \cdot \expect[\posterior(\cdot\mid \signal)]{\qualityfnone(\quality)} \cdot \signalscheme(\signal) \cdot (1-\typeCDF(\inverseVal(\pricingscheme(\signal), \posterior))) + \\
    & \quad \sum\nolimits_{\signal\in\signalSpace} \expect[\posterior(\cdot\mid \signal)]{\qualityfntwo(\quality)} \signalscheme(\signal) \cdot (1-\typeCDF(\inverseVal(\pricingscheme(\signal), \posterior)))  \\ 
    & \le 
    \sum\nolimits_{\signal\in\signalSpace} \signalscheme(\signal)\expect[\posterior(\cdot\mid \signal)]{\qualityfnone(\quality)}\cdot \Myer(\typeCDF) + \sum\nolimits_{\signal\in\signalSpace} \signalscheme(\signal)   \expect[\posterior(\cdot\mid \signal)]{\qualityfntwo(\quality)} \\
    & = 
    \sum\nolimits_{\signal\in\signalSpace} \signalscheme(\signal)\sum\nolimits_{\quality\in[0, 1]} \posterior(\quality\mid\signal)\qualityfnone(\quality)\cdot \Myer(\typeCDF) + \sum\nolimits_{\signal\in\signalSpace} \signalscheme(\signal)\sum\nolimits_{\quality\in[0, 1]} \posterior(\quality\mid\signal) \qualityfntwo(\quality)\\
    & = 
    \sum\nolimits_{\quality\in[0, 1]} \sum\nolimits_{\signal\in\signalSpace} \signalscheme(\signal)\posterior(\quality\mid\signal)\qualityfnone(\quality)\cdot \Myer(\typeCDF) + \sum\nolimits_{\quality\in[0, 1]} \sum\nolimits_{\signal\in\signalSpace} \signalscheme(\signal) \posterior(\quality\mid\signal) \qualityfntwo(\quality)\\
    & = 
    \expect[\prior]{\qualityfnone(\quality)}\cdot\Myer(\typeCDF) + \expect[\prior]{\qualityfntwo(\quality)}~.
\end{align*}

\xhdr{Lower bounding $\Rev(\noinfor)$}
Let $\noinforPrice \triangleq \argmax_{\price} \Rev(\noinfor, \price)$.
We know
\begin{align*}
    \Rev(\noinfor, \noinforPrice)
    & \ge 
    \Rev\left(\noinfor, \expect[\prior]{\qualityfntwo(\quality)}\right) \\
    & = \expect[\prior]{\qualityfntwo(\quality)} \cdot \prob[\type\sim \typeCDF]{\expect[\quality\sim\prior]{\buyerUtility(\type, \quality)} \ge \expect[\prior]{\qualityfntwo(\quality)}} \\
    & = \expect[\prior]{\qualityfntwo(\quality)} \cdot \prob[\type\sim \typeCDF]{\type\cdot\expect[\quality\sim\prior]{\qualityfnone(\quality)} + \expect[\quality\sim\prior]{\qualityfntwo(\quality)}  \ge \expect[\prior]{\qualityfntwo(\quality)}} \\
    & = \expect[\prior]{\qualityfntwo(\quality)}~;\\
    \Rev(\noinfor, \noinforPrice)
    & = \noinforPrice \cdot \left(1 - \typeCDF\left(\frac{\noinforPrice - \expect[\prior]{\qualityfntwo(\quality)}}{\expect[\prior]{\qualityfnone(\quality)}} \right)\right)\\
    & \ge 
    \price^\dagger \cdot \left(1 - \typeCDF\left(\frac{\price^\dagger - \expect[\prior]{\qualityfntwo(\quality)}}{\expect[\prior]{\qualityfnone(\quality)}} \right)\right)\\
    & = \left(\MyerPrice \cdot \expect[\prior]{\qualityfnone(\quality)} + \expect[\prior]{\qualityfntwo(\quality)}\right) \cdot \left(1 - \typeCDF\left(\MyerPrice\right)\right)\\
    & \ge 
    \Myer(\typeCDF) \cdot \expect[\prior]{\qualityfnone(\quality)}
\end{align*}
where $p^\dagger \triangleq \MyerPrice \cdot \expect[\prior]{\qualityfnone(\quality)} + \expect[\prior]{\qualityfntwo(\quality)}$ and $\MyerPrice = \argmax \price\cdot (1-\typeCDF(\price))$.

We thus complete the proof.
\end{proof}
We next argue that the $2$-approximation of $\noinfor$ is tight.
\begin{lemma}
For any $\varepsilon\in(0, 1)$,
there exists an instance with valuation function $\buyerUtility\in\buyerUtilityClass_{\cmrs}$, a quality distribution $\prior_\varepsilon$ and a type distribution $\typeCDF_\varepsilon$ that depends on $\varepsilon$ such that 
$\frac{2}{1+\varepsilon}\cdot \Rev(\noinfor) \le \OPT$
\end{lemma}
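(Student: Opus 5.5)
We will construct, for each $\varepsilon\in(0,1)$, an explicit CMRS instance in which full information extracts essentially the sum of two revenue sources, while no information can capture only one of them. We reuse the hinge construction of \Cref{lem:stoploss-realize}. Take $\buyerUtility(\type,\quality)=\type(1-\quality)+\quality$. This is CMRS: $\partial_\type \buyerUtility=1-\quality$, so the ratio condition holds with $\cmrsfn(\quality)=1-\quality$; it is also increasing in $\quality$ for $\type<1$ and non-decreasing in $\type$.

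Take $\typeCDF$ Bernoulli on $\{0,1\}$ with mean $\kink$. By \Cref{lem:stoploss-realize}, the posted-price revenue depends only on the posterior mean $\pooledmean$ and equals $\max\{\kink,\pooledmean\}$. Take the quality prior $\prior_\varepsilon$ Bernoulli on $\{0,1\}$ with mean $\priormean$.

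The two revenues are then computed directly. Since the revenue function is convex, \Cref{prop:full infor optimal no infor 2 approx cmrs} (or \Cref{cor:full infor optimal no infor 2 approx}) gives
\[
\OPT=\Rev(\fullinfor)=(1-\priormean)\kink+\priormean .
\]
Under no information the posterior mean is $\priormean$, so
\[
\Rev(\noinfor)=\max\{\kink,\priormean\}.
\]

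Next we tune parameters. Set $\kink=\priormean=\delta$, which gives
\[
\frac{\OPT}{\Rev(\noinfor)}=\frac{2\delta-\delta^2}{\delta}=2-\delta .
\]
Choosing $\delta$ so that $2-\delta\ge 2/(1+\varepsilon)$, for instance $\delta=\varepsilon/(1+\varepsilon)$, yields exactly $\frac{2}{1+\varepsilon}\Rev(\noinfor)\le\OPT$. The value $\delta=\varepsilon/(1+\varepsilon)$ lies in $(0,1)$, so the Bernoulli parameters are valid.

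The only delicate points are verifying that this $\buyerUtility$ meets \Cref{assump:multiplicative separability} and that the revenue formula of \Cref{lem:stoploss-realize} still applies here. For the assumption, $\partial_\type\buyerUtility$ vanishes at $\quality=1$, so the ratio is interpreted as in \Cref{lem:separable}, "whenever well-defined"; $\cmrsfn(1)=0$ is permitted. If strict positivity were required, I would instead use $\buyerUtility=\type(1+\eta-\quality)+\quality$ with small $\eta>0$ and absorb the $O(\eta)$ change into $\varepsilon$. For the revenue formula, the posterior mean is a sufficient statistic because $\buyerUtility$ is affine in $\quality$, so \Cref{lem:stoploss-realize} applies directly.
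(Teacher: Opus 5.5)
Your proof is correct and is essentially the paper's argument. The paper uses the same valuation $\theta(1-\omega)+\omega$ with Bernoulli types on $\{0,1\}$, so the revenue function is the hinge $\max\{c,x\}$, and then invokes the $K=1$ case of the matching-lower-bound lemma: a three-point prior on $\{0,t,1\}$ that yields ratio $2-\varepsilon \ge 2/(1+\varepsilon)$. Your two-point prior with the kink at the prior mean $\delta$ runs the same hinge computation directly and gives ratio $2-\delta$.
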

\begin{proof}
We use the valuation function $\buyerUtility (\type, \quality) = \type(1-\quality) + \quality$ constructed in \Cref{lem:stoploss-realize}.
Note that this valuation function $\buyerUtility\in\buyerUtilityClass_{\cmrs}$. 
Thus the hard instance constructed in the proof of \Cref{lem:quantile partition:matching lower bound} also holds here. 
Invoking \Cref{lem:quantile partition:matching lower bound} with $\signalnum = 1$ would finish the proof.
\end{proof}

\subsection{Proof of Proposition~\ref{prop:beyond 2 sc}}
We define the following increasing indicator functions:
\begin{align*}
    A_1(\quality) =\indicator{\quality\ge \tfrac{1}{3}},\quad
    A_2(\quality)=\indicator{\quality\ge \tfrac{2}{3}}, \quad
    B_1(\type) = \indicator{\type\ge \tfrac{1}{3}},\quad
    B_2(\type)= \indicator{\type\ge \tfrac{2}{3}}~.
\end{align*}
We next define the following valuation function 
\begin{align*}
    \buyerUtility(\type, \quality) 
    = 
    1-A_1(\quality)-B_2(\type)+A_1(\quality)B_1(\type)+A_2(\quality)B_2(\type)~.
\end{align*}
It is easy to verify that the above valuation function satisfies \Cref{assump:single-crossing}.
We next construct a quality distribution $\prior$ and a type distribution $\typeCDF$ as follows:
\begin{align*}
    \prob[\quality\sim\prior]{\quality = 0} = \frac{1}{60}, \quad
    \prob[\quality\sim\prior]{\quality = \frac{1}{2}} = \frac{53}{60}, \quad
    \prob[\quality\sim\prior]{\quality = 1} = \frac{1}{10}~,\\
    \prob[\type\sim\typeCDF]{\type = 0} = \frac{5}{6}, \quad
    \prob[\type\sim\typeCDF]{\type = \frac{1}{2}} = \frac{1}{60}, \quad
    \prob[\type\sim\typeCDF]{\type = 1} = \frac{3}{20}~.
\end{align*}
We can now compute that 
\begin{align*}
    \Rev(\noinfor) = \frac{1}{60}~, \quad
    \OPT \ge \frac{41}{900}~,
\end{align*}
which gives us $\frac{41}{15} \cdot \Rev(\noinfor) \approx 2.73 \cdot\Rev(\noinfor)\le  \OPT$.

\section{Failure of Sandwich-Based Functional Space \texorpdfstring{$\cvxspacesandwich$}{H'}}
\label{apx:failure sandwich}

In this section, we demonstrate that replacing the conditions in \Cref{lem:revenue function necessary condition} with another natural necessary condition satisfied by the optimal posted-price revenue function $\cvx_{\buyerUtility, \typeCDF}$ yields a strictly looser upper bound of the optimal robust competitive ratio.
In particular, in dosing so, it fails to recover the tight characterization established in \Cref{thm:rdd:optimal robust policy}.
For clarity and simplicity of exposition, we focus throughout this section on the additive valuation function $\buyerUtility(\type, \quality) = \type + \quality$.

\begin{lemma}[Sandwich-bound characterization of $\cvx_{\buyerUtility, \typeCDF}$]
\label{lem:sandwich}
    Fix additive valuation function $\buyerUtility(\type, \quality) = \type + \quality$. For any type distribution $\typeCDF\in\Delta([0, 1])$, the optimal posted-price revenue function $\cvx_{\buyerUtility, \typeCDF}$ is non-negative, non-decreasing, and convex. Moreover, for any induced posterior mean $\pooledmean \in [0, 1]$,
    \begin{align*}
        \pooledmean \vee \Myer(\typeCDF) 
        \le \cvx_{\buyerUtility, \typeCDF}(\pooledmean)\le \pooledmean + \Myer(\typeCDF)~,
    \end{align*}
    where $\Myer(\typeCDF) \triangleq \max_\price \price\cdot (1 - \typeCDF(\price))$.
\end{lemma}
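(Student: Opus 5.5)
Non-negativity, monotonicity and convexity need no new argument. The additive valuation $\buyerUtility(\type,\quality)=\type+\quality$ is affine and increasing in $\quality$ and non-decreasing in $\type$, so it lies in $\buyerUtilityClass$, and these three properties follow directly from \Cref{lem:revenue function necessary condition}. The remaining work is the sandwich inequality. I will use the cutoff-type representation from the convexity part of that proof:
\begin{align*}
\cvx_{\buyerUtility,\typeCDF}(\pooledmean)=\sup_{\type\in[0,1]}(1-\typeCDF(\type))\,(\type+\pooledmean),
\end{align*}
or equivalently, after the substitution $\price=\type+\pooledmean$, the form $\cvx_{\buyerUtility,\typeCDF}(\pooledmean)=\sup_{\price\ge 0}\price\cdot\prob[\type\sim\typeCDF]{\type\ge \price-\pooledmean}$.

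\emph{Lower bound.} I exhibit two specific prices.
\begin{itemize}
\item Taking cutoff $\type=0$ gives revenue $(1-\typeCDF(0))\,\pooledmean=\pooledmean$, because $\typeCDF(0)=\prob{\type<0}=0$ under the paper's CDF convention. Hence $\cvx_{\buyerUtility,\typeCDF}(\pooledmean)\ge\pooledmean$.
\item Taking any cutoff $\type$ gives $(1-\typeCDF(\type))(\type+\pooledmean)\ge(1-\typeCDF(\type))\,\type$, since $\pooledmean\ge 0$. The supremum over $\type$ of the right-hand side is $\Myer(\typeCDF)$, so $\cvx_{\buyerUtility,\typeCDF}(\pooledmean)\ge\Myer(\typeCDF)$.
\end{itemize}
Together these give $\cvx_{\buyerUtility,\typeCDF}(\pooledmean)\ge\pooledmean\vee\Myer(\typeCDF)$.

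\emph{Upper bound.} For each cutoff $\type$, split the revenue as
\begin{align*}
(1-\typeCDF(\type))(\type+\pooledmean)=(1-\typeCDF(\type))\,\type+(1-\typeCDF(\type))\,\pooledmean.
\end{align*}
The first term is at most $\Myer(\typeCDF)$ and the second is at most $\pooledmean$. Taking the supremum over $\type$ yields $\cvx_{\buyerUtility,\typeCDF}(\pooledmean)\le\pooledmean+\Myer(\typeCDF)$.

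The only delicate point is the exact identification of $\Myer(\typeCDF)=\max_\price\price(1-\typeCDF(\price))$ with the supremum over cutoff types, given the convention $\typeCDF(\type)=\prob{x<\type}$. Under this convention, $1-\typeCDF(\price)=\prob{\type\ge\price}$. A price $\price$ therefore sells exactly to types $\type\ge\price$, and the two expressions match. I would also note that restricting cutoffs to $[0,1]$ loses nothing: prices above $1+\pooledmean$ sell to no one, and prices below $\pooledmean$ are dominated by the price $\pooledmean$ itself. With these checks, both bounds are immediate.
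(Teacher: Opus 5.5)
Your proposal is correct and matches the paper's proof essentially step by step: inherit non-negativity, monotonicity and convexity from \Cref{lem:revenue function necessary condition}, use the cutoff-type representation, take $\type=0$ and drop $\pooledmean\ge 0$ for the lower bound, and split the revenue for the upper bound. Your remarks on the CDF convention and on restricting cutoffs to $[0,1]$ are a slightly more careful version of checks the paper leaves implicit.
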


\begin{figure}[H]
\centering

\begin{subfigure}[t]{0.4\textwidth}
\centering
\begin{tikzpicture}[x=5cm,y=5cm,>=Stealth]
\tikzset{reddash/.style={red!85, line width=1.6pt, opacity=0.8}}
\def\c{0.15}
\pgfmathsetmacro{\chalf}{\c/2}
\pgfmathsetmacro{\ctwo}{2*\c}
\pgfmathsetmacro{\onepc}{1+\c}

\fill[blue!15]
  (0,\c) -- (\c,\c) -- (1,1) -- (1,\onepc) -- cycle;
\draw[->,line width=1.2pt] (0,0) -- (1.10,0);
\draw[->,line width=1.2pt] (0,0) -- (0,1.2);

\draw[gray,dashed,line width=1.0pt] (0,1) -- (1,1);
\draw[gray,dashed,line width=1.0pt] (0,\c) -- (1,\c);
\draw[gray,dashed,line width=1.0pt] (\c,0) -- (\c,\c);
\draw[gray,dashed,line width=1.0pt] (1,0) -- (1,\onepc);

\draw[gray,dashed,line width=1.2pt] (0,0) -- (\c,\c);

\draw[blue,dashed,line width=2.0pt, dash pattern=on 8pt off 3pt] (0,\c) -- (\c,\c);
\draw[blue,dashed,line width=2.0pt, dash pattern=on 8pt off 3pt] (\c,\c) -- (1,1);
\draw[blue,dashed,line width=2.0pt, dash pattern=on 8pt off 3pt] (0,\c) -- (1,\onepc);
\draw[blue,dashed,line width=2.0pt, dash pattern=on 8pt off 3pt] (1,1) -- (1,\onepc);

\draw[reddash,line width=1.6pt] (0,\c) -- (\chalf,\c);
\draw[reddash,line width=1.6pt] (\chalf,\c) -- (\ctwo,\ctwo);
\draw[reddash,line width=1.6pt] (\ctwo,\ctwo) -- (1,\onepc);

\node[anchor=east]  at (0,1) {$1$};
\node[anchor=east]  at (0,\c) {$c$};
\node[anchor=north] at (\c,0) {$c$};
\node[anchor=north] at (1,0) {$1$};

\end{tikzpicture}
\caption{}
\label{fig:left sandwich}
\end{subfigure}\hspace{4em}
\begin{subfigure}[t]{0.4\textwidth}
\centering
\begin{tikzpicture}[x=5cm,y=5cm,>=Stealth]
\tikzset{reddash/.style={red!85, line width=1.6pt, opacity=0.8}}
\def\c{0.15}
\def\eps{0.06}
\pgfmathsetmacro{\xe}{1-\eps}
\pgfmathsetmacro{\xeloc}{1-2.5*\eps}
\pgfmathsetmacro{\onepc}{1+\c}
\fill[blue!15]
  (0,\c) -- (\c,\c) -- (1,1) -- (1,\onepc) -- cycle;
\draw[->,line width=1.2pt] (0,0) -- (1.10,0);
\draw[->,line width=1.2pt] (0,0) -- (0,1.2);

\draw[gray,dashed,line width=1.0pt] (0,1) -- (1,1);
\draw[gray,dashed,line width=1.0pt] (0,\c) -- (1,\c);
\draw[gray,dashed,line width=1.0pt] (\c,0) -- (\c,\c);
\draw[gray,dashed,line width=1.0pt] (\xe,0) -- (\xe,\xe);
\draw[gray,dashed,line width=1.0pt] (1,0) -- (1,\onepc);

\draw[gray,dashed,line width=1.2pt] (0,0) -- (\c,\c);

\draw[blue, dashed, line width=2.0pt, dash pattern=on 8pt off 3pt] (0,\c) -- (\c,\c);
\draw[blue,dashed, line width=2.0pt, dash pattern=on 8pt off 3pt] (\c,\c) -- (1,1);
\draw[blue,dashed, line width=2.0pt, dash pattern=on 8pt off 3pt] (0,\c) -- (1,\onepc);
\draw[blue,dashed, line width=2.0pt, dash pattern=on 8pt off 3pt] (1,1) -- (1,\onepc);

\draw[reddash,line width=1.6pt] (0,\c) -- (\c,\c);
\draw[reddash,line width=1.6pt] (\c,\c) -- (\xe,\xe);
\draw[reddash,line width=1.6pt] (\xe,\xe) -- (1,\onepc);

\node[anchor=east]  at (0,1) {$1$};
\node[anchor=east]  at (0,\c) {$c$};
\node[anchor=north] at (\c,0) {$c$};
\node[anchor=north] at (\xeloc,0) {$1-\delta \varepsilon$};
\node[anchor=north] at (1,0) {$1$};

\end{tikzpicture}
\caption{}
\label{fig:right sandwich}
\end{subfigure}
\caption{
The blue dashed lines are upper boundary (i.e., $\cvx(\pooledmean) = \pooledmean + \kink$) and lower boundary (i.e., $\cvx(\pooledmean) = \max\{\pooledmean, \kink\}$) of the functional space $\cvxspacesandwich$ when we fix a particular value of $\kink$.
The red solid line in \Cref{fig:left sandwich} is a feasible convex function $\cvx \in\cvxspacesandwich$ in this functional space, while the red solid line in \Cref{fig:right sandwich} is the constructed convex function $\cvx$ in this functional space such that it leads to worst-case competitive ratio.}
\label{fig:sandwich}
\end{figure}
 
\begin{proof}
Fix additive valuation function $\buyerUtility(\type, \quality) = \type + \quality$.
The non-negativity, monotonicity and convexity of $\cvx_{\buyerUtility, \typeCDF}$ follows directly from \Cref{lem:revenue function necessary condition}. 

Fix any $\pooledmean\in[0,1]$. Under the additive valuation, we have
\begin{align*}
    \cvx_{\buyerUtility, \typeCDF}(\pooledmean) =\sup_{\type\in[0,1]} (1-\typeCDF(\type))\,(\pooledmean + \type).
\end{align*}

\paragraph{Lower bound by $\pooledmean\vee \Myer(\typeCDF)$.}
Choose $\type=0$. 
Since $\typeCDF(0)= \prob{\theta<0} = 0$, we have $1-\typeCDF(0)=1$, and therefore $\cvx_{\buyerUtility, \typeCDF}(\pooledmean) \ge (1-\typeCDF(0))\,(\pooledmean+0) = \pooledmean$.
For any $\type\in[0,1]$, since $\pooledmean \ge 0$ and $1-\typeCDF(\type)\ge 0$, we have
\begin{align*}
    (1-\typeCDF(\type))\,(\pooledmean + \type)\;\ge\;(1-\typeCDF(\type))\,\type~.
\end{align*}
Taking the supremum over $\type\in[0,1]$ yields
\begin{align*}
    \cvx_{\buyerUtility, \typeCDF}(\pooledmean)
    =\sup_{\type\in[0,1]} (1-\typeCDF(\type))\,(\pooledmean + \type)
    \;\ge\;
    \sup_{\type\in[0,1]} (1-\typeCDF(\type))\,\type
    =\Myer(\typeCDF)~.
\end{align*}
Combining the two inequalities implies $\cvx_{\buyerUtility, \typeCDF}(\pooledmean)\;\ge\; \pooledmean \vee \Myer(\typeCDF)$

\xhdr{Upper bound by $\pooledmean+\Myer(\typeCDF)$}
Fix any $\pooledmean\in[0,1]$ and any $\type\in[0,1]$. We can write
\begin{align*}
    (1-\typeCDF(\type))\,(\pooledmean + \type)
    = \pooledmean(1-\typeCDF(\type)) + \type (1-\typeCDF(\type))
    \le \pooledmean + \type (1-\typeCDF(\type))~. \tag{Due to $1-\typeCDF(\type)\le 1$ and $\pooledmean \ge 0$}
\end{align*}
Taking the supremum over $\type\in[0,1]$ gives us
\begin{align*}
    \cvx_{\buyerUtility, \typeCDF}(\pooledmean)
    =\sup_{\type\in[0,1]} (1-\typeCDF(\type))\,(\pooledmean + \type)
    \;\le\; \pooledmean+\Myer(\typeCDF)~.
\end{align*}
We thus finish the proof.
\end{proof}

Comparing the necessary conditions in \Cref{lem:sandwich} and \Cref{lem:revenue function necessary condition}, the sandwich-bound characterization in \Cref{lem:sandwich} is arguably more intuitive for additive valuation functions. Moreover, it can be verified that the condition in \Cref{lem:sandwich} is strictly weaker than the one in \Cref{lem:revenue function necessary condition}. See \Cref{fig:sandwich} for an illustration.

Similar to the reduction from $\RQDQuant$ to $\RDDQuant$ in \Cref{sec:rdd:proof of rqd quantile policy}, we consider the following robust disclosure design program~$\RDDQuantsand$, where $\cvxspacesandwich$ denotes the space of all non-negative, non-decreasing, convex functions $\cvx : [0,1] \to \reals_+$ satisfying
\begin{align*}
\max\{\pooledmean, \kink\} \leq \cvx(\pooledmean) \leq \pooledmean + \kink, \quad \forall\, \pooledmean \in [0,1],
\end{align*}
for all values $\kink \in [0,1]$. It can be verified that the posted-price revenue function $\cvx_{\buyerUtility, \typeCDF}$ induced by additive valuation function $\buyerUtility$ belongs to $\cvxspacesandwich$ (\Cref{lem:sandwich}). Consequently, the robust disclosure design program~$\RDDQuantsand$ serves as a valid relaxation of the original robust quality disclosure problem~$\RQDQuant$, as formalized below.

\begin{proposition}
\label{prop:sandwich}
    Fix any integer $\signalnum \in \naturals$. Let $\quantDisClass$ denote the class of all {\KQuantilePartitionDisclosure} policies, and let $\cvxspacesandwich$ be the space of all non-negative, non-decreasing, convex functions $\max\{\pooledmean, \kink\} \leq \cvx(\pooledmean) \leq \pooledmean + \kink$ for all $\pooledmean \in [0,1]$. Then:
    \begin{enumerate}
        \item[(i)] The optimal robust competitive ratio of program~$\RDDQuantsand$ is at least the optimal robust competitive ratio of program~$\RQDQuant$.
        
        \item[(ii)] For every {\KQuantilePartitionDisclosure} policy $\disclosurefn$ with quantile thresholds $\quants = (\quant_0, \dots, \quant_\signalnum)$, its robust competitive ratio in program~$\RDDQuantsand$ is at least
        \begin{align*}
            \voa(\disclosurefn \mid \cvxspacesandwich)
            \geq
            \max_{\quantileidx \in [\signalnum]}
\sup_{\kink\in[0, 1]} 
            \frac{\kink\cdot \quant_{\quantileidx-1} 
            + (\kink(1-\kink) + \kink) (\quant_{\quantileidx} - \quant_{\quantileidx-1})
+ (1+\kink)(1-\quant_{\quantileidx})}
            {\kink\cdot \quant_{\quantileidx} 
            + (\quant_{\quantileidx+1} - \quant_{\quantileidx})
+ (1+\kink)(1-\quant_{\quantileidx+1})}~.
        \end{align*}
    \end{enumerate}
\end{proposition}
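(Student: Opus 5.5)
Part (i) follows from a containment argument, not from a fresh relaxation. Part (ii) follows from an explicit instance, in the spirit of the matching lower bound for hinge functions, taken to a limit.

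\textbf{Part (i).} I do not plan to argue that $\cvx_{\buyerUtility,\typeCDF}\in\cvxspacesandwich$ for every linear valuation. \Cref{lem:sandwich} only covers the additive valuation, so that route would only compare with a restricted version of $\RQDQuant$. Instead I will show
\[
\voa(\disclosurefn\mid\cvxspacesandwich)\ \ge\ \voa(\disclosurefn\mid\cvxspace)\qquad\text{for every } \disclosurefn\in\quantDisClass .
\]
By \Cref{lem:normalization} and \Cref{lem:reduce-to-extremal}, $\voa(\disclosurefn\mid\cvxspace)$ equals the supremum over priors $\prior$ and $\kink\in[0,1]$ of the ratio evaluated at the hinge $\cvx_\kink(\pooledmean)=\max\{\kink,\pooledmean\}$. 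Each such hinge is non-negative, non-decreasing and convex. It satisfies $\max\{\pooledmean,\kink\}\le \cvx_\kink(\pooledmean)\le \pooledmean+\kink$, so it lies in $\cvxspacesandwich$ with the same parameter $\kink$. Hence the supremum defining $\voa(\disclosurefn\mid\cvxspacesandwich)$ dominates the one defining $\voa(\disclosurefn\mid\cvxspace)$.

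By \Cref{thm:rdd:optimal robust policy} together with \Cref{thm:quant partition:general policy}, the latter equals the robust competitive ratio of $\disclosurefn$ in $\RQDQuant$. Taking the infimum over $\disclosurefn\in\quantDisClass$ gives (i).

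\textbf{Part (ii).} Fix $\quantileidx$ and $\kink\in(0,1)$, and small parameters $\eta,\delta>0$. Following \Cref{fig:right sandwich}, I take the convex function
\[
\cvx(\pooledmean)=\max\{\kink,\pooledmean\}\ \text{ on } [0,1-\eta],
\]
joined linearly from $(1-\eta,1-\eta)$ to $(1,1+\kink)$. It is convex and non-decreasing, its slope only increases, and it stays between $\max\{\pooledmean,\kink\}$ and $\pooledmean+\kink$. So $\cvx\in\cvxspacesandwich$.

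I then build a prior bin by bin, using the quantile coupling $\pooledmean=\prior^{-1}(u)$ as in the proof of \Cref{lem:quantile partition:matching lower bound}.
\begin{itemize}
\item Quantiles in $(0,\quant_{\quantileidx-1}]$ sit at quality $0$.
\item Bin $\quantileidx$ is a two-point mixture: fraction $1-\kink$ at $0$ and fraction $\kink$ at $1-\eta$. Its posterior mean is about $\kink$, the kink.
\item Bin $\quantileidx+1$ puts fraction $1-\eta'$ at quality $1$ and fraction $\eta'$ at $1-\eta$, with $\eta'$ chosen so its posterior mean is about $1-\eta$, where $\cvx\approx 1$.
\item Quantiles above $\quant_{\quantileidx+1}$ sit at $1$.
\end{itemize}
Monotonicity of the quantile function forces the lower atom of bin $\quantileidx+1$ to sit above those of bin $\quantileidx$. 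This is consistent because both use the point $1-\eta$.

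Under full information the expected value of $\cvx$ is approximately
\[
\kink\,\quant_{\quantileidx-1}+\bigl((1-\kink)\kink+\kink\bigr)(\quant_\quantileidx-\quant_{\quantileidx-1})+(1+\kink)(1-\quant_\quantileidx),
\]
since bin $\quantileidx+1$ is almost entirely at $1$. Under the partition it is approximately
\[
\kink\,\quant_\quantileidx+(\quant_{\quantileidx+1}-\quant_\quantileidx)+(1+\kink)(1-\quant_{\quantileidx+1}).
\]
Sending $\eta,\eta'\to0$ then gives the displayed ratio. Taking the supremum over $\kink$ and the maximum over $\quantileidx$ finishes (ii). For $\quantileidx=\signalnum$ I read $\quant_{\signalnum+1}=1$, so the third bin term is simply absent.

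\textbf{Main obstacle.} The delicate step is tuning $\eta'$ against the slope $\kink/\eta$ of the last linear piece. The posterior mean of bin $\quantileidx+1$ has to land on the part of $\cvx$ where the value is about $1$, while its full-information contribution stays about $1+\kink$. The posterior mean of bin $\quantileidx$ must also stay exactly at (or just below) the kink, which needs a small correction of order $\eta$ to the weights. I expect to handle both by choosing $\eta'=\eta^2$ and checking that all error terms are $O(\eta)$.
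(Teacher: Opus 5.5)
\textbf{Part (i)} is correct, and it takes a different route from the paper's. The paper appeals to \Cref{lem:sandwich}, which is proved only for the additive valuation $\theta+\omega$. You instead observe that every hinge $\max\{c,x\}$ lies in $\mathcal{H}_{\textsc{sand}}$ and combine this with \Cref{lem:normalization,lem:reduce-to-extremal}. That gives $\Gamma(\Phi\mid\mathcal{H}_{\textsc{sand}})\ge\Gamma(\Phi\mid\mathcal{H})$. The right-hand side dominates the $\textsc{Rqd}$ ratio of $\Phi$ for every linear valuation by \Cref{lem:revenue function necessary condition}; the equality from the theorems is not needed. Your argument therefore covers the general linear case, which the paper's one-line proof does not literally justify.

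\textbf{Part (ii)} has a genuine gap, and it sits exactly at bin $r+1$. Your $h$ is affine on $[1-\eta,1]$, and bin $r+1$ is supported on $\{1-\eta,1\}\subset[1-\eta,1]$. So Jensen's inequality holds with equality on that bin, whatever $\eta'$ you choose: full information and the partition both give $1+c-\eta'(c+\eta)$ per unit of mass. With $\eta'=\eta^2$, for example, the posterior mean $1-\eta^3$ lies on the steep segment, where $h\approx 1+c$ rather than $h\approx 1$. Your denominator is therefore $cQ_r+(1+c)(1-Q_r)$, not $cQ_r+(Q_{r+1}-Q_r)+(1+c)(1-Q_{r+1})$. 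The bound you obtain is strictly weaker whenever $c>0$ and $Q_{r+1}>Q_r$. For $K=2$ and $Q_1=2/3$, your instances give at most $4/3$, whereas the claimed bound is $3/2$ (attained at $r=1$, $c=1/2$). No tuning of $\eta'$ can fix this, because bin $r+1$ produces a Jensen gap only if its support straddles a kink of $h$.

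The missing idea, which is what the paper does, is to separate the shared atom from the last kink. Put the shared atom at $1-\varepsilon$ and give it a fraction $\delta$ of bin $r+1$. Then place the last kink exactly at that bin's posterior mean $1-\delta\varepsilon>1-\varepsilon$. Concretely, take $h=\max\{c,x\}$ on $[0,1-\delta\varepsilon]$, joined linearly from $(1-\delta\varepsilon,1-\delta\varepsilon)$ to $(1,1+c)$.
\begin{itemize}
\item Since $h(1-\varepsilon)=1-\varepsilon$, bin $r$ is unaffected.
\item Bin $r+1$ has full-information value close to $1+c$, while $h$ at its posterior mean is $1-\delta\varepsilon\approx 1$.
\end{itemize}
Letting $\varepsilon,\delta\to 0$ then yields the displayed ratio. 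You also do not need the $O(\eta)$ correction to bin $r$'s weights: its posterior mean $c(1-\varepsilon)$ already lies below the kink $c$.
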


\begin{table}[H]
\centering
\renewcommand{\arraystretch}{1.35}
\setlength{\tabcolsep}{10pt}
\begin{tabular}{ccccccc}
\toprule
 & ${\quant_1}$ & ${\quant_2}$ & ${\quant_3}$ & ${\quant_4}$ & ${\quant_5}$ &  $\voa_\signalnum$ \\
\midrule
${K=1}$ & $1$      & --     & --     & --     & --          & $2$      \\
${K=2}$ & $0.667$ & $1$    & --     & --     & --          & $1.500$ \\
${K=3}$ & $0.664$ & $0.734$ & $1$  & --     & --         & $1.2658$ \\
${K=4}$ & $0.615$ & $0.660$ & $0.840$ & $1$ & --          & $1.2167$ \\
${K=5}$ & $0.560$ & $0.560$ & $0.760$ & $0.880$ & $1$   & $1.1628$  \\
\bottomrule
\end{tabular}
\caption{The numerical results for solving Part (ii) of \Cref{prop:sandwich}.}
\label{tab:robust-quantile-thresholds sandwitch}
\end{table}

Part (i) of \Cref{prop:sandwich} follows directly from \Cref{lem:sandwich}, since the function space $\cvxspacesandwich$ contains all posted-price revenue functions (as established in \Cref{lem:sandwich}), and thus the program $\RDDQuantsand$ is a relaxation of $\RDDQuant$. 
For Part (ii), we explicitly construct an indirect utility function $\cvx \in \cvxspacesandwich$ and a prior $\prior \in \Delta([0,1])$, and compute the corresponding objective value. (In fact, by an argument analogous to that in \Cref{sec:rdd:proof of rdd optimal robust policy}, one can show that the lower bound in Part (ii) is indeed the exact value of the robust competitive ratio $\voa(\disclosurefn \mid \cvxspacesandwich)$.)

We conduct a numerical evaluation of $\RDDQuantsand$ for $\signalnum = 1, 2, \dots, 5$, using a grid search over quantile thresholds followed by evaluation via the formula in Part (ii) of \Cref{prop:sandwich}. The resulting optimal quantile threshold profiles and robust competitive ratios are reported in \Cref{tab:robust-quantile-thresholds sandwitch}. 
It can be observed that, except for the case $\signalnum = 1$---where we recover the tight optimal robust competitive ratio of $2$, matching that of the original program $\RQDQuant$---for $\signalnum = 2, 3, 4, 5$, the optimal robust competitive ratio under $\RDDQuantsand$ is strictly worse (i.e., larger) than the optimal ratio $\voa_{\signalnum}^*$ characterized in \Cref{thm:opt quantile partition:optimal policy}. Moreover, the quantile threshold profiles identified under $\RDDQuantsand$ differ from those that are optimal for $\RQDQuant$. 
This result indicates that, unlike the program $\RDDQuant$ analyzed in \Cref{sec:rdd:proof of rqd quantile policy}, the conditions in \Cref{lem:sandwich} and the associated relaxation $\RDDQuantsand$ are insufficient to recover the tight bound for the original problem $\RQDQuant$.

\begin{proof}[Proof of \Cref{prop:sandwich}]
We prove each part separately.

\xhdr{Part (i)}
By \Cref{lem:sandwich}, the indirect utility function $\cvx_{\buyerUtility, \typeCDF}$ induced by any linear valuation function $\buyerUtility$ and consumer type distribution $\typeCDF$ belongs to the function class $\cvxspacesandwich$. Consequently, the robust quality disclosure problem~$\RQDQuant$ is a restriction of the robust disclosure design problem~$\RDDQuantsand$, and thus
\begin{align*}
\sup_{\substack{\buyerUtility \in \buyerUtilityClass \\ \prior, \typeCDF \in \Delta([0,1])}}
\frac{\OPT_{\buyerUtility,\prior,\typeCDF}}{\Rev_{\buyerUtility,\prior,\typeCDF}(\disclosurefn(\prior))}
\leq
\voa(\disclosurefn \mid \cvxspacesandwich)
\end{align*}
for every {\KQuantilePartitionDisclosure} $\disclosurefn$.
Taking the $\inf_{\disclosurefn}$ over all {\KQuantilePartitionDisclosure} disclosure policies on both sides would give us the desired result.

\xhdr{Part (ii)} We prove this part by construction.
In particular, we show that for any given {\KQuantilePartitionDisclosure} $\disclosurefn$, for each $\quantileidx\in[\quantileidx]$,
there exists a pair of quality distribution $\prior$ and a convex function $\cvx\in\cvxspacesandwich$, such that 
\begin{align}
    \frac{\expect[\pooledmean\sim \prior]{\cvx(\pooledmean)}}{\expect[\pooledmean\sim \disclosurefn(\prior)]{\cvx(\pooledmean)}}
    = 
\sup_{\kink\in[0, 1]} 
    \frac{\kink\cdot \quant_{\quantileidx-1} 
    + (\kink(1-\kink) + \kink) (\quant_{\quantileidx} - \quant_{\quantileidx-1})
+ (1+\kink)(1-\quant_{\quantileidx})}
    {\kink\cdot \quant_{\quantileidx} 
    + (\quant_{\quantileidx+1} - \quant_{\quantileidx})
+ (1+\kink)(1-\quant_{\quantileidx+1})}~.
    \label{sandwich: revenue ratio}
\end{align}
To see this, let us fix an index $\quantileidx\in[\signalnum]$ and fix $\varepsilon, \delta\in(0, 1)$,
we construct quality distribution $\prior$ such that its $\supp(\prior) = \{0, 1-\varepsilon, 1\}$ such that 
\begin{align*}
    \prob[\prior]{\quality = 0}
    & = \quant_{\quantileidx-1} + (1-\kink) \cdot (\quant_{\quantileidx} - \quant_{\quantileidx-1})~,\\
    \prob[\prior]{\quality = 1-\varepsilon}
    & = \kink\cdot (\quant_{\quantileidx} - \quant_{\quantileidx-1}) + \delta\cdot (\quant_{\quantileidx+1} - \quant_{\quantileidx})\\
    \prob[\prior]{\quality = 1}
    & = 1 - \quant_{\quantileidx} - \delta\cdot (\quant_{\quantileidx+1} - \quant_{\quantileidx})~.
\end{align*}
Define the convex function $\cvx$ as follows:
\begin{align*}
    \cvx(\pooledmean)=
    \begin{cases}
    \kink, & \pooledmean \le \kink,\\[3pt]
    \pooledmean, & \pooledmean \in [\kink,\, 1-\delta\varepsilon],\\[3pt]
    \dfrac{\kink+\delta\varepsilon}{\delta\varepsilon}\left(\pooledmean-(1-\delta\varepsilon)\right) + 1-\delta\varepsilon,
    & \kink \in [\,1-\delta\varepsilon,\, 1\,].
    \end{cases}
\end{align*}
Under such pair of $(\prior, \cvx)$, taking $\varepsilon, \delta \rightarrow 0$ and taking the $\sup_{\kink\in[0, 1]}$, one can get the ratio in Eqn.~\eqref{sandwich: revenue ratio}.
Then taking $\max_{\quantileidx\in[\signalnum]}$ on RHS of Eqn.~\eqref{sandwich: revenue ratio} would give us the desired result in Part (ii). 
\end{proof}

\section{Robust Quality Partition Disclosure Policy}
\label{apx:quality partition}

In this section, we consider another class of simple disclosure policies that is closely related to {\KQuantilePartitionDisclosure}. Instead of partitioning based on quantile thresholds, this class uses quality thresholds.

A {\KQualityPartitionDisclosure} is parameterized by $(\signalnum+1)$ weakly increasing quality thresholds $0 = \quality_0 \leq \quality_1 \leq \dots \leq \quality_\signalnum = 1$ and $(\signalnum+1)$ partitional probabilities $\xi_0=0, \xi_1, \dots, \xi_{\signalnum-1} \in [0,1], \xi_\signalnum = 1$. Given a quality distribution $\prior \in \Delta([0,1])$, it outputs a monotone-partition signaling scheme that, for each $\quantileidx \in [\signalnum]$, pools into a single signal indexed by $\quantileidx$ with all qualities in the open interval $(\quality_{\quantileidx-1}, \quality_{\quantileidx})$,
a fraction $(1 - \xi_{\quantileidx-1})$ of the mass at quality $\quality_{\quantileidx-1}$, and
a fraction $\xi_{\quantileidx}$ of the mass at quality $\quality_{\quantileidx}$.
Consequently, the induced signaling scheme $\signalscheme$ has signal space $\signalSpace = [\signalnum]$. See \Cref{fig:quality} for an illustration.

\begin{figure}[H]
\centering

\tikzset{
  dashedCut/.style={dash pattern=on 2.2pt off 2.2pt, line width=0.6pt, black, opacity=0.8},
  axisLine/.style={line width=0.7pt, black},
  tick/.style={line width=0.6pt, black},
  arr/.style={
    -{Stealth[length=1.5mm,width=1.5mm]},
    line width=0.8pt,
    black,
    opacity=0.9
  }
}

\begin{minipage}[t]{0.49\textwidth}
\centering
\begin{tikzpicture}[font=\small]
\def\H{5.0}
\def\barW{0.65}
\def\xAxisStart{0.55}
\def\xPrior{2}
\def\xSig{4}
\def\xAxisEnd{5.10}

\def\qA{0.33333}\def\qB{0.66667}
\newcommand{\yq}[1]{#1*\H}

\colorlet{BlueLo}{blue!6}
\colorlet{BlueMid}{blue!45}
\colorlet{BlueHi}{blue!92}
\colorlet{SigD1}{BlueLo}
\colorlet{SigD2}{blue!30}
\colorlet{SigD3}{blue!65}
\colorlet{SigD4}{BlueHi}

\draw[axisLine] (\xAxisStart,0) -- (\xAxisEnd,0);
\draw[axisLine] (\xAxisStart,0) -- (\xAxisStart,\H);

\foreach \qq/\lab in {0/0,\qA/{$\tfrac13$},\qB/{$\tfrac23$},1/1}{
  \draw[tick] (\xAxisStart,\yq{\qq}) -- ++(-0.07,0);
  \node[anchor=east] at (\xAxisStart-0.09,\yq{\qq}) {\lab};
}

\draw[tick] (\xPrior,0) -- ++(0,-0.07);
\draw[tick] (\xSig,0) -- ++(0,-0.07);
\node[anchor=north] at (\xPrior,-0.12) {prior};
\node[anchor=north] at (\xSig,-0.12) {signals};

\fill[BlueLo]  (\xPrior-\barW/2, 0)      rectangle (\xPrior+\barW/2, \H/3);
\fill[BlueMid] (\xPrior-\barW/2, \H/3)   rectangle (\xPrior+\barW/2, 2*\H/3);
\fill[BlueHi]  (\xPrior-\barW/2, 2*\H/3) rectangle (\xPrior+\barW/2, \H);

\fill[BlueLo]  (\xSig-\barW/2, 0)      rectangle (\xSig+\barW/2, \H/3);
\fill[BlueMid] (\xSig-\barW/2, \H/3)   rectangle (\xSig+\barW/2, 2*\H/3);
\fill[BlueHi]  (\xSig-\barW/2, 2*\H/3) rectangle (\xSig+\barW/2, \H);

\def\gap{0.03}
\def\xL{\xPrior+\barW/2+\gap}
\def\xR{\xSig-\barW/2-\gap}
\def\mOne{1/6}\def\mTwo{1/2}\def\mThr{5/6}

\draw[arr] (\xL, \yq{\mOne})  -- (\xR, \yq{\mOne});
\draw[arr] (\xL, \yq{\mTwo}) -- (\xR, \yq{\mTwo});
\draw[arr] (\xL, \yq{\mThr}) -- (\xR, \yq{\mThr});

\draw[dashedCut] (\xAxisStart,\yq{\qA}) -- (\xAxisEnd,\yq{\qA});
\draw[dashedCut] (\xAxisStart,\yq{\qB}) -- (\xAxisEnd,\yq{\qB});
\end{tikzpicture}
\end{minipage}\hfill
\begin{minipage}[t]{0.49\textwidth}
\centering
\begin{tikzpicture}[font=\small]
\def\H{5.0}
\def\barW{0.65}
\def\xAxisStart{0.55}
\def\xPrior{2}
\def\xSig{4}
\def\xAxisEnd{5.10}
\def\qA{0.25}\def\qB{0.50}\def\qC{0.75}
\newcommand{\yq}[1]{#1*\H}

\colorlet{SigC1}{blue!15}
\colorlet{SigC2}{blue!35}
\colorlet{SigC3}{blue!60}
\colorlet{SigC4}{blue!85}

\draw[axisLine] (\xAxisStart,0) -- (\xAxisEnd,0);
\draw[axisLine] (\xAxisStart,0) -- (\xAxisStart,\H);

\foreach \qq/\lab in {0/0,\qA/0.25,\qB/0.5,\qC/0.75,1/1}{
  \draw[tick] (\xAxisStart,\yq{\qq}) -- ++(-0.07,0);
  \node[anchor=east] at (\xAxisStart-0.09,\yq{\qq}) {\lab};
}

\draw[tick] (\xPrior,0) -- ++(0,-0.07);
\draw[tick] (\xSig,0) -- ++(0,-0.07);
\node[anchor=north] at (\xPrior,-0.12) {prior};
\node[anchor=north] at (\xSig,-0.12) {signals};

\shade[bottom color=blue!6, top color=blue!92]
  (\xPrior-\barW/2,0) rectangle (\xPrior+\barW/2,\H);

\fill[SigC1] (\xSig-\barW/2, 0)        rectangle (\xSig+\barW/2, \yq{\qA});
\fill[SigC2] (\xSig-\barW/2, \yq{\qA}) rectangle (\xSig+\barW/2, \yq{\qB});
\fill[SigC3] (\xSig-\barW/2, \yq{\qB}) rectangle (\xSig+\barW/2, \yq{\qC});
\fill[SigC4] (\xSig-\barW/2, \yq{\qC}) rectangle (\xSig+\barW/2, \H);

\def\gap{0.03}
\def\xL{\xPrior+\barW/2+\gap}
\def\xR{\xSig-\barW/2-\gap}
\foreach \m in {0.125,0.375,0.625,0.875}{
  \draw[arr] (\xL, \yq{\m}) -- (\xR, \yq{\m});
}

\draw[dashedCut] (\xAxisStart,\yq{\qA}) -- (\xAxisEnd,\yq{\qA});
\draw[dashedCut] (\xAxisStart,\yq{\qB}) -- (\xAxisEnd,\yq{\qB});
\draw[dashedCut] (\xAxisStart,\yq{\qC}) -- (\xAxisEnd,\yq{\qC});
\end{tikzpicture}
\end{minipage}

\caption{An illustration of {\KQualityPartitionDisclosure} disclosure. Here $\signalnum = 4$ and $\quality_0 = 0, \quality_1 = 0.25, \quality_2 = 0.5, \quality_3 = 0.75, \quality_4 = 1, \xi_0 = 0, \xi_\quantileidx = 1$ for all $\quantileidx\in[4]$. The $y$-axis is cumulative probability mass. 
Left: uniform discrete prior with $\supp(\prior)=\{0,0.5,1\}$. Right: uniform continuous prior with $\supp(\prior)=[0,1]$.
With the same {\KQualityPartitionDisclosure}, the induced posterior-mean distributions can differ across priors. In particular, for the uniform discrete prior on the left, one induced signal is degenerate because there is no prior probability mass between the quality thresholds $\quality_2 = 0.5$ and $\quality_3 = 0.75$.}
\label{fig:quality}
\end{figure}

\begin{theorem}[Robust optimal quality partition]
\label{thm:opt quality partition}
Fix any integer $\signalnum \in \naturals$. Let $\disclosurefnClass_\signalnum\primed$ denote the class of all {\KQualityPartitionDisclosure} policies. Then the robust quality disclosure problem~$\RQDQuality$ satisfies the following: the optimal robust competitive ratio is $2$, and this value is achieved by every {\KQualityPartitionDisclosure} in $\disclosurefnClass_\signalnum\primed$.
\end{theorem}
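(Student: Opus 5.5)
Two directions are needed: an upper bound of $2$ for every {\KQualityPartitionDisclosure}, and a matching lower bound showing that no such policy beats $2$.

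\emph{Upper bound.} Fix $\disclosurefn\in\disclosurefnClass_\signalnum\primed$ and any instance $(\buyerUtility,\prior,\typeCDF)$. The signaling scheme $\disclosurefn(\prior)$ is a monotone-partition scheme with at most $\signalnum$ signals. By \Cref{cor:full infor optimal no infor 2 approx}, every monotone-partition scheme satisfies $2\cdot\Rev(\signalscheme)\ge\OPT$. Equivalently, $\disclosurefn(\prior)$ is induced on this particular $\prior$ by some {\KQuantilePartitionDisclosure}, and \Cref{cor:uniform quantile disclosure} says every quantile partition has robust ratio at most $2$. Taking the supremum over instances gives a ratio of at most $2$ for every policy in the class.

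\emph{Lower bound.} Fix thresholds $0=\quality_0\le\dots\le\quality_\signalnum=1$ and tie-breaking probabilities $\xi$, and fix $\varepsilon>0$. The idea is to use the fact that the policy is blind to where the prior mass sits relative to its fixed quality thresholds. Choose a bin $r$ whose open interval $(\quality_{r-1},\quality_r)$ is nonempty; one exists because $\quality_0=0<1=\quality_\signalnum$. Place the entire support of $\prior$ strictly inside this interval, say on three points $a<t<b$ in $(\quality_{r-1},\quality_r)$. Then $\disclosurefn(\prior)$ pools everything into one signal and coincides with the no-information scheme. This reduces the problem to showing that no information can be a factor close to $2$ away from optimal when the support is confined to an arbitrary small interval $[a,b]\subset(0,1)$.

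The main obstacle is that the hinge construction of \Cref{lem:quantile partition:matching lower bound} (Case 2 with $\signalnum=1$) used support $\{0,t,1\}$. With support shrunk to $[a,b]$, a plain hinge $\max\{\kink,\pooledmean\}$ gives a ratio near $1$, because the linear lower bound makes $\cvx$ nearly flat relative to its level. I would handle this with a valuation that is affine in quality but steep on $[a,b]$. Take $\buyerUtility(\type,\quality)=\type\cdot\ell(\quality)+(1-\type)\cdot\ell'(\quality)$-type constructions, or more simply reparametrize quality by $\tilde\quality=(\quality-a)/(b-a)$ and use $\buyerUtility(\type,\quality)=\type(1-\tilde\quality)+\tilde\quality$. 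This valuation is affine and increasing in $\quality$, and it is nonnegative on $[a,b]$; it is clamped or extended so as to stay nonnegative on $[0,1]$, which is harmless since $\prior$ has no mass outside $[a,b]$. With a Bernoulli type of mean $\kink$, \Cref{lem:stoploss-realize} then gives $\cvx(\pooledmean)=\max\{\kink,\tilde\pooledmean\}$ on the support.

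Finally, transport the prior $\prior_\varepsilon$ from Case 2 of \Cref{lem:quantile partition:matching lower bound} (masses $1-\varepsilon,\varepsilon/2,\varepsilon/2$ on $\tilde\quality\in\{0,t,1\}$) into $[a,b]$. The computation there goes through verbatim with $\quant_{\signalnum-1}=0$ and yields ratio $2-\varepsilon$. I expect the delicate point to be checking that the extended valuation remains in $\buyerUtilityClass$ on all of $[0,1]$, in particular nonnegativity and affinity simultaneously. If that fails, I would instead move the support to the endpoint bin, where $\quality_0=0$ can be used as the atom, and treat the boundary tie-breaking via $\xi$ separately. Letting $\varepsilon\to0$ shows the ratio of every policy is $2$, which proves the theorem.
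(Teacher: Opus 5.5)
Your upper bound is correct and is the paper's argument: every quality partition outputs a monotone partition, so \Cref{cor:full infor optimal no infor 2 approx} applies. The lower bound, however, fails along your main route. Your reparametrized valuation $\buyerUtility(\type,\quality)=\type(1-\tilde\quality)+\tilde\quality$, with $\tilde\quality=(\quality-a)/(b-a)$, equals $-a/(b-a)<0$ at $\type=0,\ \quality=0$, and your Bernoulli type puts mass on $\type=0$. An affine function on $[0,1]$ is pinned down by its values on $[a,b]$, so no extension keeps it both affine and non-negative, and clamping destroys affinity. This is not a fixable technicality. Your target $\max\{\kink,\tilde\pooledmean\}$ violates the bound $\cvx(\pooledmean)\ge\pooledmean\cdot\cvx(1)$ of \Cref{lem:revenue function necessary condition}; this happens at $\pooledmean=b$ whenever $a>0$ and $b<1$. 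In fact no admissible instance works in a bin with $\quality_{r-1}>0$. By \Cref{lem:reduce-to-extremal} it suffices to test hinges $\max\{\kink,\pooledmean\}$. For a prior supported in $[a,b]$, spread the prior to $\{a,b\}$ and optimize over the mean and over $\kink$. This bounds the no-information ratio by $1+\frac{(\sqrt b-\sqrt a)^2}{b-a}=\frac{2\sqrt b}{\sqrt a+\sqrt b}$, which is at most $\frac{2\sqrt{\quality_r}}{\sqrt{\quality_{r-1}}+\sqrt{\quality_r}}<2$. So the bin cannot be arbitrary.

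The missing idea is that hinges are scale invariant: $\max\{s\kink,s\pooledmean\}=s\max\{\kink,\pooledmean\}$. No ``stretching'' of the support is needed; all that matters is that the ratio of the lowest to the highest support point tends to $0$. You must therefore use the bin adjacent to zero, $j=\min\{r:\quality_r>0\}$. Put the whole support inside the open interval $(0,\quality_j)$, with the low atom tiny relative to the high one. Then take the plain hinge of \Cref{lem:stoploss-realize} with $\kink$ equal to the prior mean. This is exactly the paper's proof: a prior on $\{\varepsilon^2\quality_j/2,\ \quality_j/2\}$ with masses $\frac{1}{1+\varepsilon}$ and $\frac{\varepsilon}{1+\varepsilon}$, mean $\varepsilon\quality_j/2$, giving ratio $\frac{2}{1+\varepsilon}$. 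Your fallback points in this direction, but keeping an atom \emph{at} $\quality_0=0$ is counterproductive. If $\quality_1=\dots=\quality_{j-1}=0$, the tie-breaking probabilities may route part of that atom to separate signals. That leaks information and caps the ratio below $2$, just as $\quant_{\signalnum-1}>0$ does in Case 2 of \Cref{lem:quantile partition:matching lower bound}. Placing the low atom at a small positive point strictly inside $(0,\quality_j)$ avoids tie-breaking altogether.
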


In contrast to \Cref{thm:opt quantile partition:optimal policy} for {\KQuantilePartitionDisclosure}, \Cref{thm:opt quality partition} reveals that the robust quality disclosure problem under {\KQualityPartitionDisclosure} is degenerate: all such policies are equally effective (or ineffective), yielding a robust competitive ratio of exactly $2$ regardless of the number of partitions~$\signalnum$. 

By comparison, the optimal robust competitive ratio under {\KQuantilePartitionDisclosure} improves with $\signalnum$, converging to $1$ at a rate of $\Theta(1/\signalnum)$ (\Cref{prop:optimal robust CR monotonicity}). This stark contrast highlights that the {\KQuantilePartitionDisclosure} policies studied in the main text are not only theoretically richer but also practically more efficient for robust quality disclosure.

\begin{proof}[Proof of \Cref{thm:opt quality partition}]
Invoking \Cref{cor:full infor optimal no infor 2 approx}, since every monotone partition signaling scheme is 2-approximation under any consumer valuation function, type distribution and product quality distribution, the robust competitive ratio of any {\KQualityPartitionDisclosure} $\disclosurefn$ is at most 2. It remains to show a matching lower bound.

Let us fix any $\signalnum \in \naturals$, and an pair of $(\qualities, \partitionprobs)$, and consider a {\KQualityPartitionDisclosure} $\disclosurefn$ that is parameterized by $\qualities, \partitionprobs$. 
Fix any $\varepsilon \in (0, 1)$, and define the index 
\begin{align*}
    j \triangleq \min\{\quantileidx \in [\signalnum]: \quality_\quantileidx > 0\}
\end{align*}
Such $j$ exists because $\quality_\signalnum = 1 > 0$, so the open interval $(0, \quality_j)$ is nonempty.

We consider a quality distribution $\prior_\varepsilon$ supported on $\{\varepsilon^2 \cdot \frac{\quality_j}{2}, \frac{\quality_j}{2}\}$ with the mass 
$\frac{1}{1 + \varepsilon}, \frac{\varepsilon}{1 + \varepsilon}$, respectively.
By definition, the mean of this quality distribution is given by 
\begin{align*}
    \priormean 
    = \frac{1}{1 + \varepsilon} \varepsilon^2 \cdot \frac{\quality_j}{2} + \frac{\varepsilon}{1 + \varepsilon} \cdot \frac{\quality_j}{2} 
    = \varepsilon \cdot \frac{\quality_j}{2}~.
\end{align*}
Note that by construction, we have $\supp(\prior_\varepsilon) \subset (0, \quality_j)$.
Thus, the {\KQualityPartitionDisclosure} $\disclosurefn$ parameterized with $\qualities, \partitionprobs$ with the input $\prior_\varepsilon$ essentially outputs a no-information signaling scheme. 
Namely, we have $\disclosurefn(\prior_\varepsilon) = \delta_{(\priormean)}$.
Now consider a valuation distribution $\buyerUtility$ and the type distribution $\typeCDF_\varepsilon$ defined in \Cref{lem:stoploss-realize} with $\kink\gets \priormean$.
Then the induced optimal posted-price revenue function $\cvx_{\buyerUtility, \typeCDF_\varepsilon}(\pooledmean) = \max\{\pooledmean, \priormean\}$.
Thus, we have 
\begin{align*}
    \frac{\OPT_{\buyerUtility, \prior_\varepsilon, \typeCDF_\varepsilon}}{\Rev_{\buyerUtility, \prior_\varepsilon, \typeCDF_\varepsilon}(\disclosurefn(\prior_\varepsilon))} 
    = 
    \frac{\frac{1}{1+\varepsilon} \cdot \varepsilon\cdot \frac{\quality_j}{2} + \frac{\varepsilon}{1+\varepsilon}\cdot \frac{\quality_j}{2}}{\varepsilon\cdot \frac{\quality_j}{2}}
    = \frac{2}{1+\varepsilon}~.
\end{align*}
We thus finish the proof.
\end{proof}

\section{Missing Proofs}

\begin{lemma}[Monotonicity of $\recfn_\signalnum(\voa)$]
\label{lem:Tk-monotone}
Fix any $\signalnum\in\naturals$. Function $\recfn_\signalnum(\voa)$ defined in \Cref{thm:opt quantile partition:optimal policy} is continuous and strictly increasing in $\voa\in(1,\infty)$. Moreover,
\begin{align*}
    \recfn_\signalnum(1) = 0~,
    \quad\text{and}\quad
    \lim\limits_{\voa \to\infty}\recfn_\signalnum(\voa)=\infty~.
\end{align*}
\end{lemma}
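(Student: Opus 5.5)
The plan is to prove everything by induction on $\signalnum$, working with the iterates $z_k(\voa) \triangleq \basefn_\voa^{(k)}(0)$, so that $\recfn_\signalnum(\voa)=z_\signalnum(\voa)$. The basic object is the map $(\voa,z)\mapsto \basefn_\voa(z) = z + (\voa-1)(1+\sqrt{z})^2$. It is jointly continuous on $[1,\infty)\times[0,\infty)$. For fixed $\voa\ge 1$ it is non-decreasing in $z$, and strictly increasing for $\voa>1$. For fixed $z\ge0$ it is strictly increasing in $\voa$, since $(1+\sqrt z)^2\ge 1>0$.

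One point needs care. $\basefn_\voa$ is nominally defined on $[0,1]$, but the iterates may leave $[0,1]$ before the final step. I will therefore regard $\basefn_\voa$ as defined by the same formula on all of $[0,\infty)$. Presumably this is the paper's implicit convention, since it is needed for $\recfn_\signalnum(\voa)\to\infty$ to make sense.

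The main steps, in order:

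\textbf{Continuity.} By induction, $z_k$ is a composition of continuous maps and hence continuous in $\voa$. Specifically, $z_k(\voa)=\basefn_\voa(z_{k-1}(\voa))$ is continuous because $(\voa,z)\mapsto\basefn_\voa(z)$ is jointly continuous and $z_{k-1}$ is continuous.

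\textbf{Strict monotonicity.} Fix $1<\voa<\voa'$ and induct on $k$. For $k=1$ we have $z_1(\voa)=\voa-1$, which is strictly increasing. Suppose $z_{k-1}(\voa)\le z_{k-1}(\voa')$. Then
\begin{align*}
z_k(\voa)=\basefn_\voa(z_{k-1}(\voa)) \le \basefn_\voa(z_{k-1}(\voa')) < \basefn_{\voa'}(z_{k-1}(\voa')) = z_k(\voa').
\end{align*}
The first inequality uses monotonicity of $\basefn_\voa$ in $z$, and the strict inequality uses strict monotonicity in $\voa$ at a fixed point $z\ge0$. This gives strict increase for every $\signalnum\ge1$.

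\textbf{Boundary values.} At $\voa=1$ we have $\basefn_1(z)=z$, so $z_k(1)=0$ for all $k$ and $\recfn_\signalnum(1)=0$. For the limit, note that $\basefn_\voa(z)\ge z+(\voa-1)$ for $z\ge0$, so inductively $z_k(\voa)\ge k(\voa-1)$. Hence $\recfn_\signalnum(\voa)\ge \signalnum(\voa-1)\to\infty$. Alternatively, the single bound $z_\signalnum\ge z_1=\voa-1$ suffices, using $z_k\ge z_{k-1}$. Continuity extends to the endpoint $\voa=1$ by the same joint continuity argument, which is what the intermediate value application in the proof of \Cref{thm:rdd:optimal robust policy} requires.

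There is no real obstacle here. The only subtle point is the domain issue noted above: the composition must be well defined and monotone even when intermediate iterates exceed $1$. I will handle it by extending $\basefn_\voa$ to $[0,\infty)$ by its formula, and I will remark that this extension is harmless because only the region where $\recfn_\signalnum\le 1$ matters for $\voa_\signalnum^*$.
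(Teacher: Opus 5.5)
Your proposal is correct and follows the same route as the paper. It uses continuity of the $\signalnum$-fold composition, strict monotonicity from the pointwise strict increase of $\voa\mapsto\basefn_\voa(z)$, $\basefn_1(z)=z$ for the value at $\voa=1$, and the bound $\basefn_\voa(z)\ge z+(\voa-1)$ giving $\recfn_\signalnum(\voa)\ge \signalnum(\voa-1)$. Your induction also makes explicit three points the paper leaves implicit: monotonicity of $\basefn_\voa$ in $z$ is needed to carry the strict inequality through the composition, $\basefn_\voa$ must be read on $[0,\infty)$, and continuity at $\voa=1$ is what the later intermediate-value step actually uses.
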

\begin{proof}
Define auxiliary function $\varphi(z) \triangleq (1+\sqrt{z})^2$.
For each fixed $z\in[0,1]$, the mapping $\voa\mapsto \basefn_\voa(z)=z+(\voa - 1)\varphi(z)$ is continuous and strictly increasing on $(1,\infty)$ because $\varphi(z)\ge 1>0$. 
Thus, for each fixed $\voa$, the $\signalnum$-fold composition $\basefn_\voa^{\circ \signalnum}(0)$ is also continuous in $\voa$, and the strict monotonicity is preserved under composition: 
if $\voa_2>\voa_1>1$, then $\basefn_{\voa_2}(z)>\basefn_{\voa_1}(z)$ for all $z$, hence $\basefn_{\voa_2}^{\circ \signalnum}(0)>\basefn_{\voa_1}^{\circ \signalnum}(0)$, i.e., function $\voa \mapsto \recfn_\signalnum(\voa)$ is strictly increasing.
At $\voa = 1$, we have that $\basefn_1(z)=z$, so we have $\recfn_\signalnum(1)=0$.
For $\voa > 1$ and $z\ge 0$, we have that $\varphi(z)\ge 1$, hence
    \begin{align*}
        \basefn_\voa(z)=z+(\voa - 1)\varphi(z)\ge z+(\voa - 1)~.
    \end{align*}
    Iterating from $0$ yields $\recfn_\signalnum(\voa)=\basefn_\voa^{\circ \signalnum}(0)\ge \signalnum\cdot (\voa - 1)$, which approaches to $\infty$ as $\voa\to\infty$.
This completes the proof as desired.
\end{proof}

\begin{lemma}[Feasibility equivalence]
\label{lem:feasibility-equivalence}
Fix any $\voa > 1$. There exists a {\KQuantilePartitionDisclosure} $\disclosurefn$ parameterized by $\quants$ such that $\voa(\disclosurefn \mid \cvxspace)\le \voa$ if and only if $\recfn_\signalnum(\voa)\ge 1$.
\end{lemma}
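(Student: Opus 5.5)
By Part (ii) of \Cref{thm:rdd:optimal robust policy}, the condition $\voa(\disclosurefn \mid \cvxspace)\le\voa$ for the policy with thresholds $\quants$ is equivalent to the bin-wise constraints
\begin{align*}
\quant_\quantileidx-\quant_{\quantileidx-1}\le(\voa-1)\left(1+\sqrt{1-\quant_\quantileidx}\right)^2,\qquad \quantileidx\in[\signalnum].
\end{align*}
We change variables to $y_\quantileidx\triangleq 1-\quant_\quantileidx$, so that $y_\signalnum=0$ and $y_0=1$. The constraints then become
\begin{align*}
y_{\quantileidx-1}\le y_\quantileidx+(\voa-1)\left(1+\sqrt{y_\quantileidx}\right)^2=\basefn_\voa(y_\quantileidx).
\end{align*}
The lemma thus reduces to a one-dimensional reachability question. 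Does there exist a non-increasing chain $0=y_\signalnum\le y_{\signalnum-1}\le\dots\le y_0=1$ with $y_{\quantileidx-1}\le\basefn_\voa(y_\quantileidx)$ for every $\quantileidx$?

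For the ``only if'' direction we use induction backwards from $\quantileidx=\signalnum$. The key fact is that $\basefn_\voa$ is non-decreasing on $[0,1]$, since $z\mapsto z$ and $z\mapsto(1+\sqrt z)^2$ are both increasing. From $y_\signalnum=0$ we get $y_{\signalnum-1}\le\basefn_\voa(0)$. Inductively, if $y_\quantileidx\le\basefn_\voa^{\circ(\signalnum-\quantileidx)}(0)$, then monotonicity gives
\begin{align*}
y_{\quantileidx-1}\le\basefn_\voa(y_\quantileidx)\le\basefn_\voa^{\circ(\signalnum-\quantileidx+1)}(0).
\end{align*}
At $\quantileidx=0$ this yields $1=y_0\le\recfn_\signalnum(\voa)$. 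One subtlety: $\basefn_\voa$ is defined on $[0,1]$, and the iterates may exceed $1$. We simply extend $\basefn_\voa$ by the same formula to $z\ge0$, where it remains increasing, so the comparison is unaffected.

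For the ``if'' direction, assume $\recfn_\signalnum(\voa)\ge1$. We build the thresholds greedily, setting $y_\signalnum=0$ and $y_{\quantileidx-1}=\min\{1,\basefn_\voa(y_\quantileidx)\}$. The chain is non-decreasing in the backward direction because $\basefn_\voa(z)\ge z$, and all $y_\quantileidx\in[0,1]$. Each constraint $y_{\quantileidx-1}\le\basefn_\voa(y_\quantileidx)$ holds by construction.

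It remains to check $y_0=1$, which requires the truncation to be harmless. If the minimum never binds, then $y_0=\recfn_\signalnum(\voa)\ge1$, so the minimum must in fact bind at the last step and $y_0=1$. If it binds at some earlier step, then $y=1$ from there on, since $\basefn_\voa(1)\ge1$. In either case $y_0=1$. Equivalently $\quant_0=0$, so the profile $\quant_\quantileidx=1-y_\quantileidx$ is a valid $\signalnum$-quantile threshold profile; repeated thresholds are allowed, giving empty bins with zero-length terms.

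Finally, we invoke Part (ii) again to conclude $\voa(\disclosurefn\mid\cvxspace)\le\voa$. The only points needing care are the monotonicity of $\basefn_\voa$ and the handling of truncation at $1$. Degenerate bins are harmless, because each term of the max in Part (ii) is at least $1<\voa$.
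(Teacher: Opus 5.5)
Your proof is correct. The ``only if'' direction matches the paper's. You rewrite $\voa(\disclosurefn\mid\cvxspace)\le\voa$ via Part (ii) as $1-\quant_{\quantileidx-1}\le\basefn_\voa(1-\quant_\quantileidx)$ and iterate using monotonicity of $\basefn_\voa$. Your remark that $\basefn_\voa$ must be extended beyond $[0,1]$ makes explicit something the paper uses without comment.

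The ``if'' direction takes a genuinely different route.

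\textbf{The paper's construction.} It runs the untruncated recursion $s_\signalnum=0$, $s_{\quantileidx-1}=\basefn_\voa(s_\quantileidx)$, so that $s_0=\recfn_\signalnum(\voa)\ge 1$, and then rescales to $\quant_\quantileidx=1-s_\quantileidx/s_0$. Checking the bin constraints then needs an extra inequality: $\varphi(\alpha z)\ge\alpha\varphi(z)$ for $\alpha\in[0,1]$, where $\varphi(z)=(1+\sqrt z)^2$. The paper states this only for $z\in[0,1]$ but applies it with $z=s_\quantileidx$, which can exceed $1$; this is harmless because the inequality holds for all $z\ge 0$.

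\textbf{Your construction.} The greedy truncation $y_{\quantileidx-1}=\min\{1,\basefn_\voa(y_\quantileidx)\}$ makes every constraint hold by construction. It uses only $\basefn_\voa(z)\ge z$ and $\basefn_\voa(1)\ge 1$, and never leaves $[0,1]$.

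\textbf{Trade-off.} When $\recfn_\signalnum(\voa)>1$, your profile may have several thresholds equal to $0$. These empty bins are allowed by the weakly increasing definition and each contributes a term of exactly $1$. The paper's rescaled profile is strictly increasing and spreads the slack strictly across all $\signalnum$ bins. For the existence claim either construction suffices, and at $\voa=\voa_\signalnum^*$ both reduce to $\quants^*$.

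A minor wording point: a degenerate bin's term is exactly $1$, not ``at least $1$''. That remark is also redundant, since $y_{\quantileidx-1}=y_\quantileidx\le\basefn_\voa(y_\quantileidx)$ already covers it.
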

\begin{proof}
Fix any $\voa > 1$. 
Suppose there exists {\KQuantilePartitionDisclosure} $\disclosurefn$ parameterized by $\quants$ such that $\voa(\disclosurefn \mid \cvxspace)\le \voa$. 
Define auxiliary function $\varphi(z) \triangleq (1+\sqrt{z})^2$.
From Part (ii) of \Cref{thm:rdd:optimal robust policy}, we know that $\voa(\disclosurefn \mid \cvxspace)\le \voa$ can be rewritten as
\begin{align*}
    \voa(\disclosurefn \mid \cvxspace)
    = 
    \max\nolimits_{ \quantileidx\in[\signalnum]}\left(1+\frac{\quant_\quantileidx-\quant_{\quantileidx-1}}{(1+\sqrt{1-\quant_\quantileidx})^2}\right) 
    & \le \voa \nonumber
\end{align*}
which is equivalent to 
\begin{align}
    1+\frac{\quant_\quantileidx-\quant_{\quantileidx-1}}{\varphi(1-\quant_\quantileidx)}
    & \le \voa \quad\text{for all } \quantileidx\in[\signalnum]~,
    \label{ineq:equivalence}
\end{align}

\xhdr{The \emph{$\Rightarrow$} direction}
We first prove the \emph{$\Rightarrow$} direction. 
From Eqn.~\eqref{ineq:equivalence}, we know
\begin{align*}
    1-\quant_{\quantileidx-1}\le 1-\quant_\quantileidx+(\voa - 1)\varphi(1-\quant_\quantileidx)=\basefn_\voa(1-\quant_\quantileidx), \quad \quantileidx\in[\signalnum]~.
\end{align*}
Note that function $\basefn_\voa (z) = z + (\voa - 1) \cdot (1+\sqrt{z})^2 = z + (\voa - 1)\cdot \varphi(z)$ is increasing,
iterating yields
\begin{align*}
    1 - \quant_0
    \le
    \basefn_\voa(1 - \quant_1)
    \le
    \basefn_\voa(\basefn_\voa(1-\quant_2))
    \le
    \basefn_\voa(\basefn_\voa(\basefn_\voa(1-\quant_3)))
    \le \cdots \le \recfn_\signalnum(\voa)~.
\end{align*}
and thus $\recfn_\signalnum(\voa) \geq 1$ as desired.

\xhdr{The \emph{$\Leftarrow$} direction}
Suppose $\recfn_\signalnum(\voa)\ge 1$. 
Define a sequence $(s_\quantileidx)_{\quantileidx\in[0:\signalnum]}$ by the tight recursion
\begin{equation}\label{eq:s-recursion}
    s_\signalnum\triangleq 0~,
    \quad 
    s_{\quantileidx-1}\triangleq \basefn_\voa(s_\quantileidx)=s_\quantileidx+(\voa - 1)\varphi(s_\quantileidx)~,
    \quad \quantileidx=\signalnum,\signalnum-1,\dots,1~.
\end{equation}
Then by definition, we have $s_0=\recfn_\signalnum(\voa)\ge 1$. 
Using the sequence $(s_\quantileidx)_{\quantileidx\in[0:\signalnum]}$, we next construct a valid quantile threshold profile $\quants = (\quant_0,\quant_1,\dots,\quant_\signalnum)$.
Let
\begin{equation}
    \label{eq:scaled-tr}
    \quant_\quantileidx
    \triangleq 1 -  
     \frac{s_\quantileidx}{s_0},\qquad \quantileidx\in[0:\signalnum]~.
\end{equation}
Then $\quant_0=0$ and $\quant_\signalnum=1-s_\signalnum/s_0=1$. 
Also we have $s_{\quantileidx-1}\ge s_\quantileidx$ because $\voa > 1$ implies $\basefn_\voa(z)>z$, 
hence we have $\quant_{\quantileidx-1}\le \quant_\quantileidx$.
Therefore, $\quants$ is a valid quantile threshold profile. 

Consider {\KQuantilePartitionDisclosure} parameterized by quantile threshold profile $\quants$ constructed above.
Now it remains to verify $\voa(\disclosurefn \mid \cvxspace)\le \voa$. By construction, for each $\quantileidx\in[\signalnum]$,
\begin{align*}
    s_{\quantileidx-1}-s_\quantileidx
     =(\voa - 1)\varphi(s_\quantileidx) 
\end{align*}
which is equivalent to
\begin{align*}
    \quant_{\quantileidx}-\quant_{\quantileidx-1}
    &=
    \frac{s_{\quantileidx-1}-s_\quantileidx}{s_0}
     =
     \frac{(\voa - 1)\varphi(s_\quantileidx)}{s_0}
     \le (\voa - 1)\varphi\left(\frac{s_\quantileidx}{s_0}\right) 
     = (\voa - 1)\varphi(1-\quant_\quantileidx)~,
\end{align*}
where we have used the observation that the function $\varphi(z)=(1+\sqrt{z})^2$ satisfies 
for every $\alpha\in[0,1]$ and every $z\in[0,1]$, we have $\varphi(\alpha\cdot z)\ \ge\ \alpha\cdot\varphi(z)$.

Thus all the inequalities in Eqn.~\eqref{ineq:equivalence} holds, and thus we have $\voa(\disclosurefn \mid \cvxspace)\le \voa$. This completes the proof of \Cref{lem:feasibility-equivalence} as desired.
\end{proof}

\end{document}